\documentclass[aps,prd,twocolumn,showpacs, reprint, nofootinbib,floatfix]{revtex4-2}

\usepackage[T1]{fontenc} 
\usepackage{microtype}
\usepackage{fontawesome5} 
\usepackage{soul}

\usepackage[dvipsnames]{xcolor}
\usepackage{graphicx}
\usepackage{caption}
\usepackage{subcaption}

\usepackage{amssymb}
\usepackage{amsthm, mathtools} 
\usepackage{amsfonts, bm, bbm}
\usepackage{tensor}
\DeclareMathOperator{\tr}{tr}

\usepackage{mathrsfs}

\usepackage[bottom]{footmisc}

\usepackage{enumitem}
\setlist[enumerate]{label=(\emph{\roman*})}

\numberwithin{equation}{section}

\usepackage{hyperref}
\hypersetup{
    colorlinks,
    citecolor=cyan,
    linkcolor=NavyBlue,
    filecolor=magenta,      
    urlcolor=NavyBlue,
    breaklinks=true,
    pdftitle={General formalism, classification and demystification of warp-drive spacetimes},
	pdfsubject={},
	pdfauthor={Hamed Barzegar}
}
\usepackage[capitalise]{cleveref}
\makeatletter
\AddToHook{cmd/appendix/before}{\def\cref@section@alias{appendix}\def\cref@subsection@alias{appendix}} 
\makeatother

\usepackage{tikz, tikz-network}
\usetikzlibrary{mindmap,trees,backgrounds, positioning}

\usepackage{thmtools}
\newtheorem{theorem}{Theorem}[section]
\newtheorem{lem}[theorem]{Lemma}

\newtheorem{Def}[theorem]{Definition}
\newtheorem{cor}[theorem]{Corollary}
\newtheorem{principle}[theorem]{Principle}

\newtheoremstyle{mystyle}       
  {}                            
  {}                            
  {\normalfont}                 
  {}                            
  {\bfseries}                   
  {.}                           
  { }                           
  {}                            

\theoremstyle{mystyle}   
\newtheorem{mytheorem}{Theorem}
\newtheorem{error}[mytheorem]{Error}
\newtheorem{rem}[theorem]{Remark}

\definecolor{mygreen}{rgb}{0,0.5,0}
\definecolor{mybloodyred}{rgb}{0.7,0,0}

\newcommand{\albe}{{\alpha\beta}}
\newcommand{\mnu}{{\mu\nu}}
\newcommand{\R}{\mathbb R}
\newcommand{\CA}{\mathcal A}
\newcommand{\CC}{\mathcal C}
\newcommand{\CD}{\mathcal D}
\newcommand{\CL}{\mathcal L}
\newcommand{\CK}{\mathcal K}
\newcommand{\CR}{\mathcal R}

\newcommand{\CO}{\mathcal O}
\newcommand{\p}{\partial}
\newcommand{\const}{\mbox{\rm const}} 
\newcommand{\mfd}{\mathcal{M}}

\newcommand{\metg}{\bm{g}}
\newcommand{\meth}{\bm{h}}
\newcommand{\metb}{\bm{b}}

\newcommand{\hnorm}[1]{\left|#1\right|_{\meth}}
\newcommand{\dnorm}[1]{\left|#1\right|_{\bm{\delta}}}
\newcommand{\dual}[1]{{\bm #1}^\flat}
\newcommand{\I}{{\rm I}}
\newcommand{\II}{{\rm II}}
\newcommand{\III}{{\rm III}}
\newcommand{\Ik}{{\rm I}[\bm \CK]}
\newcommand{\IIk}{{\rm II}[\bm \CK]}
\newcommand{\IIIk}{{\rm III}[\bm \CK]}
\newcommand{\diff}{{\mathrm d}}
\newcommand{\diffb}{{\mathbf d}}

\newcommand{\ddt}[1]{\frac{{\mathrm d} #1}{{\mathrm d} t}}

\newcommand{\LieN}{\CL_{\bm N}}
\newcommand{\LieV}{\CL_{\bm V}}

\newcommand{\Etot}{E_{\text{tot}}}
\newcommand{\cc}{\mathrm{\Lambda}}
\newcommand{\Eadm}{E_{\text{ADM}}}
\newcommand{\Padm}{P_{\text{ADM}}}
\newcommand{\Rwarp}{\textbf{R}-Warp}
\newcommand{\Twarp}{\textbf{T}-Warp}
\newcommand{\Swarp}{\textbf{S}-Warp}
\newcommand{\natario}{Nat\'ario}
\newcommand{\alcub}{Alcubierre}

\newcommand{\nn}{\nonumber}

\usepackage{orcidlink}

\begin{document}

\title{General formalism, classification, and demystification \\ of the current warp-drive spacetimes}

\author{Hamed Barzegar \orcidlink{0000-0002-6472-3115}} 

\email[Corresponding author: ]{hamed.barzegar@ens-lyon.fr}

\homepage[\\The author dedicates this work to the memory of those brave Iranian individuals who resisted the Ahriman and his forces with their bare hands in the fight for freedom.]{}

\author{Thomas Buchert \orcidlink{0000-0002-0828-3901}}
\email{thomas.buchert@ens-lyon.fr}

\author{Quentin Vigneron \orcidlink{0000-0002-7494-0455}}
\email{quentin.vigneron@ens-lyon.fr}

\affiliation{
ENS de Lyon, CRAL UMR5574, Universite Claude Bernard Lyon 1, CNRS, Lyon 69007, France
}

\date{February 18, 2026}

\begin{abstract}
We critically examine proposals for the so-called warp-drive spacetimes and classify these models based on their various restrictions within the framework of General Relativity. We then provide a summary of general formalism for each class, and in the process, we highlight some misconceptions, misunderstandings, and errors in the literature that have been used to support claims about the physicality and feasibility of these models. On the way, we prove several new no-go theorems. Our analysis shows that when the principles of General Relativity are applied correctly, most claims regarding physical warp drives must be reassessed,
and it becomes highly challenging to justify or support the viability of such models, not merely due to the violation of energy conditions.
\end{abstract}

\maketitle

\section{Introduction}
\label{sec: intro}

The human drive to explore and travel faster meets a fundamental barrier in Special Relativity (SR): the \emph{light-cone barrier}. However, this has not stopped physicists to explore possible workarounds for fast and eventually faster-than-light (FTL) travels. This goes back to efforts within SR, started, to our knowledge, first by \citet{1957_McMillan_ClockParadoxSpace},\footnote{McMillan was an American physicist who shared the 1951 Nobel Prize in Chemistry with Glenn Seaborg for the discovery of the neptunium, whereas his nephew, John Clauser, was awarded the 2022 Nobel Prize in Physics, jointly with Alain Aspect and Anton Zeilinger.}
using the relativistic effects of the so-called \emph{hyperbolic motion} resulted from a uniform acceleration. In his example, \citet{1957_McMillan_ClockParadoxSpace} shows that a uniformly accelerated observer (spaceship) can travel ``1.2 billion light-years in only 21 years'' as measured by an observer on, say, Earth (see also ruminations by \citet{Mumford_2021_AMS}).
This, later, motivated \citet{1960_Rindler_hyperbolic} to generalize the hyperbolic motion to the curved spacetime, after whom the \emph{Rindler coordinates} are coined. However, the apparent superluminal travel is a consequence of time dilation of the uniformly accelerated observers, and not the breaking of the light-cone barrier, and both conclude that such travels are impossible because insurmountable energy costs and cosmological horizon render the prospect of such travel ``slight'' and ``cosmologically quite insignificant'' \cite{1960_Rindler_hyperbolic,1957_McMillan_ClockParadoxSpace}.

The idea of FTL travel entered General Relativity (GR) through the concept of \emph{wormholes}, based on the works by Flamm \cite{1916_Flamm, *2015_Flamm_Republication}, who first described a ``tunnel'' geometry joining two asymptotically flat surfaces of spacetime, and \citet{1935_Einstein_ParticleProblemGR}, who proposed a ``bridge'' as a model for elementary particles. This early foundation led to the work of \citet{1955_Wheeler_Geons}, who explored multiply-connected spacetimes through hypothesized \emph{geons}, and the paper by \citet{1957_Misner_PhysicsGeometry} that first coined the term ``wormhole'' to describe mass and electric charge as properties of empty curved space (geometry).
These conceptual developments eventually led to the first specific models for traversability in 1973, such as Homer Ellis's ``drainhole'' \cite{1973_Ellis_Drainhole} and Bronnikov's tunnel-like solutions \cite{1973_Bronnikov_scalar}. A modern ``renaissance'' in the field was triggered by \citet{1988_Thorne_Morris_Wormholes}, who defined the formal requirements for traversable wormholes to be used in interstellar travel, specifically identifying the need for exotic matter that violates the null energy condition (NEC) to hold the throat open (see \cite{1996_Visser_LorentzianWormhole, 2017_book_Lobo, Lobo_2007_exotic} for a more detailed account of the historical development).

Following the proposal of wormholes, the next step in conception of FTL travels in GR was the invention of the \emph{warp drive} in 1994 by \citet{Alcubierre_1994_warp},\footnote{Alcubierre never pursued his own idea further (except from the review work \cite{Alcubierre_2017_warp_basics}), which may indicate that he was in agreement with the philosophical perspective outlined in \Cref{sec:conclusion}, a view we also share.} which allows for superluminal motion, supposedly, by contracting spacetime in front of a bubble and expanding it behind (see \Cref{sec: Alcubierre} below).
A motivation for Alcubierre's idea can be drawn from cosmology, specifically the inflationary phase of the early universe that causally connects two observers who should not have been in causal contact in a standard Big-Bang scenario without inflation \cite{Alcubierre_1994_warp, 2024_Barzegar_Buchert_Letter}. 
The main characteristic idea of Alcubierre is that, in contrast to the cosmological case, only a localized part of the universe would causing such effect.
Because this original model required physically impossible amounts of exotic matter, \citet{Broeck_1999_warp} proposed a modification that significantly reduced the (supposed) necessary negative energy density (see \Cref{sec: van den Broeck} below). \citet{Natario_2002_warp} advanced the theory by first generalizing the one-component velocity to three-component one, and introducing a zero-expansion warp drive, suggesting that the bubble could essentially ``slide'' through space by changing distances along the path of motion rather than relying on volume expansion (see \Cref{sec: Natario} below). 

Several theoretical analyses highlight fundamental pathologies that suggest warp drives are physically or technologically improbable: from the violation of energy conditions indicated first by \citet{Alcubierre_1994_warp} himself, to application of quantum inequalities by \citet{Pfenning_1997_unphysical} showing that a superluminal warp bubble would require an absurdly large amount of negative energy (which was significantly reduced in the subsequent works with further considerations)
while restricting bubble walls to nearly the Planck scale, and the ``horizon'' problem identified by \citet{1995_Krasnikov_Hyperfast} and \citet{Everett_1997_subway}, demonstrating that a crew cannot create or control a superluminal bubble on demand, and finally the semiclassical analysis done by \citet{Liberati_2009_semiclassical} revealing a fatal instability in which Hawking radiation accumulates at the bubble's white horizon, leading to an exponentially increasing energy density that would likely destroy the structure (and more problems which will be addressed in this work; see also \cite{2017_book_Lobo, Lobo_2007_exotic} for more details).

Despite all these pathologies, recent works starting with \citet{2021_Lentz_breakingbarrier} followed by \citet{2021_Bobrick_physicalwarp} and \citet{2021_FellHeisenberg_positive} (and others) claim that ``physically reasonable'' warp-drive configurations can be constructed with positive energy density, and even more with respecting the energy conditions. \citet{2021_Lentz_hyperfast} proposes using \emph{solitons} in an ``Einstein--Maxwell-plasma'' theory to achieve superluminal travel using only conventional matter sources like perfect fluid plasma. Fell and Heisenberg argue that ``hidden geometric structures'' in the spacetime curvature can provide a positive energy density for comoving Eulerian observers. Additionally, \citet{2021_Bobrick_physicalwarp} claim that specific classes of spherically symmetric or axisymmetric warp drives can satisfy the weak energy condition (WEC) or minimize its violation. 

All these (and other) claims have been refuted correctly (fully or partially) by \citet{2021_Visser_genericwarp}, which is one of the best work in this direction, and to which we refer the reader frequently whenever we will repeat a similar argument (we also refer the reader to this work for further references that are not cited in the present work).\footnote{Unfortunately, despite what has been said, this work  suffers from some (minor) errors too, e.g., Errors~\ref{err:ADMtotMass} and \ref{er:VisserShoshanyTotalMass}.} However, some of their correct critiques mentioned by \citet{2021_Visser_genericwarp} have been ignored by the supporters of ``physical'' warp drive, in particular in the successive works of \citet{2021_Bobrick_physicalwarp}, i.e., by \citet{2024_Fuchs_constant}, \citet{2024_Helmerich_Bobrick_WarpFactory}, and by \citet{Abellan_2023_anisotropic,Abellan_2023_spherical}, \cite{2024_Abellan_SphericalWarpbasedBubble}, and \citet{2025_Bolivar_piecewise}, and many more.

Therefore, the present work has its objective to systematically analyze those misconceptions, misunderstandings, and mistakes which have not been observed by \citet{2021_Visser_genericwarp} or any other works in the literature, and by doing so, we will also review the works mentioned above that are published after publication of \cite{2021_Visser_genericwarp}, hence our work can be regarded as a complementary work to theirs. This follows in three steps that has been reflected in the title: $(i)$ general formalism, \Cref{sec: framework}, where we lay down the general formalism needed to study such spacetimes and list the restrictions that are imposed on such spacetimes, $(ii)$ classification, \Cref{sec:classification}, where we give a classification based on the metric functions and some restrictions, and finally, $(iii)$ demystification, \Cref{sec:demyst}, where we critically review some severe errors that appeared in the literature. Hence, this section has both negative and positive approaches in that it not only discloses errors (cf.~\Cref{sec:demyst}), it also reveals some geometric structures behind such spacetimes. This is done by clarifications in terms of several Remarks and new theorems, e.g., Theorems \ref{thm:AlcOmegaZero}, \ref{thm:globhyp}, \ref{thm: adm mass of natario}, \ref{thm: positive mass theorem}, \ref{thm: natario WEC}, \ref{thm: zero expansion DEC}, \ref{thm:shearfree}, \ref{thm:harmonicwarp}, but also through Errors~\ref{er:alcgh}--\ref{er:fuchsgeodesic}. The demystification shows that while most of the literature (in particular the recent works which claim to propose physical warp-drive spacetimes) considers the need for negative energy (i.e., violation of the WEC) and its needed amount as the main problem of the warp-drive spacetimes, there are even more fundamental issues that are not necessarily related to the violation of energy conditions. These include in particular the notions of global hyperbolicity, the asymptotic flatness, the notion of mass and energy especially the ADM energy, the Synge-G method  and their consequences.

Thus, in a sense, \Cref{sec:demyst} on demystification of the current warp-drive spacetimes is the heart of the present paper. Here, we take an \emph{agonistic} approach (see \cite[Chapter~IV, \S 6]{1971_Synge_GR} and \Cref{sec: Synge} below), i.e., we ``wrestle'' with mathematical problems and consistency---most of the time even prior to solving the field equations.
In fact, we follow a basic scientific principle (see, e.g, \citet{1946_Planck_Scheinproblem}), which is roughly reminiscent of the logical positivism (see, e.g., \citet{1934_Carnap_Syntax}), that we frame as follows.
\begin{principle}\label{principle}
For any proposal presented within a theory, its formal (syntactic) structure must be precisely defined, and it must demonstrate consistency within that theory to prevent ungrounded speculations that could otherwise lead to the emergence of pseudo-problems \textup{(Scheinprobleme)}.
\end{principle}
The theory mentioned in \Cref{principle} in this work is taken to be GR.
In the literature, there are many speculations around warp-drive spacetimes, although many of them do not satisfy \Cref{principle}. Before revealing those inconsistencies, it may not make sense to go beyond and ponder properties of those spacetimes.
Hence, as a consequence of \Cref{principle}, we do not consider: \emph{(i)} quantum theoretical aspects of such spacetimes, such as quantum inequalities, \emph{(ii)} the horizon problem (except from the discussion around the global hyperbolicity in \Cref{sec: global hyperbolicity}), \emph{(iii)} proposals in theories beyond GR (with only exception being mentioned in \Cref{er:cont_eq_beyon_GR} below).\footnote{For investigations of warp-drive proposals beyond general relativity, we refer the reader to, e.g., \cite{Carneiro_2022_conservation, DeBenedictis_2018_Einstein_Cartan, Varieschi_2012_conformal}.} 
Moreover, in view of the public fascination with warp drive in popular culture and, occasionally, within the scientific literature and some space agencies, the present study restricts itself to peer-reviewed publications up to the date of submission, with emphasis on those contributions that assert the proposal of a physical warp-drive spacetime.
In fact, one can even claim that many of these works in this field violate another principle beyond \Cref{principle}, which we may call \emph{Thorne's principle for a good science-fiction} that was formulated as guidelines in \cite[Section~1]{2014_Thorne_interstellar} on the science behind the movie \emph{Interstellar} \cite{2024_Interstellar}. We do not list works that violate this principle fully or partially, and leave it to the reader.

This paper is organized as follows. Before proceeding with the \Cref{sec: framework} where we introduce the general formalism, we introduce the notations which we will follow throughout this work. Then, in \Cref{sec:classification} we successively apply the restrictions listed there to classify the current warp-drive spacetimes. In \Cref{sec:demyst}, we critically analyze misconceptions and mistakes in terms of Errors and prove some theorems. \Cref{sec:conclusion} finally concludes the paper with some discussions and conclusions.

\subsection{Notations}\label{sec: notations}
We denote by $\left( \mfd, \metg \right)$ a four-dimensional Lorentzian manifold where $\metg$ is a metric of signature $\left(-, +, +, +\right)$. 
$\mathfrak{X}\left( \mathcal{N} \right)$
and ${\bf T}^p_q \left( \mathcal{N} \right)$ denote the vector space of all smooth vector fields
and the space of all smooth $(p,q)$-tensor fields, respectively, on any given manifold $\mathcal{N}$.
Bold letters are used for coordinate-free notation of a given $(p,q)$-tensor ${\bf A} \in {\bf T}^p_q \left( \mathcal{N} \right)$ whose components are given in local coordinate bases $\{\bm \diff x^\alpha, \bm \p_\alpha \}$.  
Greek indices denote the spacetime coordinates, whereas Latin indices stand for spatial coordinates.
Accordingly, $\bm \nabla$ and $\bm D$ denote the associated covariant derivatives on $\mfd$ and $\Sigma_t$, respectively. 
The symbol $\boldsymbol{\partial}$ stands for the usual Euclidean three-dimensional nabla operator.
We occasionally use the notation $\mathbf{g} (\bm X, \bm Y)$ for the inner product of vector fields $\bm X$ and $\bm Y$ with respect to a given metric $\mathbf{g}$.
Therefore, $|\cdot|_{\mathbf{g}}$ denotes the norm of a given tensor with respect to the metric $\mathbf{g}$, i.e., for a tensor $\bm A$ on $\mfd$ we have $|\bm{A}|_{\mathbf{g}}^2 := \mathrm{g}^{\mu_1 \nu_1} \cdots \mathrm{g}^{\mu_n \nu_n} A_{\mu_1 \cdots \mu_n} A_{\nu_1 \cdots \nu_n}$.  
$\Delta_{\mathbf{g}}$ stands for the Laplace--Beltrami operator with respect to any metric $\mathbf{g}$
and $\CL_{\bm X}$ denotes the Lie derivative along a given vector field $\bm X$.
We use the musical symbol $\flat$ (for the musical isomorphism) to denote the metric-dual one-form of a given vector with respect to a given metric $\mathbf{g}$, i.e., for a generic vector field $\bm w \in \mathfrak{X}(\mathcal{N})$ we have $\dual{w} = {\mathbf{g}} \left(\bm w, \cdot\right)$ on a manifold $(\mathcal{N}, \mathbf{g})$.
For any given  2-tensor fields $\bm B$ on $(\Sigma_t, \meth)$ (the hypersurfaces), we denote by $\I \left[ \bm B \right]$, $\II \left[\bm B \right]$, $\III \left[ \bm B \right]$ the principal scalar invariants of $\bm B$ which can be defined as the principal scalar invariants of the matrix $\tensor{B}{^i_j} := \tensor{\left( \bm B \right)}{^i_j}$ in a local chart as follows:
\begin{align*}
    \I [{\bm B}]
    &:=
    \tr_{\meth} {\bm B} = \tensor{B}{^i_i}
\,,
\\
    \II [{\bm B}]
    &:=
    \frac{1}{2}
    \left(
        \tensor{B}{^i_i} \tensor{B}{^j_j}
        -
        \tensor{B}{^i_j} \tensor{B}{^j_i}
    \right)
    =
    \tensor{\II[\bm B]}{^i_i}
,
\\
    \III [{\bm B}]
    &:=
    \det[{\bm B}]
\\
    &=
    \frac{1}{6}
    \left(
        \tensor{B}{^i_i} \tensor{B}{^j_j}
        \tensor{B}{^k_k}
        +
        2
        \tensor{B}{^i_j} \tensor{B}{^j_k}
        \tensor{B}{^k_i}
        -
        3
        \tensor{B}{^i_i} \tensor{B}{^j_k}
        \tensor{B}{^k_j}
    \right)
\\
    &=
    \frac{1}{3}
    \left(
        \tensor{B}{^i_j} \tensor{B}{^j_k} \tensor{B}{^k_i}
        -
        \I^3 [{\bm B}]
        +
        3
        \I [{\bm B}] \, \II [{\bm B}]
    \right)
\,,
\end{align*}
where $\II[\bm B]_{ij} := \left( \I[{\bm B}] B_{ij} - \tensor{B}{_i^k} B_{kj} \right)/2$.
The round and square brackets imply symmetrization and anti-symmetrization over the indices enclosed, respectively, and the angle brackets denote the trace-free part of a given tensor.  We use units in which the speed of light is set to $c = 1$.

\section{General formalism and restrictions}
\label{sec: framework}

In this section we outline the general $3+1$ formulation, thoroughly discussed in \cite{2020_BMR_III, 2012_Gourgoulhon_formalism, 2008_Alcubierre_3+1, 1978_Smarr_kinematical, Baumgarte_Shapiro_2010_numerical} and \cite{2024_Barzegar_Buchert_Letter}; we essentially summarize the needed formalism and formulas from \cite{2020_BMR_III} (see also \cite{2024_Barzegar_Buchert_Letter}), and to maintain brevity, we refer the 
reader to \cite{ehlers,2012_Gourgoulhon_formalism, 2008_Alcubierre_3+1, 1978_Smarr_kinematical, 1984_Wald_GR} for further details. We refer to these references sparingly to avoid repetition throughout this section.

\subsection{Foliations, metric and the geometry of motion}

We consider a  four-dimensional Lorentzian manifold $\left( \mfd, \metg \right)$, where $\metg = g_{\mu \nu} \, \diffb x^\mu \otimes \diffb x^\nu$ in local coordinates $\left( x^\mu \right) = \left( t, x^i\right) \equiv \left( t, \bm x \right)$ with its Levi-Civita connection $\bm \nabla$, and $\mfd \cong \Sigma_t \times I$, where $\Sigma_t$, with a global time function $t \in I \subset \R$, is a Riemannian $3$-manifold. In other words, $\Sigma_t$ denotes a family of spacelike hypersurfaces that foliate the spacetime $\mfd$.

The matter model considered in this work is a single general fluid with the  $4$-velocity $\bm{u}$ that describes the unit, future-oriented, timelike tangent $4$-vector field of a timelike congruence, i.e., the fluid flow. 
The unit, timelike, future-oriented normal vector field of $\Sigma_t$ is denoted by $\bm{n}$ that satisfies ${\metg} \left( \bm{n},  \bm n\right) = -1$. Observers with the $4$-velocity $\bm n$ are called \textit{Eulerian}.

Using $\bm n$, we construct the bilinear form $\meth = h_{\alpha \beta} \, \mathbf{d} x^\alpha \otimes \mathbf{d} x^\beta$, and the projector $\tensor{h}{^\alpha_\beta}$, 
which projects spacetime tensors onto the hypersurfaces of the foliation with the following form and properties:
\begin{equation}\label{eq:proj_h}
\begin{aligned}
	h_{\mu \nu} &:= g_\mnu + n_\mu n_\nu
\,;
\\
	h_{\alpha \mu} n^\alpha &= 0 \, , \quad 
	\tensor{h}{^\mu_\alpha} \tensor{h}{^\alpha_\nu} = \tensor{h}{^\mu_\nu} \, , \quad 
	h^{\alpha\beta} h_{\alpha\beta} = 3 \, .
\end{aligned}
\end{equation}
$\meth$ defines the induced Riemannian metric (the first fundamental form) $h_{ij}$ on the hypersurface $\Sigma_t$ with the inverse denoted by $h^{ij}$, and its associated covariant derivative $D_\mu$ can be expressed in terms of the covariant derivative associated to $\metg$ as follows:
\begin{align*}
    D_\mu \tensor{F}{^{\alpha_1 \ldots \alpha_p} _{\beta_1 \ldots \beta_q}}
    &=
    \tensor{h}{^{\alpha_1} _{\rho_1}} \cdots \tensor{h}{^{\alpha_p} _{\rho_p}} \tensor{h}{_{\beta_1} ^{\sigma_1}} \cdots
\\
&\quad
    \cdots \tensor{h}{_{\beta_q} ^{\sigma_q}}
    \tensor{h}{_\nu^\mu} 
    \nabla_\nu \tensor{F}{^{\rho_1 \ldots \rho_p} _{\sigma_1 \ldots \sigma_q}}
\,,
\end{align*}
for a given tensor field $ \bm F \in {\bf T}^p_q \left( \mfd \right)$.

Another important vector field is the so-called \textit{time vector} $\bm \p_t$, whose integral curve keeps the same spatial coordinates that it attains on the initial hypersurface satisfying, ${\metg} (\bm \p_t, \bm \nabla t) = 1$. The time vector represents the flow of time throughout the spacetime and can be decomposed into vectors normal and tangent to $\Sigma_t$ as
\begin{equation}\label{eq: time vector}
    \bm{\partial}_t 
    = 
    N \bm n + \bm N
\,,
\end{equation}
where $N$ and $\bm N$ are the \textit{lapse function} and the \textit{shift vector}, respectively defined by $N := - {\metg}(\bm n, \bm \p_t)$ and $N_i := h_{ij} (\bm \p_t)^j$ with ${\metg}(\bm n, \bm N) = 0$. The time vector can, \emph{a priori}, be null, space- or time-like, and in general it is not normalized. A natural class of observers are the ones associated to the time vector $\bm \p_t$ whenever it is timelike, which we call the \emph{coordinate observers} \cite{2024_Barzegar_Buchert_Letter},\footnote{Coordinate observers are sometimes referred to as \emph{static observers} in the static (or stationary) context.} by normalizing the time vector $\bm \p_t$ having the $4$-velocity\footnote{An ambiguity appeared in \cite[Equation~(11)]{2024_Barzegar_Buchert_Letter}: the timelike character of such observers is assumed there without being explicit, in which case the absolute value enclosing the prefactor of $\bm \p_t$ is not necessary.}
\begin{equation}
    \hat{\bm \p}_t 
    := 
    \left( N^2 - N^k N_k \right)^{-1/2} \bm{\partial}_t
\,.
\end{equation}

Now, we can express the unit normal vector $\bm n$ and its metric dual form $\dual{n}$ with respect to coordinate basis $\{ \bm \p_\mu \}$ using \eqref{eq: time vector} as
\begin{equation}\label{eq:n_vec}
	\bm{n} = \frac{1}{N} \left( 1, - \bm{N} \right)
\,,
\quad
	\dual{n} = - N \, ( 1, \bm 0 ) 
\,,
\end{equation}
whose $4$-acceleration is given by (cf., e.g., \cite{2012_Gourgoulhon_formalism, 1978_Smarr_kinematical, 2020_BMR_III})
\begin{equation}\label{eq:naccel}
    \mathfrak{a}^\mu
    :=
    n^\alpha \nabla_\alpha n^\mu = 
    h^{\mu \nu} D_\nu \ln N
\,.
\end{equation}

Finally, using all relations above, one can decompose the four-dimensional line element into the Arnowitt--Deser--Misner (ADM) form (cf.~\cite{2008_ADM}),
\begin{equation}\label{eq:line_elem}
\begin{aligned}
	{\mathrm d}s^2 
	&= 
    g_{\alpha\beta} \, {\mathrm d}x^\alpha {\mathrm d}x^\beta
\\
	&= 
    - \left( N^2 - N^k N_k \right) {\mathrm d}t^2 
    + 
    2 N_i \, {\mathrm d}x^i \, {\mathrm d}t \,
	+ 
    h_{ij} \, {\mathrm d}x^i {\mathrm d}x^j \, ,
\end{aligned}
\end{equation}
where $N^i = h^{i j} N_j$. 

Moreover, the second fundamental form (or the extrinsic curvature) ${\bm \CK} = \mathcal{K}_{\mnu} \, \diffb x^\mu \otimes \diffb x^\nu$ is  defined by
$ {\bm \CK} \left( \bm W, \bm Z \right) := \metg \left( \bm n, \bm \nabla_{\bm W} \bm Z \right) = - \metg \left( \bm \nabla_{\bm W} \bm n, \bm Z \right)$, for all $\bm W, \bm Z \in \mathfrak{X}\left( \Sigma_t \right)$
which in components reads
\begin{equation}\label{eq: extrinsic curvature}
    \CK_{\mnu} 
    = 
    - \tensor{h}{^\alpha_\mu} \tensor{h}{^\beta_\nu} \nabla_\alpha n_\beta
    =
    - \nabla_\mu n_\nu 
    -
    n_\mu D_\nu \ln N
\,,
\end{equation}
whose trace is the mean curvature $\CK := \tr_{\meth} {\bm \CK} = h^{i j} \CK_{ij} = - \nabla_\mu n^\mu$.

The fluid can, in general, have vorticity. Hence, the spacetime \textit{cannot}, by the Frobenius theorem,  be foliated in such a way that the unit normal vector is the $4$-velocity of the fluid, which is the general assumption for what we call ``restricted warp drives'' (see \cite{2024_Barzegar_Buchert_Letter}). 

Now, we consider a general fluid with the $4$-velocity $\bm{u}$ that is, in general, tilted with respect to the unit normal vector field $\bm n$, which can be decomposed as follows (cf., e.g., \cite{2020_BMR_III,2012_Gourgoulhon_formalism,Wainwright_Ellis_1997, 2024_Barzegar_Buchert_Letter}):
\begin{equation}\label{eq:four_vel}
{\bm u} = \gamma ({\bm v}) \left( \bm n + \bm v \right) 
\, , 
\end{equation}
with $n^\mu v_\mu = 0$ and
\begin{equation}\label{eq:lorentz}
    \gamma ({\bm v}) 
    := 
    - n^\mu u_\mu  = \left( 1 - v^\mu v_\mu \right)^{-1/2} 
\,,
\end{equation}
where $\bm v = v^i \bm \p_i$ 
is the \textit{covariant spatial velocity} field. It is a measure of tilt between the unit normal and fluid frame that is equal to the \textit{covariant Lorentz factor} $\gamma ({\bm v})$ (cf.~\cite{2024_Barzegar_Buchert_Letter}).
In the \emph{flow-orthogonal} case, where $\bm u = \bm n$, we have $\bm v = \bm{0}$ and $\gamma (\bm 0) = 1$, hence no tilt.

Another important vector is the spatial fluid's coordinate velocity field  (referred to as \textit{coordinate velocity}) defined by
\begin{equation}
	\bm V := \frac{{\mathrm d} \bm x}{{\mathrm d}t} \, , \quad 
	\mathrm{with} \quad 
    n_\mu V^\mu = 0
\,,
	\label{eq:coord_vel}
\end{equation}
where $\diff \bm x$ is the  displacement of the spatial position of a fluid element in the coordinate system $(t, \bm x)$, hence it is (spatially) coordinate dependent, in contrast to the covariant  spatial velocity. The coordinate velocity $\bm V$ should not be confused with the covariant  velocity $\bm v$ introduced in \eqref{eq:four_vel}  which can be written  as (cf.~\cite{2024_Barzegar_Buchert_Letter})\footnote{The reader must take care about the notation as $v^i$ or $v$ is usually used in the literature for the coordinate velocity in this context.}
\begin{equation}\label{eq:relat_vel}
	\bm v = \frac{1}{N} \left( \bm N + \bm V \right) \, .
\end{equation}
Hence,
\begin{equation}\label{eq:four_vel_II}
	\bm u = \frac{\gamma ({\bm v})}{N} \left( N \bm n + \bm N + \bm V \right) \, , 
\end{equation}
where
\begin{equation}
	\frac{\gamma({\bm v})}{N} = \left( N^2 - \hnorm{\bm N + \bm V}^2 \right)^{-1/2}
\,.
\end{equation}
From \eqref{eq:four_vel_II} we obtain the component expressions for $\bm u$ and $\dual{u}$:
\begin{equation}\label{eq:comp_u} 
\begin{aligned}
    \bm u 
    &= 
    \frac{\gamma ({\bm v})}{N} \left( 1, \bm V \right) 
\,, 
\\
	\dual{u} 
    &= 
    \frac{\gamma({\bm v})}{N} \left( - N^2 + N^i \left( N_i + V_i \right), \, \dual{N} + \dual{V} \right) 
\,,
\end{aligned}
\end{equation}
where, as mentioned in \Cref{sec: notations}, we used the notation $ \dual{X} := {\meth} \left( \bm X, \cdot \right)$ for all vector fields $\bm X \in \mathfrak{X}\left( \Sigma_t \right)$. It is easy to show that
\begin{equation}\label{eq:Vtoui}
    V^i
    =
    \frac{u^i}{u^0}
\,.
\end{equation}
One can now specify different coordinate systems, e.g.,
\begin{enumerate}
    \item for a coordinate system comoving with the fluid, which corresponds to a specific shift, we have 
$\bm V = \bm 0$,
    \item for a vanishing tilt we have 
    \begin{equation}\label{eq:condfloworth}
        \bm V 
        = 
        - \bm N    
    \end{equation}
    which results from \eqref{eq:relat_vel}, independent of the choice of the shift vector.
\end{enumerate}
Choosing $\bm n = \bm u$ (hence $\bm v = \bm 0$ and $\gamma({\bm 0}) = 1$) 
corresponds to a foliation orthogonal to the fluid flow which is only possible if the fluid flow has no vorticity.
Even for irrotational fluids, introducing a tilt allows us 
to keep the freedom in the construction of the spatial hypersurfaces.
\begin{rem}
In the adapted coordinates introduced above, for any vector field $\bm X \in \mathfrak{X}(\Sigma_t)$, we have $X^0 = 0$ which follows from $n_\mu X^\mu = 0$ and $n_i=0$. However, from $n^\mu X_\mu = 0$ we have $X_0 = N^i X_i = N_i X^i$ (cf., e.g., \cite{2020_BMR_III}). The same applies to other spatial tensors that are orthogonal to $\bm n$.
\end{rem}

\subsection{Kinematics and dynamics of the fluid}

In the tilted situation, we introduce the bilinear form $\metb = b_\albe \, \mathbf{d}x^\alpha \otimes \mathbf{d}x^\beta$ with
\begin{equation}
\begin{aligned}
	b_\mnu &:= g_\mnu + u_\mu u_\nu 
\,;
\\
	b_{\alpha \mu} u^\alpha &= 0 
\, , \;\;\quad 
	\tensor{b}{^\mu_\alpha} \tensor{b}{^\alpha_\nu} = \tensor{b}{^\mu_\nu} 
\, , \;\;\quad 
	b^{\alpha \beta} b_{\alpha \beta} = 3 
\, , \label{eq:proj_f}
\end{aligned}
\end{equation}
where $\tensor{b}{^\mu_\nu}$ allows us to project tensors onto the local rest frames of the fluid (orthogonal to $\bm u$).
Note that the presence of a tilt between $\bm  u$ and $\bm n$ results in two different projectors $\metb$ and $\meth$. Using \eqref{eq:proj_f}, we decompose the $4$-covariant derivative of the $1$-form $\dual{u}$ (cf.~\cite{ehlers,2020_BMR_III}):\footnote{See \Cref{foot:derconvention} for a convention regarding the indices.}
\begin{equation}\label{eq:cov_u}
	\nabla_\nu u_\mu 
	=
	- a_\mu \, u_\nu + \theta_\mnu + \omega_\mnu 
\, ,
\end{equation}
with
\begin{subequations}\label{eq:kin_fluid}
\begin{align}
	a_\mu 
	&:= 
	u^\alpha \nabla_\alpha u_\mu 
\, ,
\\
	\theta_\mnu
	&:=
	\tensor{b}{^\alpha_\mu} \tensor{b}{^\beta_\nu} \nabla_{( \alpha} u_{\beta )}
	=:
	\sigma_\mnu 
	+
	\frac{1}{3} \theta b_\mnu 
\, ,
\\
	\theta 
	&:=
    \tr_{\metg} \bm \theta
    =
	\nabla_\alpha u^\alpha 
\, ,
\\
	\omega_\mnu 
	&:= 
	\tensor{b}{^\alpha_\mu} \tensor{b}{^\beta_\nu} \nabla_{[\beta } u_{\alpha ]} \, ,
\end{align}
\end{subequations}
where $a^\mu$ are the components of the 4-acceleration $\bm a$ of the fluid, $\theta_\mnu$ and $\theta$ the components of the fluid's expansion tensor and expansion rate of $\bm \theta$, $\sigma_{\mu \nu}$ the components of the fluid's shear tensor $\bm \sigma$, and $\omega_{\mu \nu}$ the components of the fluid's vorticity tensor $\bm \omega$. All these quantities are orthogonal to $\bm u$:  
\begin{equation}
\label{orthogonality}
    u^\mu a_\mu = 0 
\,,
\quad
    u^\mu \theta_{\mu \nu}=0 
\,,
\quad
    u^\mu \omega_{\mu \nu} = 0 
\,.
\end{equation}
Note that the rest frames of the fluid are not hyper\-surface-forming if $\bm \omega$ does not vanish due to Frobenius's theorem. Finally, one can define the squared \textit{rate of shear} and \textit{rate of vorticity}, 
\begin{equation}\label{eq: def of sig2 and omeg2}
    \sigma^2 
    := 
    \frac{1}{2} \sigma_{\mu \nu} \, \sigma^{\mu \nu}
\,,
\quad
    \omega^2 
    :=
    \frac{1}{2} \omega_{\mu \nu} \, \omega^{\mu \nu}
\,,
\end{equation}
which are both nonnegative. Moreover, from the anti-symmetrizing \eqref{eq:cov_u} it follows
\begin{equation}\label{eq:vorticity_components}
	\omega_{\mu \nu} 
	= 
    a_{[\mu} u_{\nu]} - \nabla_{[\mu} u_{\nu]}
	=
    a_{[\mu} u_{\nu]} - \p_{[\mu} u_{\nu]}
\,.
\end{equation}

\subsection{The matter model and energy-momentum conservation equations}
\label{sec: energy_momentum_tensor}

The matter model we consider in this work is a single general fluid with the energy-momentum tensor $\bm T = T_\mnu \, \diffb x^\mu \otimes \diffb x^\nu$ that can be written with respect to the fluid rest frames with the induced metric $\metb$ as
\begin{equation}\label{eq:se_fluid_u}
	T_\mnu  
	= 
	\epsilon \, u_\mu u_\nu + 2 \, q_{(\mu} u_{\nu)} + p_\mnu  \, , 
\end{equation}
where
\begin{subequations}
\begin{align}
	\epsilon 
	&:=
	u^\alpha u^\beta T_\albe 
\,,
\\
	q_\mu 
	&:= 
	- \tensor{b}{^\alpha_\mu} u^\beta T_\albe 
\,,
\\
	p_{\mnu} 
	&:= 
    \tensor{b}{^\alpha_\mu} \tensor{b}{^\beta_\nu} T_\albe
    =:
    \pi_\mnu + p \, b_\mnu 
\,,
\end{align}
\end{subequations}
with the properties
\begin{equation}\label{orthogonality2}
    u^\mu q_\mu = 0 \quad , \quad	u^\mu \pi_{\mu \nu} = 0 \quad , \quad b^{\mu \nu} \pi_{\mu \nu} = 0 \ , 
\end{equation} 
and where $\epsilon$ denotes the energy density of the fluid in its rest frames, $q^\mu$ the components of the momentum density $\bm q$ (energy flux relative to $\bm u$), $p_\mnu$ its stress density,
$p := b^{\mnu} p_\mnu / 3$ the isotropic pressure, and $\pi_\mnu$ the components of the  traceless anisotropic stress $\bm \pi$.

Alternatively, the energy-momentum tensor can be decomposed with respect to the normal frames with the induced metric $\meth$ as
\begin{equation} \label{eq:se_fluid_n}
	T_\mnu = E \, n_\mu n_\nu + 2 \, J_{(\mu} n_{\nu)} + S_\mnu \, ,
\end{equation}
with
\begin{subequations}
\begin{align}
    E 
    &:= n^\alpha n^\beta T_\albe 
\,, \label{eq:eulerianenergy}
\\
    J_\mu 
    &:= 
    - h^\alpha_{\phantom{\alpha} \mu} n^\beta T_\albe 
\, ,
\\
    S_\mnu 
    &:= 
    h^\alpha_{\phantom{\alpha} \mu} h^\beta_{\phantom{\beta} \nu} T_\albe
    =:
    \Pi_{\mnu} 
    +
    P h_{\mnu}
\, , 
\end{align}
\end{subequations}
where $E$ is the energy density of the fluid, $J_\mu$ its momentum density, $S_\mnu$ its stress density, $P := h^{\mnu} S_\mnu / 3$ the isotropic pressure, and $\Pi_\mnu$ the components of the spatial and traceless anisotropic stress, all as measured in the normal frames (by Eulerian observers).
Moreover, the following relations hold:
\begin{align}
    \tensor{T}{^i_j}
    &=
    \tensor{S}{^i_j}
    +
    n^i J_j
    =
    \tensor{S}{^i_j}
    -
    N^{-1} N^i J_j
\,,
\\
    T 
    &:=
    g^{\albe} T_\albe
    =
    -E
    +
    3 P
\,.
\end{align}

Using expression \eqref{eq:four_vel}, after some lengthy but straightforward calculations, we can relate Eulerian rest frame and fluid rest frame variables as follows:\footnote{For the application to warp-drive spacetimes, these general extrinsic relations will become relevant for an Eulerian observer looking at the moving spaceship in the tilted situation.}
\begin{subequations}\label{eq: Euler_frestframe_rels}
\begin{align}
    h_\mnu - n_\mu n_\nu
    &=
    b_\mnu  -  u_\mu u_\nu    
\,,
\\
\begin{split}\label{eq:rel_E}
    E 
    &= 
    \gamma^2 ( \epsilon + p ) - p
	+ 
    2 \, \gamma v^\alpha q_\alpha
\\
&\quad
    + 
    \pi_\albe  v^\alpha v^\beta 
\, ,  
\end{split}
\\ 
    E - 3 P 
    &= 
   \epsilon - 3 p
\, , \label{eq:rel_S}
\\
\begin{split}
    J_\mu 
    &=
    \gamma q_\mu
    +
    \gamma^2 \left( \epsilon + p + \gamma^{-1} q^\alpha v_\alpha \right) v_\mu 
\\
&\quad
    +
    v^\alpha \pi_{\mu \alpha}
\\
&\quad
    -
    \left(
        \gamma v^\alpha q_\alpha
        +
        \pi_\albe  v^\alpha v^\beta
    \right)
    n_\mu
\,,
\end{split}
\\
\begin{split}
    S_\mnu
    &=
    p h_\mnu + \pi_\mnu
    +
    \gamma^2 \left( \epsilon + p   \right) v_\mu v_\nu 
\\
&\quad
    +
    \pi_\albe v^\alpha v^\beta  n_\mu n_\nu 
    -
    2 v^\alpha \pi_{\alpha (\mu} n_{\nu )}
\\
&\quad
    +
    2 \gamma \left[ q_{(\mu} v_{\nu)}  - v^\alpha q_\alpha n_{(\mu} v_{\nu)} \right] 
\,. 
\end{split}
\end{align}
\end{subequations}
\begin{rem}
Using the fact that $n_i = 0$, we expressed many terms on the right-hand sides of \eqref{eq: Euler_frestframe_rels} in terms of $n_\mu$ so that the spatial components of the Eulerian quantities will simplify.
\end{rem}
\begin{rem}
    For a perfect fluid, the relations in \eqref{eq: Euler_frestframe_rels} reduce to
\begin{subequations}
\begin{align}
    h_\mnu - n_\mu n_\nu
    &=
    b_\mnu -  u_\mu u_\nu    
\,,
\\
    E
    &= 
    \gamma^2 ( \epsilon + p ) 
    -
    p
\,,
\\ 
	E - 3 P 
    &= 
   \epsilon - 3 p
\, ,
\\
    J_\mu 
    &=
    \gamma^2 \left( \epsilon + p \right) v_\mu 
\,,
\\
    S_\mnu
    &=
    p h_\mnu
    +
    \gamma^2 \left( \epsilon + p   \right) v_\mu v_\nu 
\,.
\end{align}
\end{subequations}
\end{rem}

Evidently, if $\bm u = \bm n$, then all quantities introduced above coincide. In this work, we will write the field equations in terms of the Eulerian quantities, since this is the \emph{creative} approach taken in the literature; otherwise, a \emph{realistic} approach should be adopted where the matter quantities should be used (see \cite[Chapter~IV, \S 6]{1971_Synge_GR} and \Cref{sec: Synge} below).
We finish this part by a lemma which relates $\theta_\mnu$ to $\CK_\mnu$.
\begin{lem}
The expansion tensor and the second fundamental form in the tilted case are related to each other by the following equation:
\begin{equation}
\begin{aligned}
    \theta_{\mu \nu}
    &=
    - \gamma \CK_{\mu \nu}
    +
    \gamma \nabla_{( \mu} v_{ \nu )}
    +
    \gamma^{-1} u_{( \mu} \p_{\nu )} \gamma
\\
&\quad
    +
    a_{( \mu} u_{\nu )}
    -
    \gamma \mathfrak{a}_{( \mu} n_{\nu )}
\,,
\end{aligned}
\end{equation}
with $\mathfrak{a}_\mu$ from \eqref{eq:naccel}.
\end{lem}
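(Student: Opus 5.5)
The plan is to avoid projecting anything by hand. I will use the decomposition \eqref{eq:cov_u} to write $\theta_{\mu\nu}$ as an unprojected symmetrized derivative, and then insert the tilt decomposition \eqref{eq:four_vel}.

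\emph{Step 1.} Symmetrize \eqref{eq:cov_u}. Since $\omega_{\mu\nu}$ is antisymmetric and $\theta_{\mu\nu}$ is symmetric, this gives
\begin{equation*}
\nabla_{(\mu}u_{\nu)} = -a_{(\mu}u_{\nu)} + \theta_{\mu\nu},
\qquad\text{hence}\qquad
\theta_{\mu\nu} = \nabla_{(\mu}u_{\nu)} + a_{(\mu}u_{\nu)} .
\end{equation*}
This already produces the term $a_{(\mu}u_{\nu)}$ of the claimed identity. It also removes any need to apply the projectors $\tensor{b}{^\alpha_\mu}$ explicitly.

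\emph{Step 2.} Differentiate $u_\mu = \gamma(n_\mu+v_\mu)$ with the Leibniz rule:
\begin{equation*}
\nabla_\nu u_\mu = \gamma\nabla_\nu n_\mu + \gamma \nabla_\nu v_\mu + (n_\mu+v_\mu)\,\p_\nu\gamma .
\end{equation*}
In the last term, replace $n_\mu+v_\mu$ by $\gamma^{-1}u_\mu$. After symmetrization this yields $\gamma^{-1}u_{(\mu}\p_{\nu)}\gamma$.

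\emph{Step 3.} Rewrite $\nabla_\nu n_\mu$ using \eqref{eq: extrinsic curvature} together with \eqref{eq:naccel}, i.e. $D_\mu\ln N=\mathfrak{a}_\mu$. This gives
\begin{equation*}
\nabla_\mu n_\nu = -\CK_{\mu\nu} - n_\mu\mathfrak{a}_\nu .
\end{equation*}
Symmetrizing and using the symmetry of $\CK_{\mu\nu}$ produces $-\gamma\CK_{\mu\nu} - \gamma\mathfrak{a}_{(\mu}n_{\nu)}$. Collecting the pieces from Steps 1 to 3 gives the stated formula.

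The main obstacle is bookkeeping of conventions rather than any genuine difficulty. The index order in \eqref{eq:cov_u} ($\nabla_\nu u_\mu$, acceleration paired with $u_\nu$) differs from the one in \eqref{eq: extrinsic curvature} ($\nabla_\mu n_\nu$, acceleration paired with $n_\mu$). These become harmless only after symmetrization, so I will symmetrize at every stage and check that each non-symmetric term appears solely inside $(\mu\nu)$. As a sanity check, I will confirm that contracting the right-hand side with $u^\mu$ gives zero, consistent with \eqref{orthogonality}, using $u^\mu a_\mu=0$, $u^\mu\nabla_\nu u_\mu=0$ and $n^\mu \CK_{\mu\nu}=0$.
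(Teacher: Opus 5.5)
Your argument is correct: symmetrizing \eqref{eq:cov_u} eliminates $\omega_{\mu\nu}$ and gives $\theta_{\mu\nu}=\nabla_{(\mu}u_{\nu)}+a_{(\mu}u_{\nu)}$. Inserting $u_\mu=\gamma(n_\mu+v_\mu)$ together with $\nabla_\mu n_\nu=-\CK_{\mu\nu}-n_\mu\mathfrak{a}_\nu$ from \eqref{eq: extrinsic curvature} then reproduces every term of the stated identity; the paper gives no proof of this lemma, and your direct substitution is the natural computation behind it.
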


\subsubsection{Conservation of energy and momentum in fluid rest frames}

From ${\nabla}_\beta T^\albe = 0$ and using the notation $\dot{\bm F} := \nabla_{\bm u} \bm F$ for any given tensor field $\bm F$, we deduce the energy-momentum conservation laws as measured in fluid rest frames  (cf.~\cite{2020_BMR_III}), by contracting along $u_\alpha$:
\begin{equation}\label{eq: energyconservation}
	\dot\epsilon + \theta \left( \epsilon + p \right) 
		+ a_\alpha q^\alpha + \nabla_\alpha q^\alpha + \pi^\albe \sigma_\albe = 0 \, , 
\end{equation}
and by contracting with $b_{\mu \alpha}$:
\begin{equation}\label{eq: momentumconservation}
\begin{aligned}
&	(\epsilon + p) a_\mu 
    +
    \tensor{b}{^\alpha_\mu} \nabla_\alpha p 
    + 
    b_{\mu \alpha} \dot{q}^\alpha 
\\
&\quad
    + 
    \frac{4}{3} \theta q_\mu + q^\alpha (\sigma_{\mu \alpha} + \omega_{\mu \alpha}) 
    + 
    b_{\mu \alpha} \nabla_\beta \pi^\albe 
    =
    0
\,.
\end{aligned}
\end{equation}

\subsubsection{Conservation of energy and momentum in Eulerian rest frames}

We now derive the energy-momentum conservation laws as measured in Eulerian rest frames. From \eqref{eq:se_fluid_n} and ${\nabla}_\beta T^\albe = 0$, we get (cf.~\cite[Equation~(6.2)]{2012_Gourgoulhon_formalism}):
\begin{equation}
\begin{aligned}
&    \left(
        n^\mu \nabla_\mu E 
        -
        E K
    \right)
    n_\alpha 
    +
    E D_\alpha \ln N
    -
    J^\mu \CK_{\mu \alpha}
\\
&\quad
    +
    n_\alpha \nabla_\mu J^\mu  
    +
    n^\mu \nabla_\mu J_\alpha
    -
    \CK J_\alpha
    +
    \nabla_\mu \tensor{S}{^\mu_\alpha}
    =
    0
\,.
\end{aligned}
\end{equation}
Similar to the previous subsection, contracting the above with $n^\alpha$ and $\tensor{h}{^\alpha_\nu}$, we arrive at the energy-momentum conservation laws as measured in Eulerian rest frames (cf.~\cite[Equations~(6.10) and (6.20)]{2012_Gourgoulhon_formalism}):
\begin{align}
\begin{split}\label{eq:energycons}
    \frac{1}{N} (\p_t - \CL_{\bm N}) E
&    +
    D_i J^i
    +
    2 J^i D_i \ln N
\\
&\quad
    -
    \CK_{ij} S^{ij}
    -
    E \CK
    =
    0
\,,
\end{split}
\\
\begin{split}\label{eq:momentumcons}
    \frac{1}{N} (\p_t - \CL_{\bm N}) J_\alpha
&    +
    D_\mu \tensor{S}{^\mu_\alpha}
    +
    \tensor{S}{^\mu_\alpha} D_\mu \ln N
\\
&\quad
    -
    \CK J_\alpha
    +
    E D_\alpha \ln N
    =
    0
\,,
\end{split}
\end{align}
where we used the fact that $\tensor{h}{^\mu_\alpha} n^\nu \nabla_\nu J_\mu - J^\mu \CK_{\mu \alpha} = N^{-1} (\p_t - \CL_{\bm N}) J_\alpha$ (cf.~\cite[Section 6.2.4]{2012_Gourgoulhon_formalism}).  
Note that 
\begin{equation}\label{eq:LienofPhi}
    \CL_{\bm n} \Phi 
    = 
    N^{-1} (\p_t - \CL_{\bm N}) \Phi
\,;
\quad
    \forall \, \Phi \in C^\infty (\mfd)
\,.
\end{equation} 

\subsubsection{Rest mass, its density and conservation}

Finally, we define another important local quantity that satisfies a conservation law which is motivated by the conservation of (baryonic) particle number (see, e.g., \cite{2012_Gourgoulhon_formalism,Baumgarte_Shapiro_2010_numerical, 2020_BMR_III}). To an (isolated) system that contains matter we can attribute a conserved \emph{rest mass}; indeed, since we presumed the existence of a fundamental $4$-velocity of a fluid, we can define its \textit{mass current density} $\bm m$ as 
\begin{equation}
    \bm m := \varrho \bm u
\,,
\end{equation}
which satisfies the continuity equation
\begin{equation}\label{eq:cons_restmass_density}
    \nabla_\alpha m^\alpha 
    =
	\nabla_\alpha (\varrho u^\alpha) 
    = 0
\,,
\end{equation}
which is equivalent to
\begin{equation}
    u^\alpha \nabla_\alpha \varrho + \theta \varrho 
    = 
    0 
\,.
\end{equation}
The \textit{rest mass density} $\varrho$ of the fluid in its rest frames is therefore defined as such, out of which one can determine the rest mass.
To this end, we consider the rest mass density as measured by the Eulerian observer given by
\begin{equation}
    \mathscr{N}
    := 
    - m^\alpha n_\alpha   
\,,
\end{equation}
which is related to the (baryon) number density. 
The above combined with \eqref{eq:lorentz} yields 
\begin{equation}
   \mathscr{N}
   =
   \gamma \varrho
\,.
\end{equation}
Now, integrating \eqref{eq:cons_restmass_density} over a ``pilled-boxed-shaped'' spacetime region  bounded by a spatial domain $\CD_1$ and $\CD_2$ connected by a timelike mantle surface (see \cite[Section~3.5]{Baumgarte_Shapiro_2010_numerical}), and using the Gau{\ss} theorem, one can show that the rest mass $M_\CD$ over a spatial region $\CD$ containing the matter source, i.e.,
\begin{equation}\label{eq:restmass}
    M_\CD
    :=
    \int_\CD \gamma \varrho \sqrt{\det \meth} \, \diff^3 x
    =
    \int_\CD N u^0 \sqrt{\det \meth} \, \diff^3 x
\end{equation}
is conserved.

Analogously, the mass current density measured by the Eulerian observer can be defined by projecting $\bm m$ onto the spatial hypersurfaces as
\begin{equation}
    M^\alpha
    :=
    \tensor{h}{^\alpha_\beta} m^\beta
    =
    \gamma \varrho v^\alpha
    =
    \mathcal{P} v^\alpha
\,.
\end{equation}
Using the relations above, the continuity equation \eqref{eq:cons_restmass_density} can be written in terms of quantities measured by the Eulerian observers as (cf.~\cite[Section~6.3.2]{2012_Gourgoulhon_formalism})
\begin{equation}
    \left(
        \p_t - \CL_{\bm N}
    \right)
    \mathcal{P}
    -
    N \CK \mathcal{P}
    +
    D_i \left( N \mathcal{P} v^i \right)
    =
    0
\,,
\end{equation}
where we used \eqref{eq:four_vel} and 
\begin{equation}
    \nabla_\alpha X^\alpha 
    =
    D_i X^i
    +
    X^i D_i \ln N
\,;
\,\,
\forall \,\, \bm X \in \mathfrak{X}(\Sigma_t)
\,.
\end{equation}

\subsection{The 3+1 Einstein equations}

In this subsection, we briefly introduce the $3+1$ form of the Einstein equations
\begin{equation}
    \bm G
    +
    \cc \, \metg
    =
    8 \pi G \, \bm T
\,,
\end{equation}
where  
\begin{equation}\label{eq:EinsteinT}
    \bm G
    :=
    \textbf{Ric}
    -
    \frac{1}{2} {\rm R} \, \metg
\end{equation}
is the Einstein tensor, with $\textbf{Ric}$ being the Ricci curvature tensor of the spacetime, ${\rm R} := \tr_{\metg} \textbf{Ric}$, and $\cc \in \R$ is the cosmological constant. The evolution equations are (cf., e.g., \cite{2008_ADM, 2012_Gourgoulhon_formalism, 2008_Alcubierre_3+1, 1978_Smarr_kinematical, 2020_BMR_III})
\begin{align}
	\partial_t  h_{ij} 
    &=
	- 2 
    \left[
        N \CK_{ij} 
        - 
        D_{(i} N_{j)} 
    \right]
\, , \label{eq:evol_h} 
\\
	\partial_t \tensor{\CK}{^i_j} 
    &= 
	N 
    \Big\{ 
        \tensor{\CR}{^i_j}
        + 
        \CK \tensor{\CK}{^i_j} 
		+ 
        4 \pi G 
        \big[ 
            \left( 3 P - E \right) \tensor{\delta}{^i_j} 
            - 
            2 \tensor{S}{^i_j} 
        \big]
\nonumber 
\\ 
&\qquad\quad
		- 
        \cc \tensor{\delta}{^i_j}
    \Big\} 
    - D_j D^i N 
\nonumber 
\\ 
&\quad
    +
    N^k D_k \tensor{\CK}{^i_j} 
    + 
    \tensor{\CK}{^i_k} \, D_j N^k
	- 
    \tensor{\CK}{^k_j} \, D_k N^i
\, , \label{eq:evol_K}
\end{align}
or equivalently using the Lie derivative,
\begin{align}
    (\p_t - \CL_{\bm N})  h_{ij} 
    &=
	- 2 N \CK_{ij} 
\, , \label{eq:evol_h_Lie} 
\\
	(\p_t - \CL_{\bm N}) \tensor{\CK}{^i_j} 
    &= 
	N 
    \Big\{ 
        \tensor{\CR}{^i_j}
        + 
        \CK \tensor{\CK}{^i_j} 
\nonumber 
\\ 
&\quad
		+ 
        4 \pi G 
        \big[ 
            \left( 3 P - E \right) \tensor{\delta}{^i_j} 
            - 
            2 \tensor{S}{^i_j} 
        \big]
    \Big\} 
\nonumber 
\\ 
&\quad
    - 
    D_j D^i N 
    - 
    N \cc \tensor{\delta}{^i_j}
\, , \label{eq:evol_K_Lie}
\end{align}
together with the energy and momentum constraints
\begin{align}
    \CR 
    + 
    \CK^2 
    - 
    \CK_{ij} \, \CK^{ij} 
	&= 
    16 \pi G E 
    + 
    2 \cc 
\,, \label{eq:hamilt_const}
\\
    D_k \tensor{\CK}{^k_{i}} 
    -
    D_i \mathcal{K}
    &=
    8 \pi G J_i 
\, . \label{eq:moment_const} 
\end{align}
Here, $D_i$ is the induced covariant derivative associated with the metric $\meth$, hence we introduced $D^i := h^{ij} D_j$, and $\CR_{ij}$ are the components of the $3$-Ricci tensor $\bm{\mathcal{R}ic}$ associated to it, with $\CR = \tr_{\meth} \bm{\mathcal{R}ic}$.

Another form of \eqref{eq:evol_K} or \eqref{eq:evol_K_Lie} turns out to be useful. We introduce the trace-free part of the second fundamental form, i.e.,
\begin{equation}
    \CA_{ij}
    :=
    \CK_{\langle ij \rangle}
    =
    \CK_{ij}
    -
    \frac{1}{3} \CK \, h_{ij}
\,,
\end{equation}
and write down the evolution equation for the mean curvature $\CK$ and $\CA_{ij}$.
First, taking the trace of \eqref{eq:evol_K_Lie} and using the energy constraint \eqref{eq:hamilt_const}, we find the evolution equation for the mean curvature $\CK$ (cf., e.g., \cite{2008_Alcubierre_3+1, Baumgarte_Shapiro_2010_numerical, Rendall_2008_PDE, 1978_Smarr_kinematical}):
\begin{equation}\label{eq:evolmeancurvatureLie}
\begin{aligned}
    (\partial_t -  \LieN) \CK
    &=
	N 
    \left[ 
        \CK_{ij} \, \CK^{ij} 
        +
        4 \pi G  
        \left( E + 3 P \right) 
        - 
        \cc
    \right] 
\\
&\quad
    - 
    \Delta_{\meth} N 
\,,
\end{aligned}
\end{equation}
And for $\tensor{\CA}{^i_j}$, we find (cf., e.g., \cite[Equation~(4.13)]{2025_VigneronBarzegar_BianchiThurston})
\begin{equation}
\begin{aligned}
	(\partial_t -  \LieN) \tensor{\CA}{^i_j} 
    &= 
    N
    \left(
        \CK \tensor{\CA}{^i_j}
        +
        h^{i k} \CR_{\langle k j \rangle}
        -
        8 \pi G \, \tensor{\Pi}{^i_j}
    \right)
\\
&\quad
    -
    h^{ik} D_{\langle k} D_{j \rangle} N
\,.
\end{aligned}
\end{equation}

\subsection{Principal scalar invariants of the second fundamental form}
\label{sec:PIK}

We add a useful rewriting of the scalar parts of the Einstein equations in the following which is particularly useful for analyzing the current warp-drive spacetimes.

Using the principal scalar invariants for $\bm \CK$, from \eqref{eq:evolmeancurvatureLie} we obtain
\begin{align}
    (\partial_t -  \LieN) \Ik
    &=
    N
    \left\{
        \I^2 [\bm \CK]
        -
        2 \IIk
        +
        4 \pi G  
        \left( E + 3 P \right)
    \right\}
\nn
\\
&\quad
    - 
    \Delta_{\meth} N
    - 
    N \cc
\,,\label{eq: evol I}
\end{align}
which is the Raychaudhuri equation for the case when $\bm n = \bm u$ and $\bm \omega = \bm 0$. Using \eqref{eq:evol_K} we find
\begin{equation}\label{eq: evol II}
\begin{aligned}
    (\partial_t -  \LieN) \IIk
    &=
    8 \pi G N E \, \Ik
\\
&\quad
    -
    N
    \left(
        \tensor{\CR}{^i_j}
        -
        8 \pi G \tensor{S}{^i_j}
    \right)
    \tensor{\CK}{^j_i}
\\
&\quad
    -
    \Ik \, \Delta_{\meth} N
    +
    \tensor{\CK}{_i^j} D_j D^i N
\,,
\end{aligned}
\end{equation}
and
\begin{equation}\label{eq: evol III}
\begin{aligned}
&    (\partial_t -  \LieN)  \IIIk   
\\
    &=
    3 N \, \Ik \, \IIIk
\\
&\quad
    + 
    N \, \IIk
    \left\{
        12 \pi G
        \left( E - P \right)
        +
        \cc
        -
        2 \IIk
    \right\}
\\
&\quad    
    -
    2 N
    \left(
        \tensor{\CR}{^i_j} 
        -
        8 \pi G \tensor{S}{^i_j} 
    \right)
    \tensor{\IIk}{^j_i}
\\
&\quad
    - 
    \IIk \Delta_{\meth} N
    +
    2 \tensor{\IIk}{_i^j} D_j D^i N
\\
    &=
    3 N \, \Ik \, \IIIk
\\
&\quad
    -
    N \IIk
    \left[
        4 \pi G
        \left( E + 3P \right)
        +
        \cc
        -
        \CR
    \right]
\\
&\quad    
    -
    2 N
    \left(
        \tensor{\CR}{^i_j} 
        -
        8 \pi G \tensor{S}{^i_j} 
    \right)
    \tensor{\IIk}{^j_i}
\\
&\quad
    - 
    \IIk \Delta_{\meth} N
    +
    2 \tensor{\IIk}{_i^j} D_j D^i N
\,,
\end{aligned}
\end{equation}
where $\IIk_{ij}$ is defined as in \Cref{sec: notations}, and we used the energy constraint \eqref{eq:hamilt_const} in the new form
\begin{equation}\label{eq: Hamilton constraint II}
    \frac{1}{2} \CR  
	= 
    8 \pi G E 
    + 
    \cc 
    -
    \IIk
\,.
\end{equation}
\begin{rem}
The system of evolution equations \eqref{eq: evol I}--\eqref{eq: evol III}, has been derived before for the case where $N = 1$, $N^i = 0$, $\tensor{S}{^i_j} = 0$, and $\tensor{\Theta}{^i_j} = - \tensor{\CK}{^i_j}$ in \cite[Equations~(5)-(7)]{Brunswic_Buchert_2020_Gauss}. Of course, it is equivalent to the usual (3+1)-decomposition of the Einstein equations. However, there are some advantages in this form compared to the usual form for our study of warp-drive spacetimes (cf.~\eqref{eq:evolEqRwarp} and \Cref{rem:invariants} below). (These equations are valid in three dimensions; one can easily generalize them to the $(n+1)$-dimensional case, as it is done in \cite{Brunswic_Buchert_2020_Gauss} for the special case studied there.)
\end{rem}
%

\section{Classification  of warp-drive spacetimes}
\label{sec:classification}

In this section, we give a clear classification of existing warp-drive proposals and their generalizations. But, before the classification, we should make precise what we mean by a warp bubble. First, we assume that the topology of the spacetime is trivial, i.e., $\mfd \cong  \R^4$, as it is always assumed in the warp-drive literature.\footnote{This constitutes the main difference from the alternative proposal for FTL travels, i.e., traversable wormholes, where FTL travels are achieved via nontrivial topology.} And then, in line with \citet[Definition~1.9]{Natario_2002_warp}, we define warp bubble as follows.
\begin{Def}[Warp bubble]\label{def:WD}
A \emph{warp bubble} is said to be generated by a vector field $\bm W \in \mathfrak{X}\left( \mfd \right)$ if, for a given foliation $\Sigma_t$ with the induced metric $\meth$ and for a constant $R > 0$, the following holds
\begin{enumerate}
    \item $\bm W = \bm 0$ for any point $P_i \in \mfd$ at a small {spatial} distance from a point $P_s(t) \in \Sigma_t$, called the \emph{center of the bubble}, i.e., ${\rm d}_{\bm h}(P_s, P_i) \ll R$. This region represents the interior of the bubble.
    \item $\bm W = - \bm V_s(t)$ for any point $P_e \in \mfd$ far away from the center, i.e., ${\rm d}_{\bm h}(P_s, P_e) \gg R$, where $\bm D \bm V_s = 0$. This region represents the exterior of the bubble.
\end{enumerate}
We say that the warp bubble has a characteristic radius $R$ and a velocity $\bm V_s(t) \in \mathfrak{X}\left( \mfd \right)$.
\end{Def}
We call a spacetime that admits a warp bubble by \Cref{def:WD} a \emph{warp-drive} spacetime.
This definition is very generic. In particular, it does not say, \emph{a priori}, how the vector $\bm W$ is related to the metric or the matter content. This relation will depend on the precise type of warp-drive model which will be discussed in \Cref{sec: restrictions}.
However, despite being very general, \Cref{def:WD} is the core idea of such spacetimes that distinguishes them from other spacetimes, especially the black hole and cosmological ones. We will give a more concrete definition for the warp bubble in \Cref{sec:Rwarp} for the special case of \Rwarp{} models. 
\begin{rem}
    The condition $\bm D \bm V_s = 0$ is the manifestly covariant formulation of spatially constant vector field. In particular, if the spatial metric is flat, this condition ensures that in Cartesian coordinates (where $h_{ij} = \delta_{ij}$, see blow) the components of $\bm V_s$ are spatial constants.
\end{rem}

Now, we are in a position to give a classification of the existing (consistent) warp-drive spacetimes.
For this we give a list of restrictions that comply with the assumptions underlying them. We recognize four main restrictions that cover almost all warp-drive proposals, which are either spatially conformally flat, or spatially flat. The latter is occasionally referred to as \natario{} warp drive that is dubbed ``generic warp drives'' by \citet{2021_Visser_genericwarp}, covering most if not all proposals in the literature. 
We shall not use the word ``generic'' here, and we rather use ``restricted warp drives'' (\emph{\Rwarp{}} for short) introduced first in \cite{2024_Barzegar_Buchert_Letter} through the following list (see \Cref{def:Rwarp} below). 

\subsection{Restrictions imposed on current warp-drive spacetimes}
\label{sec: restrictions}

Based on the $3+1$-formalism introduced in \Cref{sec: framework}, we list the restrictions imposed in the current literature on warp-drive spacetimes. 

\begin{enumerate}[label=\textbf{R\arabic*}]\label{R}

\item \label{R_tilt}
\textbf{Flow-orthogonality:}
Assuming a \emph{flow-orthogonal} foliation of spacetime, i.e., where the $4$-velocity $\bm u$ is aligned with the normal to the hypersurfaces of a spacetime $3+1$ foliation ($\bm u = \bm n$). As a consequence, the covariant vorticity field has to vanish. 

\item \label{R_gauge}
\textbf{Fixing lapse and shift:}
Choosing \textit{a priori} diffeomorphism degrees of freedom in terms of specific choices of lapse and shift functions. The mostly used assumption is a constant lapse function to avoid the time dilatation of two distant Eulerian observers. Then, the (fluid) covariant acceleration $\bm a$ has to vanish as a consequence of \ref{R_tilt}, hence we work in a \textit{geodesic slicing} of spacetime (cf., e.g., \cite[Section~10.2.1]{2012_Gourgoulhon_formalism}). Additionally, some restrictions on the shift can be considered, e.g., by taking $\bm N = - \bm W$, i.e., the shift generates the warp bubble.

\item \label{R_curv}
\textbf{Special curvature of spatial slices:} Assuming either spatially conformally flat or flat slices:

\begin{enumerate}[label=\textbf{R3\alph*}]

\item \label{R_conf}
\textbf{Spatial conformal flatness:} Assuming that the spatial slices are conformally flat.

\item \label{R_flat}
\textbf{Spatial flatness:} Assuming that the spatial Ricci curvature  $\bm \CR$ vanishes.
\end{enumerate}

\item
\label{R_asym}
\textbf{Asymptotic flatness:}
Assuming that the spacetime is asymptotically flat.
\end{enumerate}
In addition to the restrictions listed above, we specify the following restrictions which may appear in warp-drive proposals:
\begin{enumerate}[resume, label=\textbf{R\arabic*}]\label{R2}

\item \label{R_kin}
\textbf{Kinematical restrictions:} 
Restrictions on kinematic variables like the rate of expansion, the shear tensor and the (coordinate) vorticity tensor.

\item 
\label{R_cc}
\textbf{Cosmological constant:}
Assuming that the cosmological constant $\cc$ vanishes.
\end{enumerate}
Imposing different subsets of these restrictions on a warp-drive spacetime will lead to different spacetimes.
Previously, we distinguished two general classes of possible warp-drive spacetimes in \cite{2024_Barzegar_Buchert_Letter}, i.e., the \Rwarp{} and \Twarp{} models. Here, we narrow down the definition of \Rwarp{} models given in \cite{2024_Barzegar_Buchert_Letter}, based on the list above.
\begin{Def}[\Rwarp{} model]\label{def:Rwarp}
    A warp-drive spacetime that satisfies \ref{R_tilt}+\ref{R_gauge}+\ref{R_flat}+\ref{R_asym}(+\ref{R_cc}) is an \Rwarp{} model.
\end{Def}
\begin{rem}
One might wonder why we do not call this class of warp-drive spacetimes the \natario{} warp drive as it is sometimes done in the literature. The reason is the absence of the important condition \eqref{cond: Natario} (see \Cref{sec: Natario}) in \Cref{def:Rwarp}.
\end{rem}
\begin{rem}\label{rem:R_asym}
    We included the asymptotic flatness as a restriction to the list, although it is not mentioned explicitly in the main works by, e.g., \citet{Alcubierre_1994_warp}, \citet{Natario_2002_warp}, and \citet{Broeck_1999_warp}, but it is naturally the case for the Alcubierre warp drive (see \Cref{rem:falloff_alc}), and its generalizations based on the same idea. This, of course, satisfies \ref{R_cc} automatically. This is consistent with the idea of a warp bubble that is well localized. This latter property together with the asymptotic flatness, i.e., definition of a warp bubble and \ref{R_asym}, play a fundamental role in the warp-drive spacetimes and their properties, such as the violation of energy conditions (see, e.g., \Cref{sec: restrictions}). However, a precise definition for \ref{R_asym} is needed, which we will give in \Cref{def: asympt. flatness}, to avoid possible errors.
\end{rem}
\Rwarp{} models include many existing proposals, whereas there is no proposed model of type \Twarp{}, except from the one introduced in \cite{2024_Barzegar_Buchert_Letter}. This class of warp-drive spacetime essentially violates \ref{R_tilt}, hence it is tilted. They might potentially violate other restrictions as well, but the tilt is their main character.
Here, we distinguish yet another class which lies in a sense in between of the two classes mentioned above, by dropping two restrictions from \Rwarp{}, i.e., \ref{R_gauge} and \ref{R_curv}, keeping only \ref{R_tilt}. The reason for this is that the current warp-drive spacetimes are based on the idea that the motion of the warp bubble is associated with the shift vector and not the tilt, hence allowing for keeping \ref{R_tilt} even when dropping \ref{R_gauge} and \ref{R_curv}.
We call this new class of models \emph{\Swarp{}} (see \Cref{sec:Swarp} below). In a sense, one has the hierarchy \Rwarp{} $\subset$ \Swarp{} $\subset$ \Twarp{}. Hence, in order to classify existing proposals in the literature, we commence by considering \Swarp{} models as their broadest possible form  with generic spatial metric, lapse function and shift vector, and we then recover other proposals by applying \ref{R_tilt} to \ref{R_kin} consecutively.
\begin{rem}
The classification done by \citet{2021_Bobrick_physicalwarp} is at best vague, suffers from some errors (see, e.g., \Cref{err:timelikeKilling}), and does not satisfy \Cref{principle}.
\end{rem}

Since most known warp-drive spacetimes do not include the cosmological constant, we impose \ref{R_cc} (i.e., setting $\cc = 0$) for the spacetimes considered in this paper from now on. As remarked earlier, \ref{R_cc} is already included in \Rwarp{} models.

Throughout we use the (spatial) coordinate system given in \eqref{eq:line_elem}, and call coordinate systems \emph{Cartesian} in which $h_{ij} = e^{2\psi} \delta_{ij}$ for both $\psi \neq 0$ and $\psi = 0$ cases. We perform all calculations in such coordinates. Moreover, by abuse of notation, we define $\dnorm{\bm x} := (\delta_{ij} x^i x^j)^{1/2}$.

Before advancing further, we note that since we take an agonistic approach, we analyze the spacetimes classified below purely from mathematical and geometric point of view, and do not, \emph{a priori}, attribute any physical significance to them.

\subsection{Flow-orthogonal spacetimes (\Swarp{} models)}
\label{sec:Swarp}

The most general metric which can be generalized based on the existing proposals for warp-drive spacetimes is the one which merely assumes \ref{R_tilt}, i.e.,
\begin{equation}
\begin{aligned}
    h_{i j}
    &=
    h_{ij}(t, \bm x)
\,,
\\
    N
    &=
    N(t, \bm x)
\,,
\\
    N^i(t, \bm x)
    &=
    -V^i (t, \bm x)
    =
    -W^i (t, \bm x)
\,.
\end{aligned}
\end{equation}

We call such a model, which merely satisfies \ref{R_tilt}, an \Swarp{} to distinguish it from \Rwarp{} introduced in \cite{2024_Barzegar_Buchert_Letter}, i.e., these are models where the warp-bubble generator is the (fluid) coordinate velocity, i.e., $\bm V = \bm W$. Since \Swarp{} models encompass a wide range of possible spacetimes, it does not make sense to write down the Einstein equations for them. Instead, we will write them for more restricted models. Note that we do not, \emph{a priori}, impose \ref{R_asym} in the following, unless for \Rwarp{} models, by definition. 

Additionally, since the existing proposals for warp-drive spacetimes are all special cases of \Swarp{} models with flat spatial slices, the principal scalar invariants of the extrinsic curvature $\bm \CK$ (introduced in \Cref{sec:PIK}) turn out to be very useful, not only because they encode useful information about the second fundamental form due to its special form it takes for these models, but also because they reveal some valuable relations.
\begin{rem}\label{rem:geometricstructure}
To our knowledge, perhaps except from \cite{Visser_2004_fundamental, 2021_Visser_genericwarp} where it is discussed in an indirect way, the relations involving the principal invariants are not discussed in the literature. Although we agree with \cite[Section~III.B]{2021_Visser_genericwarp} that these relations are not useful the way they are advertised in \cite{2021_FellHeisenberg_positive}, they reveal however some important structure for warp-drive spacetimes. Our intuition for usefulness of these quantities comes mainly from works in relativistic inhomogeneous cosmology (see, e.g., \cite{Buchert_instability_1994, 1977_EhlersBuchert_Newtonian, 2020_BMR_III}), although they are introduced and used in continuum mechanics, especially elasticity (see, e.g., \cite{1960_Truesdell_ClassicalFieldTheories, 2012_Marsden_MathematicalElasticity}).
\end{rem}

As mentioned before, the warp-bubble generator for more general warp-drive spacetimes could be different than the coordinate velocity of the matter (see \cite{2024_Barzegar_Buchert_Letter} for a proposition). However, from now on, we will assume that the coordinate velocity always is the warp-bubble generator, i.e., $\bm V = \bm W$.

\subsection{Spatially conformally flat slices with generic shift vector and lapse function}
\label{sec: conformally flat with lapse}

We assume \ref{R_tilt} and \ref{R_conf}. More precisely, we assume
\begin{equation}\label{eq:conf_flat_with_N}
\begin{aligned}
    h_{i j}
    &=
    e^{2 \psi} \delta_{i j}
\,,
\\
    N
    &=
    N(t, \bm x)
\,,
\\
    N^i (t, \bm x)
    &=
    - V^i (t, \bm x)
\,,
\end{aligned}
\end{equation}
where $\psi : \Sigma_t \rightarrow \R$
is a smooth function at each time $t$, and we expressed the metric components in Cartesian coordinates. Therefore, the second fundamental form in this general case can be determined using the relations from \Cref{sec: conform} and \eqref{eq:evol_h} as follows
\begin{equation}
    \CK_{i j}
    =
    - \frac{1}{N} \p_{(i} V_{j)}
    +
    \Psi_{i j}
\,,
\end{equation}
with
\begin{equation}
    \Psi_{i j}
    :=
    - \frac{1}{N} 
    \left[
        N \CL_{\bm n} \psi \, h_{i j}
        -
        2 \p_{(i} \psi V_{j)}
    \right]
\,,
\end{equation}
where we used \eqref{eq:LienofPhi}.
Hence,
\begin{align}
    \Ik
    &=
    - \frac{1}{N} h^{i j} \p_{i} V_{j}
    +
    \I [{\bm \Psi}]
\nn
\\
    &=
    - \frac{e^{-2 \psi}}{N} \delta^{i j} \p_{i} V_{j}
    +
    \I [{\bm \Psi}]
\,,
\\
    \IIk
    &=
    \frac{1}{2 N^2} 
    \left[
        \left( h^{i j} \p_{i} V_{j} \right)^2
        -
        h^{i j} h^{k \ell} \p_{(i} V_{k)} \p_{(j} V_{\ell)}
    \right]
\nn
\\
&\quad
    +
    \II [{\bm \Psi}]
\nn
\\
    &=
    \frac{e^{- 4 \psi}}{2 N^2} 
    \left[
        \left( \delta^{i j} \p_{i} V_{j} \right)^2
        -
        \delta^{i j} \delta^{k \ell} \p_{(i} V_{k)} \p_{(j} V_{\ell)}
    \right]
\nn
\\
&\quad
    +
    \II [{\bm \Psi}]
\,,
\end{align}
where we introduced the principal invariants of the symmetric tensor $\Psi_{i j}$ in a manner similar to those for $\bm \CK$.
From \eqref{eq: Hamilton constraint II} using \eqref{eq: con_Ricci_sc} it follows
\begin{equation}\label{eq:EconfN}
\begin{aligned}
    8 \pi G E
    &=
    \frac{e^{- 4 \psi}}{2 N^2} 
    \left[
        \left( \delta^{i j} \p_{i} V_{j} \right)^2
        -
        \delta^{i j} \delta^{k \ell} \p_{(i} V_{k)} \p_{(j} V_{\ell)}
    \right]
\\
&\quad
    +
    \II [{\bm \Psi}]
    - 
    e^{-2 \psi}
    \left(
        2 \Delta_{{\bm \delta}} \psi
        +
        \dnorm{\bm \p \psi}^2
    \right)
\,.
\end{aligned}
\end{equation}
We see that it is in general possible to have positive energy density by choosing $\psi$ and the shift vector properly. However, this does not imply that such models are free from detriments and do not violate the WEC.

The metric given in this section covers most, if not all warp-drive spacetimes in the literature, although this general metric has not been analyzed so far in this context.
In the next sections we recover all existing warp-drive proposals by imposing more restrictions.

\subsection{Spatially flat metric with generic lapse function and shift vector}
\label{sec: Natario with N}

Assuming \ref{R_tilt} and \ref{R_flat} (i.e., setting $\psi = 0$ hence $\bm \Psi = \bm 0$ in the previous subsection), \eqref{eq:conf_flat_with_N} reduces to
\begin{equation}
\begin{aligned}
    h_{i j}
    &=
    \delta_{i j}
\,,
\\
    N
    &=
    N(t, \bm x)
\,,
\\
    N^i (t, \bm x)
    &=
    - V^i (t, \bm x)
\,.
\end{aligned}
\end{equation}
Since $\bm \Psi = \bm 0$, the second fundamental form and its first and second invariants reduce to 
\begin{subequations}\label{eq: K_I_II generic flat slices}
\begin{align}
    \CK_{i j}
    &=
    - \frac{1}{N} \p_{(i} V_{j)}
\,,
\\
    \Ik
    &=
    - \frac{1}{N} \p_{i} V^i
\,,
\\
    \IIk
    &=
    \frac{1}{2 N^2} 
    \left[
        \left( \p_{i} V^i \right)^2
        -
        \p^{i} V^{j} \p_{(i} V_{j)}
    \right]
\,,
\end{align}
\end{subequations}
where
\begin{equation}\label{eq:partialup}
    \p^i := \delta^{i j} \p_j
\,.
\end{equation}
From \eqref{eq: Hamilton constraint II} we find 
\begin{equation}\label{eq: E for R1+R3}
\begin{aligned}
    E
    &=
    \frac{1}{8 \pi G} \IIk
\\
    &=
    \frac{1}{16 \pi G N^2} 
    \left[
        \left( \p_{i} V^i \right)^2
        -
        \p^{i} V^{j} \p_{(i} V_{j)}
    \right]
\,,
\end{aligned}
\end{equation}
which can be rewritten as
\begin{equation}\label{eq: energy density Natario}
    E
    =
    \frac{1}{16 \pi G N^2} 
    \Big[
        \p_i \left( V^i \p_{j} V^j - V^j \p_{j} V^i \right)
        -
        \Omega^{ij} \Omega_{ij}
    \Big]
\,,
\end{equation}
as it is shown in \cite[Equation~(7.16)]{2021_Visser_genericwarp} for the $N=1$ case and \cite[Equation~(4.4)]{2023_Shoshany_WarpCTC} for a generic lapse function. Here, we introduced the \emph{coordinate vorticity}\footnote{We here follow a traditional convention that is motivated by the comma notation for derivatives, i.e., $\Omega_{ij} := V_{[i,j]}$.\label{foot:derconvention}}
\begin{equation}\label{eq:CoordVor}
    \Omega_{ij} 
    := 
    \p_{[j} V_{i]}
    =
    - \p_{[i} V_{j]}
\,,
\end{equation}
which, as will be clear later, plays a central role in \Rwarp{} models. We also introduce the \emph{coordinate angular velocity vector} as\footnote{Sometimes, instead of the angular velocity vector $\Omega^i$, which is the dual to the vorticity tensor, the vorticity vector $2 \Omega^i$ is used interchangeably.}
\begin{equation}\label{eq:coord_ang_vel}
    \Omega{}^i 
    := 
    - \frac{1}{2} \epsilon^{ijk} \Omega_{jk}
\,,
\end{equation}
which yields
\begin{equation}\label{eq: coor_Omeg_to_vecOmeg}
    \Omega_{ij} = - \epsilon_{ijk} \Omega{}^k
\,.
\end{equation}
We further define the squared magnitude of coordinate vorticity, similar to \eqref{eq: def of sig2 and omeg2}, as
\begin{equation}\label{eq:Omegasquared}
    \Omega^2
    := 
    \frac{1}{2} \Omega_{ij} \Omega^{ij}
    =
    \Omega{}^i \Omega_i
\,.
\end{equation}
%

\subsection{Spatially conformally flat slices with generic shift vector and unit lapse}
\label{sec:confflatunitlapse}

Assuming \ref{R_tilt}+\ref{R_gauge}+\ref{R_conf}, \eqref{eq:conf_flat_with_N} reduces to
\begin{equation}\label{eq: conf flat}
\begin{aligned}
    h_{i j}
    &=
    e^{2 \psi} \delta_{i j}
\,,
\\
    N
    &=
    1
\,,
\\
    N^i (t, \bm x)
    &=
    - V^i (t, \bm x)
\,.
\end{aligned}
\end{equation}
We then have
\begin{equation}
    \CK_{i j}
    =
    - \p_{(i} V_{j)}
    +
    \Psi_{i j}
\,,
\end{equation}
where
\begin{equation}
    \Psi_{i j}
    =
    -  
    \CL_{\bm n} \psi \, h_{i j}
    +
    2 \p_{(i} \psi V_{j)}
\,.
\end{equation}
Hence,
\begin{align}
    \Ik
    &=
    - e^{-2 \psi} \p^i V_i
    +
    \I [{\bm \Psi}]
\,,
\\
    \IIk
    &=
    \frac{e^{- 4 \psi}}{2} 
    \left[
        \left(\p^{i} V_{i} \right)^2
        -
        \delta^{i j} \delta^{k \ell} \p_{(i} V_{k)} \p_{(j} V_{\ell)}
    \right]
\nn
\\
&\quad
    +
    \II [{\bm \Psi}]
\,,
\end{align}
with
\begin{align}
    \I [{\bm \Psi}]
    &=
    - 3 \p_t \psi
    -
    V^i \p_i \psi
\,,
\\
    \II [{\bm \Psi}]
    &=
    3 (\p_t \psi)^2
    +
    2 V^i \p_i \psi \p_t \psi
    -
    V^2 e^{-2\psi} \dnorm{\bm \p \psi}^2
\,,
\end{align}
where we used \eqref{eq:LienofPhi}. Then, from \eqref{eq:EconfN} the Eulerian energy density takes the form
\begin{equation}\label{eq:EforVanDen}
\begin{aligned}
    8 \pi G E
    &=
    \frac{e^{- 4 \psi}}{2} 
    \left[
        \left( \delta^{i j} \p_{i} V_{j} \right)^2
        -
        \delta^{i j} \delta^{k \ell} \p_{(i} V_{k)} \p_{(j} V_{\ell)}
    \right]
\\
&\quad
    +
    3 (\p_t \psi)^2
    +
    2 V^i \p_i \psi \p_t \psi
\\
&\quad
    - 
    e^{-2 \psi}
    \left[
        2 \Delta_{{\bm \delta}} \psi
        +
        ( 1 + V^2 ) \dnorm{\bm \p \psi}^2
    \right]
\,.
\end{aligned}
\end{equation}

\subsubsection{Van Den Broeck metric with 3-dimensional shift vector}
\label{sec: van den Broeck}

If one adds \ref{R_asym} to the list above, and assumes special form for the coordinate velocity \`a la Alcubierre (based on \eqref{eq:AlcShape})  we arrive at the Van Den Broeck metric given in \cite{Broeck_1999_warp} (although he considered only a one-dimensional shift vector).

\subsection{Restricted warp drive (\Rwarp)}
\label{sec:Rwarp}

Now, we assume \ref{R_tilt}+\ref{R_gauge}+\ref{R_flat}. This class of warp drive is achieved by just setting $N = 1$ in \Cref{sec: Natario with N} (equivalently, by inserting $\psi = 0$, hence $\bm \Psi = \bm 0$, in \Cref{sec:confflatunitlapse}), i.e.,
\begin{equation}
\begin{aligned}
    h_{i j}
    &=
    \delta_{i j}
\,,
\\
    N
    &=
    1
\,,
\\
    N^i (t, \bm x)
    &=
    - V^i (t, \bm x)
\,,
\end{aligned}
\end{equation}
or, equivalently,
\begin{equation}\label{eq:Rwarp_metric}
    \bm g
    =
    - \diffb t^2
    +
    \delta_{ij}
    \left(
        \diffb x^i - V^i \diffb t
    \right)
    \left(
        \diffb x^j - V^j \diffb t
    \right)
.
\end{equation}
Assuming further \ref{R_asym}, we arrive at what is called the ``restricted warp drive'' (\Rwarp) in \cite{2024_Barzegar_Buchert_Letter}, since it involves the main restrictions, i.e., \Cref{def:WD}, which we concretize in the following: first, we note that the distance ${\rm d}_{\bm h}(P_s, P_i)$ between the center and points in the interior of the warp bubble is translated to $\dnorm{\bm x} \ll R$ for two Eulerian observers in the given coordinate system, and similarly, the distance  ${\rm d}_{\bm h}(P_s, P_e)$ between the center and points in the exterior is transformed to $\dnorm{\bm x} \gg R$. Second, the Eulerian observer in the interior can ``move'' with arbitrary speed $\dnorm{\bm{V}_s(t)}$ with respect to Eulerian observers in the exterior, hence their speed can be superluminal. Now, we define what we mean by superluminality of \Rwarp{} models.
\begin{Def}[Superluminal \Rwarp{}]\label{def:BubbleGenerator}
An \Rwarp{} is said to be \emph{superluminal} if
\begin{equation}\label{eq:superluminal}
    \dnorm{\bm{V}_s(t)} > 1
\end{equation}
for some time interval.
\end{Def}
Similar to previous parts, we compute the extrinsic curvature and its principal invariants
\begin{subequations}\label{eq: K_I_II natario}
\begin{align}
    \CK_{i j}
    &=
    - \p_{(i} V_{j)}
\,,
\\
    \Ik
    &=
    - \p_{i} V^i
\,,
\\
    \IIk
    &=
    \frac{1}{2} 
    \left[
        \left( \p_{i} V^i \right)^2
        -
        \p^{i} V^{j} \p_{(i} V_{j)}
    \right]
. \label{eq:Rwarp_II}
\end{align}
\end{subequations}

Setting $N=1$ in \eqref{eq: energy density Natario} (or $\psi = 0$ in \eqref{eq:EforVanDen}) with \eqref{eq:Omegasquared}  we find
\begin{equation}\label{eq:Euler_energy_density_Rwarp}
    E
    =
    \frac{1}{16 \pi G} 
    \Big[
        \p_i \left( V^i \p_{j} V^j - V^j \p_{j} V^i \right)
        -
        2 \Omega^2
    \Big]
.
\end{equation}
The energy and momentum constraints \eqref{eq: Hamilton constraint II} read, respectively,
\begin{align}
    \IIk
    &=
    8 \pi G E 
\,,\label{eq:Rwarp_Ham_const}
\\
    \p_k \tensor{\Omega}{^k_i}
    &=
    16 \pi G J_i
\,,\label{eq:Rwarp_mom_const}
\end{align}
where the second equation follows  immediately from \eqref{eq:moment_const}  by a straightforward calculation with the following  consequence
\begin{equation}\label{eq:divfreeJ}
    \p_i J^i = 0
\,.
\end{equation}
Using 
\begin{align*}
    S^{ij} \CK_{ij}
    &=
    \Pi^{ij} \CA_{ij}
    +
    P \, \Ik
\,,
\\
\begin{split}
    S^{ij} \IIk_{ij}
    &=
    \frac{1}{6} \Pi^{ij}
    \left(
        \Ik \CA_{ij}
        -
        3 \tensor{\CA}{_i^k} \CA_{k j}
    \right)
    +
    P \, \IIk
\,,
\end{split}
\end{align*}
and inserting \eqref{eq:Rwarp_Ham_const} into \eqref{eq: evol I}--\eqref{eq: evol III}, we obtain
\begin{subequations}\label{eq:evolEqRwarp}
\begin{align}
    (\partial_t +  \LieV) \Ik
    &=
    \I^2 [\bm \CK]
    -
    12 \pi G  
    \left( E - P \right) 
,\label{eq:evolIRwarp}
\\
\begin{split}\label{eq:evolIIRwarp}
    (\partial_t + \LieV) \IIk
    &=
    2 \, \Ik \, \IIk
\\
&\quad
    -
    8 \pi G 
    \left(
        E - P
    \right)
    \Ik
\\
&\quad
    +
    8 \pi G \, \Pi^{ij} \CA_{ij}
\,,
\end{split}
\\
    (\partial_t + \LieV)  \IIIk
    &=
    3 \, \Ik \, \IIIk
\nn
\\
&\quad
    -
    4 \pi G
    \left( E - P \right) \IIk
\\
&\quad 
    +
    \frac{8}{3} \pi G \, \Pi^{ij}
    \left(
        \Ik \CA_{ij}
        -
        3 \tensor{\CA}{_i^k} \CA_{k j}
    \right)
,
\nn
\end{align}
\end{subequations}
and
\begin{equation}\label{eq:evolShearRwarp}
	(\partial_t + \LieV) \tensor{\CA}{^i_j} 
    = 
    \Ik \tensor{\CA}{^i_j}
    -
    8 \pi G \, \tensor{\Pi}{^i_j}
\,.
\end{equation}
The energy conservation equation \eqref{eq:energycons} is equivalent to \eqref{eq:evolIIRwarp} using \eqref{eq:Rwarp_mom_const}, and the momentum conservation equation \eqref{eq:momentumcons} reduces to
\begin{equation}\label{eq:Rwarp_momcons}
    (\p_t + \LieV) J_i
    +
    \p_j \tensor{\Pi}{^j_i}
    +
    \p_i P
    -
    \Ik J_i
    =
    0
\,.
\end{equation}
Therefore, equations \eqref{eq:Rwarp_Ham_const}, \eqref{eq:Rwarp_mom_const}, \eqref{eq:evolEqRwarp}--\eqref{eq:Rwarp_momcons}, are the Einstein  and the conservation equations for \Rwarp{} models.
\begin{rem}\label{rem:invariants}
    We see a pattern in \eqref{eq:evolEqRwarp} which naturally comes out by a straightforward rewriting: the prefactors of the first terms are equal to the number of the corresponding principal invariant (PI), say $\#$PI, multiplied by $\Ik \rm{PI}[\bm \CK]$, and the matter terms are $(E-P)$ times $(\rm{P\I - \I})[\bm \CK]$ (with $(\I - \I)[\bm \CK] \equiv 1$), multiplied with factors of $4 \pi G$ times $(4-\#\rm{PI})$, e.g., if PI=$\II$, then, $\#\rm{PI} = 2$, and the first term is $2 \Ik \, \IIk$ and $(\rm{PI - \I})[\bm \CK]=\Ik$. This structure is easily seen when one considers shear-free \Rwarp{} models (see also \Cref{sec:shearfree} and \Cref{thm:shearfree}).
\end{rem}
\begin{rem}
Equations (7) and (5) in \cite{2021_Lentz_breakingbarrier} and \cite{2021_FellHeisenberg_positive}, respectively, are the result of
inserting \eqref{eq:Rwarp_II} into \eqref{eq:Rwarp_Ham_const}, written in components, where the connection to $\IIk$ and its role in the formalism is not as clear as in our formulation.
\end{rem}
\begin{rem}\label{rem:Lienoperator}
    The operator $\p_t + \LieV$ is nothing but the \emph{material} (or \emph{convective}, or \emph{Lagrangian}) derivative (usually denoted by $\diff/\diff t$), when applied on scalar fields, that is well-known in continuum mechanics (see, e.g., \cite{1960_Truesdell_ClassicalFieldTheories}) and other fields (e.g., \cite{2020_BMR_III}).
\end{rem}
\begin{rem}
    In this case, the spatial covariant derivative reduces to the partial derivative. However, one could use a flat metric with its associated covariant derivative $D_i$, and get (almost) the same equations since $[D_i, D_j] = 0$ holds.
\end{rem}
\begin{rem}
\label{remvelocities}
Under assumptions of \ref{R_tilt}+\ref{R_gauge}+\ref{R_flat}, we find ${\bm u}=(1,{\bm V})$ and $\dual{u} = (-1,\bm 0)$, hence ${\theta}_{ij} = \Theta_{ij} := \p_{(i} V_{j)} = - {\CK}_{ij}$, i.e., the principal scalar invariants (as introduced in \Cref{sec: notations}) of the extrinsic curvature are identical (up to sign changes) to the principal scalar invariants of the expansion tensor.

\end{rem}
%

\subsubsection{Zero-expansion \Rwarp{} models}
\label{sec:zeroexpnat}

The \natario{} zero-expansion warp drive is a subclass of \Rwarp{} spacetimes.\footnote{To be more precise, one has to assume \ref{R_asym} and add the condition \eqref{cond: Natario}.}
In this case, we have $\CK = \Ik = 0$, hence $\bm \CK = \bm \CA$ and $\IIk = - \hnorm{\bm \CA}^2/2$, which together with \eqref{eq:Rwarp_Ham_const} implies the violation of the WEC (see \Cref{thm: natario WEC}). Moreover, from \eqref{eq:evolIRwarp} we obtain $E = P$.

\subsubsection{Coordinate vorticity-free \Rwarp{} models}
\label{sec:freevortwarp}

In this case, since $\bm \Omega = \bm 0$ and the spatial hypersurfaces are $\R^3$, by the Poincar\'e lemma we have
\begin{equation}\label{eq:gradientshift}
    V_i 
    =
    \p_i \phi
\,,
\end{equation}
for sufficiently differentiable function $\phi$ on $\mfd$ (cf.~\Cref{er:Helmholtz} below). Moreover, from \eqref{eq:Rwarp_mom_const} it follows that $\bm J = \bm 0$, which is observed in \cite[Section~3.1]{2021_FellHeisenberg_positive} (cf., also, \Cref{err:hypshiftzeroJ} below). It turns out that the Alcubierre warp drive in this case reduces to Minkowski spacetime (see \Cref{thm:AlcOmegaZero}).

\subsubsection{Shear-free \Rwarp{} models}
\label{sec:shearfree}

In this case, $\CA_{ij} = 0$ which implies  $\CK_{ij} = \CK \delta_{ij}/3$, and hence $\IIk = \I^2[\bm \CK]/3$ and $\IIIk=\I^3[\bm \CK]/27$. This means that all three equations in \eqref{eq:evolEqRwarp} are equivalent. Moreover, by \eqref{eq:evolShearRwarp} we have $\Pi_{ij} = 0$. Finally, from \eqref{eq:divfreeJ} and the momentum constraint \eqref{eq:moment_const} it follows that $\Ik = \CK$ is a harmonic function (see \Cref{thm:shearfree} for a consequence).

\subsection{\natario{} metric}
\label{sec: Natario}

As mentioned in \Cref{sec:Rwarp}, the \natario{} metric is a subclass of \Rwarp{} models with a further condition.
In fact, \citet{Natario_2002_warp}  slightly generalized the Alcubierre metric (see \Cref{sec: Alcubierre} below). 
The shift vector $\bm N$ is assumed to have the following property: components of $\bm N$ should be bounded smooth functions, i.e.,
\begin{equation}\label{cond: Natario}
    N^i \in C^\infty_b (\mfd)
\,.
\end{equation}
Hence, the only differences between the Alcubierre and the \natario{} proposals are that $\bm N$ has three non-vanishing components in general, and \natario{} does not prescribe a specific form of the shift vector (e.g., a form function $f$ which plays the role of a window function, cf.~\Cref{sec: Alcubierre} below). The condition \eqref{cond: Natario} assures that the spacetime is globally hyperbolic (see \Cref{sec: global hyperbolicity}, in particular \Cref{thm: Choquet}) which forbids the existence of closed timelike curves (CTCs) whose existence and role we do not analyze in this paper, and refer the reader to, e.g., \cite{2023_Shoshany_WarpCTC, Shoshany_2019_lecture, Everett_1996_warp}.

\subsubsection{\natario{} with 1D shift vector}
\label{sec: 1d natario}
For this case, we have
\begin{equation}
\begin{aligned}
    h_{i j}
    &=
    \delta_{i j}
\,,
\\
    N
    &=
    1
\,,
\\
    N^i(t, \bm x)
    &= 
    - V^i (t, \bm x)
    =
    - V(t, \bm x) \delta^i_1
\,,
\end{aligned}
\end{equation}
under which \eqref{eq: K_I_II natario} becomes
\begin{subequations}
\begin{align}
    \CK_{i j}
    &=
    - \p_{(i} V \delta^1_{j)}
\,,
\\
    \Ik
    &=
    - \p_1 V
\,,
\\
    \IIk
    &=
    - \frac{1}{4} 
    \left[
        \left( \p_2 V \right)^2
        +
        \left( \p_3 V \right)^2
    \right]
\,.
\end{align}
\end{subequations}
The only nonvanishing components of the extrinsic curvature are
\begin{equation}
\begin{aligned}
    \CK_{1 1} 
    &= 
    - \p_1 V
    =
    \Ik
    =
    \CK
\,,
\\
    \CK_{1 i} 
    &= 
    - \frac{1}{2} \p_i V  
\, ; \quad i = 2,3
\,.
\end{aligned}    
\end{equation}
The coordinate vorticity becomes $\Omega_{ij} = \delta^1_{[i} \p_{j]} V$ with the only nonvanishing components being
\begin{equation}
    \Omega_{1 i} 
    = 
    \frac{1}{2} \p_i V
    = 
    - \CK_{1 i}
\, ; \quad i = 2,3
\,,
\end{equation}
which results in
\begin{equation}
    \Omega^2 
    = 
    \frac{1}{4} 
    \left[ 
        \left( \p_2 V \right)^2 
        + 
        \left( \p_3 V \right)^2 
    \right]
    =
    - \IIk
\,,
\end{equation}
since the first term in bracket in \eqref{eq:Euler_energy_density_Rwarp} vanishes.
This has the immediate implication coming from the energy constraint equation \eqref{eq: Hamilton constraint II}:
\begin{equation}\label{eq:energy_one_d_nat}
    E
    =
    - \frac{1}{32 \pi G }
    \left[
        \left( \p_2 V \right)^2
        +
        \left( \p_3 V \right)^2
    \right]
    =
    - \frac{\Omega^2}{8 \pi G }
    \leq
    0
\,.
\end{equation}
This shows that the energy density is non-positive, hence the WEC is violated (cf.~\Cref{sec:WEC}).
\begin{rem}
   The relation \eqref{eq:energy_one_d_nat} shows that the coordinate vorticity is the central kinematic quantity for the \natario{} warp drive with $1$-dimensional shift vector, including the Alcubierre metric. Nonetheless, it does not mean that if $\bm \Omega = \bm 0$, then the spacetime is trivial, although this is certainly the case for the specific form function in Alcubierre warp drive (see \Cref{thm:AlcOmegaZero}). The correlation between the coordinate vorticity and the Eulerian energy density, to our knowledge, has been first observed in \cite{2024_Barzegar_Buchert_Letter} (cf.~\Cref{rem:geometricstructure}).
\end{rem}
%

\subsection{Alcubierre metric}
\label{sec: Alcubierre}

Assuming all restrictions, i.e., \ref{R_tilt}+\ref{R_gauge}+\ref{R_flat}+\ref{R_asym}+\ref{R_kin}, 
we retrieve the Alcubierre metric, proposed first in \cite{Alcubierre_1994_warp}, by taking the same metric as in \Cref{sec: 1d natario} with a special form for the coordinate velocity, i.e., 
\begin{equation}\label{eq:AlcADMform}
\begin{aligned}
    h_{i j}
    &=
    \delta_{i j}
\,,
\\
    N
    &=
    1
\,,
\\
    N^i(t, \bm x)
    &= 
    - V^i (t, \bm x)
    =
    - V_s(t) [f(r_s) - 1] \delta^i_1
\,,
\end{aligned}
\end{equation}
with $r_s := (x^2 + y^2 +z^2)^{1/2}$. 

The reader might wonder why the metric given by \eqref{eq:AlcADMform} is not the same as the original metric given by \citet{Alcubierre_1994_warp}. The different representation of the coordinate velocity in \eqref{eq:AlcADMform} is due to the fact that we defined the metric of the \Rwarp{} models in \eqref{eq:Rwarp_metric} in terms of the comoving coordinate with $x \equiv x_{{\rm Alc.}} - x_s(t)$, where $x_s(t)$ is the trajectory of the spaceship and
\begin{equation*}
    V_s(t)
    =
    \ddt{x_s(t)}
\,,
\end{equation*}
is the velocity of the spaceship, located at the center of the warp bubble. The reason for this representation is to cope with \Cref{def:WD} motivated by \citet[Definition~1.9]{Natario_2002_warp}.

The warp bubble in this special case is described by a function $f: [0, \infty) \rightarrow [0,1]$, which is a window function (called ``form function'' in \cite{Alcubierre_1994_warp}), assumed to be spherically symmetric and has the following form
\begin{equation}\label{eq:AlcShape}
    f(r_s)
    =
    \frac{\tanh(\sigma (r_s + R)) - \tanh(\sigma (r_s - R))}{2 \tanh(\sigma R)}
\,,
\end{equation}
where $R >0$ and $\sigma > 0$ are two arbitrary parameters interpreted as the radius of the warp bubble and inversely proportional to the bubble wall thickness, respectively. The function $f(r_s)$ has the general property that it, approximately, is equal to one in a neighborhood of the origin of the warp bubble and it vanishes outside it \cite{Alcubierre_1994_warp}, i.e.,
\begin{equation}
    \lim_{\sigma \rightarrow \infty}
    f(r_s)
    =
\begin{cases}    
    1
\,;
& -R \leq r_s \leq R
\,,
\\
    0
\,,
& \text{otherwise} 
\,.
\end{cases}
\end{equation}
The form function $f(r_s)$ guarantees that the shift vector satisfies \eqref{cond: Natario}, and the spacetime is asymptotically flat (see \Cref{rem:falloff_alc}).
Therefore, both the exterior and the interior of the warp bubble are flat.

Then, \eqref{eq: K_I_II natario} becomes
\begin{subequations}
\begin{align}
    \CK_{i j}
    &=
    - V_s \p_{(i} f \delta^1_{j)}
    =
    - V_s f' \delta^1_{(i} \p_{j)} r_s 
\,,\label{eq:KijAlc}
\\
    \Ik
    &=
    - \frac{x}{r_s} V_s f'
\,,
\\
    \IIk
    &=
    - \frac{y^2 + z^2}{4 r_s^2} V_s^2 \left( f' \right)^2
    =
    - \Omega^2
\,,\label{eq:IIalcOmeg}
\end{align}
\end{subequations}
where $f' \equiv \diff f/\diff r_s$.
Hence, 
\begin{equation}\label{eq: energy den in Alcubierre}
    E
    =
    - \frac{1}{32 \pi G }
    \frac{y^2 + z^2}{r_s^2} V_s^2 \left( f' \right)^2
    \leq
    0
\,,
\end{equation}
which indicates the violation of the WEC, as it is mentioned already by \citet{Alcubierre_1994_warp}.

We finish this section with the following theorem which was announced in \cite{2024_Barzegar_Buchert_Letter} and \Cref{sec:freevortwarp}.
\begin{theorem}\label{thm:AlcOmegaZero}
    Coordinate vorticity-free Alcubierre warp drive is Minkowski.
\end{theorem}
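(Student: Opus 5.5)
The plan is to reduce the vorticity-free condition to a condition on the form function $f$ and then show that this condition forces the shift to be constant, so that a coordinate change removes it. For the Alcubierre coordinate velocity \eqref{eq:AlcADMform}, $V_i = V_s(t)\,[f(r_s)-1]\,\delta^1_i$, so only the $(1,2)$ and $(1,3)$ components of \eqref{eq:CoordVor} survive:
\begin{equation*}
    \Omega_{1 a} = \frac{1}{2} V_s(t)\, f'(r_s)\, \frac{x^a}{r_s}
\,, \quad a = 2,3
\,.
\end{equation*}
The same information sits in \eqref{eq:IIalcOmeg}: $\Omega^2 = \frac{y^2+z^2}{4 r_s^2} V_s^2 (f')^2$. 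Imposing $\bm \Omega = \bm 0$ on the whole spacetime therefore means $V_s(t)\, f'(r_s)\,(y^2+z^2)=0$ everywhere.

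We then split into cases. If $V_s(t)=0$ on some time interval, the shift vanishes there and the metric \eqref{eq:Rwarp_metric} is already $-\diffb t^2+\delta_{ij}\diffb x^i\diffb x^j$. At times with $V_s(t)\neq 0$, we need $f'(r_s)=0$ at every point off the $x$-axis. Every value $r_s>0$ is attained by such points, for example $(0,r_s,0)$, so $f' \equiv 0$ on $(0,\infty)$ and $f$ is constant. The parameter range of \eqref{eq:AlcShape} should be addressed explicitly: for finite $\sigma$, $f$ is nonconstant, since it equals $1$ at $r_s=0$ and decays to $0$. Hence the conclusion is that the vorticity-free condition is compatible with the Alcubierre form only when $V_s(t)f'\equiv 0$, and we still have to show that this degenerate case is Minkowski. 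With $f$ constant, $\bm V = V_s(t)\,c\,\bm\partial_x$ is spatially constant, with $c=f-1$ (and $c=0$ if $f\equiv 1$ by continuity at the origin, or if $V_s=0$).

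Finally, we exhibit the flatness. For a spatially homogeneous shift $V^i(t)$, the change of coordinates $\tilde x^i = x^i - \int^t V^i(t')\,\diff t'$ gives $\diffb \tilde x^i = \diffb x^i - V^i \diffb t$. Then \eqref{eq:Rwarp_metric} becomes $-\diffb t^2 + \delta_{ij}\diffb\tilde x^i \diffb\tilde x^j$, which is the Minkowski metric on $\R^4$. As a cross-check, \eqref{eq:KijAlc} shows $\CK_{ij}=0$, and together with flat slices and unit lapse the Gauss--Codazzi--Ricci relations give vanishing Riemann tensor, consistent with $E=0$ in \eqref{eq: energy den in Alcubierre}.

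The step needing most care is the middle one: the pointwise condition on $\R^3$ minus the $x$-axis has to be upgraded to $f'\equiv0$ for all radii. This works by the spherical symmetry of $f$ and its smoothness, which covers the axis and the origin by continuity. We also have to treat the time dependence of $V_s$ so that the conclusion holds on the whole spacetime rather than slice by slice; the coordinate change above handles arbitrary $V_s(t)$ uniformly.
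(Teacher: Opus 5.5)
Your proposal is correct and follows the same route as the paper: $\bm \Omega = \bm 0$ forces $V_s f' = 0$ via $\Omega^2 = \frac{y^2+z^2}{4 r_s^2} V_s^2 (f')^2$. This gives a spatially constant shift with $\CK_{ij} = 0$, flat slices and unit lapse, hence Minkowski, and your extra steps (upgrading the pointwise condition off the $x$-axis to $f' \equiv 0$, and the explicit global coordinate change $\tilde x^i = x^i - \int^t V^i \, \diff t'$) only make explicit what the paper's two-line proof leaves implicit.
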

\begin{proof}
    If $\bm \Omega = \bm 0$, then from \eqref{eq:IIalcOmeg} and hence \eqref{eq: energy den in Alcubierre}, it follows that either $V_s = 0$ or $f' = 0$. In either case, we obtain $\bm \CK = \bm 0$ (or even stronger) which finishes the proof.
\end{proof}

\section{Demystification of warp-drive spacetimes}
\label{sec:demyst}

This section examines some profound misconceptions, misunderstandings, and mistakes together with their consequences in the warp-drive literature. To enhance readability and facilitate access, this section makes frequent use of Remarks and Errors.

\subsection{Global hyperbolicity of warp-drive spacetimes}
\label{sec: global hyperbolicity}

In this section we examine the claim that the Alcubierre warp-drive spacetime is \emph{globally hyperbolic}. While \citet{Natario_2002_warp}  defines a warp-drive spacetime to be globally hyperbolic (although, as we shall argue below, it puts a constraint on the coordinate velocity), this property is justified wrongly by \citet{Alcubierre_1994_warp}, and it is often repeated in the existing literature without critical examination. In fact, one needs extra assumptions to assure the global hyperbolicity of the Alcubierre warp drive. Global hyperbolicity has, in any case, a severe consequence, as we shall see below.

We start with the Error that  appeared in the original work by \citet{Alcubierre_1994_warp}.
\begin{error}\label{er:alcgh}
\citet{Alcubierre_1994_warp} claimed that every spacetime whose metric is locally described by its ADM form of the metric is globally hyperbolic as long as the spatial metric, $h_{ij}$ in our notation, is positive definite for all $t \in I \subset \R$. This is not true in general as one can construct counterexamples to this statement, e.g., a Schwarzschild spacetime with negative (ADM) mass (cf., e.g., \cite[Section~11.2]{1984_Wald_GR}) or the anti-de~Sitter spacetime (cf., e.g., \cite[Section~V.4.6]{2008_Choquet_GR}) fail to be globally hyperbolic. \Cref{thm: Choquet} below also shows that the positive-definiteness of $h_{ij}$ alone does not guarantee that the spacetime is globally hyperbolic.  This is often repeated or assumed in almost all subsequent works so that we do not list them here.\footnote{For example, \citet[Section~II]{2021_Visser_genericwarp} state: ``The warp-drive spacetime is by construction globally hyperbolic.''}
\end{error}
Determining the global hyperbolicity of a given spacetime is, in general, a difficult task. Nevertheless, there are some conditions under which one can prove it. For this purpose, we suggest using a theorem by \citet{Choquet-Bruhat_Cotsakis_2002_gh} (see also \cite[Section~11.4]{2008_Choquet_GR}). But, before proceeding, it is helpful to repeat the (informal) definition of the global hyperbolicity. 
\begin{Def}[Global hyperbolicity]
A spacetime $(\mfd, \metg)$ is globally hyperbolic if it admits a Cauchy surface $\Sigma$, i.e., a spacelike surface in $\mfd$ such that each causal curve without end points intersects it once and only once. 
\end{Def}
\begin{rem}
The more technical definitions of the global hyperbolicity require some notions from the causality theory on Lorentzian manifolds. We refer the reader to \cite[Section~4.5.4]{2019_Minguzzi_causal} and \cite[Section~2.8]{2020_Chrusciel_GeometryOfBH} for further details and recent developments on this definition.
\end{rem}
\begin{theorem}[Choquet-Bruhat--Cotsakis, 2002]\label{thm: Choquet}
    Any Lorentzian spacetime $(\mfd, \metg)$ with $\mfd  \cong \Sigma_t \times I$ and $I = (t_0, \infty)$, where $\Sigma_t$ is  an $3$-dimensional Riemannian manifold for each $t \in I$, and $\metg$ is a Lorentzian metric which has the ADM form as in \eqref{eq:line_elem}, is globally hyperbolic if
\begin{enumerate}[label=(\roman*)]
    \item the spacetime is time-oriented by increasing $t$,
    
    \item the lapse function $N$ is bounded below and above by positive numbers $N_m$ and $N_M$, i.e., $0 < N_m \leq N \leq N_M$, 

    \item the metric $h_{ij}$ is a complete Riemannian metric on $\Sigma_t$ uniformly bounded below for all $t \in I$ by a metric $  \mathfrak{g}_{ij} = h_{ij}(t_0)$, i.e., there exists a positive constant $A > 0$ such that for vectors $\bm w \in \mathfrak{X}\left( \Sigma_t \right)$ we have $A \mathfrak{g}_{ij} w^i w^j \leq h_{ij} w^i w^j$,

    \item the $\meth$-norm of the shift vector is uniformly bounded by some positive number, i.e., $\hnorm{ \bm N} = \left( h_{ij} N^i N^j \right)^{1/2} \leq \bar{N}$ for some $\bar{N} > 0$.
\end{enumerate}
\end{theorem}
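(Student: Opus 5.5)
The idea is to show that every slice $\Sigma_{t_1}=\{t=t_1\}$ is a Cauchy surface. The time function $t$ is then a Cauchy time function, and global hyperbolicity follows by Geroch's characterization. Along a future-directed causal curve we have $\diff t>0$, by hypothesis (i) and the fact that $\metg(\bm\p_t,\bm\nabla t)=1$. Hence such a curve can meet each slice at most once. The real work is to show that every inextendible causal curve meets every slice.

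First parametrize a future-directed causal curve $c$ by $t$, so $c(t)=(t,\bm x(t))$. The causal condition from \eqref{eq:line_elem} reads
\begin{equation*}
h_{ij}\big(\dot x^i+N^i\big)\big(\dot x^j+N^j\big)\le N^2 .
\end{equation*}
Combining the triangle inequality in $\meth$ with (ii) and (iv) gives $\hnorm{\dot{\bm x}}\le N_M+\bar N$. Then (iii) converts this into a uniform speed bound with respect to the fixed metric $\mathfrak g=h_{ij}(t_0)$:
\begin{equation*}
|\dot{\bm x}|_{\mathfrak g}\le A^{-1/2}(N_M+\bar N).
\end{equation*}
Consequently, on any bounded $t$-interval $(a,b)\subset I$, the spatial projection of $c$ has finite $\mathfrak g$-length.

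Next, suppose $c$ is future-inextendible but $t$ stays bounded above by some $b$ with $b<\infty$ and $b\in I$. Then $\bm x(t)$ is $\mathfrak g$-Cauchy as $t\to b$. Because $\mathfrak g$ is complete (iii), the Hopf--Rinow theorem gives a limit point $\bm x_b$. So $c$ has the endpoint $(b,\bm x_b)\in\mfd$, which contradicts inextendibility. Hence $t\to\sup I=\infty$ along $c$. The same argument in the past direction gives $t\to t_0$. By continuity, $c$ therefore crosses every $\Sigma_{t_1}$.

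Some care is needed because a causal curve need not be $C^1$. I would treat locally Lipschitz curves, where the speed bound holds almost everywhere and the length estimate still works.

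The main obstacle is the final step: passing from ``every inextendible causal curve meets each slice exactly once'' to global hyperbolicity in the modern sense. The cleanest route is to invoke Geroch's theorem, which says a spacetime admitting a Cauchy hypersurface is globally hyperbolic. Alternatively, one can check strong causality (which is immediate from the time function $t$) and the compactness of causal diamonds $J^+(p)\cap J^-(q)$. For the diamonds, the speed bound confines them to $[t_p,t_q]\times\bar B_{\mathfrak g}(\bm x_p,\,C(t_q-t_p))$. Such a set is compact by completeness of $\mathfrak g$ and Hopf--Rinow, and closedness of the diamond follows from a limit-curve argument using the same uniform Lipschitz bound.
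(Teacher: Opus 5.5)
Your argument is correct and is essentially the original Choquet-Bruhat--Cotsakis proof, which the paper cites without reproducing. The causal-cone inequality combined with (ii)--(iv) gives a uniform $\mathfrak g$-speed bound, and completeness of $\mathfrak g$ then forces every inextendible causal curve to sweep out all of $I$, so each slice $\{t=t_1\}$ is a Cauchy surface. Since the paper defines global hyperbolicity as the existence of a Cauchy surface, the step you flag as the ``main obstacle'' (Geroch's theorem, or compactness of causal diamonds) is not needed here.
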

\begin{rem}\label{rem:GHconditions}
Condition $(i)$ either prevents possible emergence of  horizons or pathologies in the given foliation, and hence puts a bound on the lapse function and the shift vector through $\hnorm{\bm N} < N$ (which, \emph{per se}, does not prohibit superluminality in the context of warp-drive spacetimes, see also \cite[Section~5.2.2]{2012_Gourgoulhon_formalism}) that in combination with condition $(ii)$ results in $\hnorm{\bm N} < N_M$, thereby implying condition $(iv)$, or it ensures that, in the presence of horizons, we restrict ourselves to the exterior region, or more precisely,  the \emph{domain of outer communication} (DOC): $I^+(\mathscr{I}^-) \cap I^-(\mathscr{I}^+)$ (see, e.g., \cite[Section~3.1]{2020_Chrusciel_GeometryOfBH} for a precise definition).
\end{rem}
\begin{rem}\label{rem:GHconditionsRwarp}
Conditions $(ii)$ and $(iii)$ are evidently satisfied for the \Rwarp{} models. 
One needs to check conditions $(i)$ and $(iv)$. Condition $(i)$ is satisfied globally only for the subluminal case where no horizon emerges (cf., e.g., \cite{Natario_2002_warp,Hiscock_1997_quantum, Liberati_2009_semiclassical}), or in the DOC for the superluminal case where two horizons exist or form (cf.~the Penrose diagrams in \cite[Figures~2 and 3]{Liberati_2009_semiclassical}). Conversely, if condition $(i)$ holds globally, then from \Cref{rem:GHconditions} it follows that  $\dnorm{\bm V} < 1$, i.e., the \Rwarp{} model is subluminal.
Condition $(iv)$, however, is not mentioned by \citet{Alcubierre_1994_warp}, but \citet{Natario_2002_warp} added this condition later explicitly, i.e., condition~\eqref{cond: Natario}.
\end{rem}
As it is somewhat clear from the above, one can show that it is impossible to ``construct'' a globally hyperbolic superluminal \Rwarp{} model. 
We adopt the definitions of an \emph{eternal} and a \emph{dynamic} warp drive introduced by \citet{Liberati_2009_semiclassical} to make precise what we mean by constructing a warp drive.
\begin{Def}[Eternal and dynamic warp-drive spacetimes]
An \emph{eternal} warp drive is the one with a time-independent $\bm V_s$ (hence $\bm V$), whereas a \emph{dynamic} warp drive has a time-dependent $\bm V_s(t)$ in such a way that it starts from zero and reaches some final value by introducing a ``switching factor,'' say, $S_w(t)$.
The latter is equivalent to saying that one constructs a warp drive.
\end{Def}
\begin{theorem}[Global hyperbolicity of \Rwarp{} models]
\label{thm:globhyp}
It is impossible to construct a superluminal \Rwarp{} model that is globally hyperbolic.
\end{theorem}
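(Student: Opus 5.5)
The plan is to use the definition of a dynamic (constructed) warp drive together with the causal structure of the \Rwarp{} metric \eqref{eq:Rwarp_metric}. Constructing a superluminal \Rwarp{} means introducing a switching factor $S_w(t)$ such that $\dnorm{\bm V_s(t)}$ starts at zero, where the spacetime is subluminal and causally regular, and later exceeds one, as in \Cref{def:BubbleGenerator}. The argument should show that any such spacetime fails to be globally hyperbolic. Since \Cref{thm: Choquet} gives only sufficient conditions, its failure does not settle the question, so I will argue directly from the definition of a Cauchy surface.

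\textbf{Step 1.} First, I locate the horizon. For \eqref{eq:Rwarp_metric} we have $g(\bm\p_t,\bm\p_t) = -(1-\dnorm{\bm V}^2)$. Along the direction of motion, the radial null rays satisfy $\diff x/\diff t = V \pm 1$. When $\dnorm{\bm V_s}>1$, the exterior value $\bm V = \bm V_s$ of \Cref{def:WD} holds for $\dnorm{\bm x}\gg R$, whereas inside the bubble $\bm V=\bm 0$. By continuity, there are therefore surfaces in front of and behind the bubble where $\dnorm{\bm V}=1$, namely $1-\dnorm{\bm V}^2=0$. At these surfaces the forward-directed null ray $\diff x/\diff t = V-1$ has zero coordinate speed relative to the bubble. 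These are the black and white horizons of \Cref{rem:GHconditionsRwarp}, and condition $(i)$ of \Cref{thm: Choquet} breaks down there.

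\textbf{Step 2.} Next, I show that in the dynamic case these horizons form at finite time $t_*$, when $\dnorm{\bm V_s(t_*)}=1$. I then argue that the interior of the bubble becomes causally disconnected from the region in front of it. Concretely, I construct an inextendible causal curve, a null generator of the front horizon. For $t>t_*$ it stays at the fixed comoving position where $V=1$, and for $t<t_*$ it is continued into the past. I need to show that this curve either fails to meet a chosen slice $\Sigma_{t_0}$, or that some inextendible causal curve lies entirely in the future of $\Sigma_{t_0}$ without crossing it. Either outcome contradicts the existence of a Cauchy surface. If that proves awkward, I will use the alternative criterion that $J^+(p)\cap J^-(q)$ must be compact for global hyperbolicity. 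I would then exhibit points $p$ inside the bubble and $q$ ahead of it, after $t_*$, for which null geodesics accumulate along the horizon, so that compactness fails.

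\textbf{Step 3.} Finally, I would close the argument by combining this with \Cref{rem:GHconditionsRwarp}. If $\bm V_s$ is to be switched on to a superluminal value, then global hyperbolicity would require condition $(i)$ to hold globally. That in turn forces $\dnorm{\bm V}<1$, which is a contradiction. The only remaining alternative is to restrict to the domain of outer communication, which excludes the bubble interior and so does not give a warp drive in the sense of \Cref{def:WD}.

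The main obstacle is Step 2. It requires a rigorous failure of global hyperbolicity, not just the failure of a sufficient condition, and explicit control of null geodesics near a dynamically forming horizon. I would first treat the one-dimensional Alcubierre profile in the $(t,x)$ plane, where the null equation $\diff x/\diff t=V_sS_w(t)(1-f)\pm1$ can be analysed as an ODE. I would then argue that the general three-dimensional case contains this behaviour along the axis of motion.
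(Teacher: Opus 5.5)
Your Step~2 is more than the hard part: it cannot be carried out. The direct argument from the definition of a Cauchy surface, which you propose to use, gives the opposite conclusion.

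For \eqref{eq:Rwarp_metric} one has $g^{00}=-1$, so $\diffb t$ is timelike and $t$ increases strictly along every future-directed causal curve. Parametrize such a curve by $t$. Causality gives $\dnorm{\diff\bm x/\diff t-\bm V}\le 1$, hence $\dnorm{\diff\bm x/\diff t}\le 1+C_{ab}$ on every slab $a\le t\le b$, where $C_{ab}:=\sup_{[a,b]\times\R^3}\dnorm{\bm V}$. For a dynamic bubble $V^i=V_s(t)[f(r_s)-1]\delta^i_1$ with continuous $V_s(t)$, or any shift obeying \eqref{cond: Natario}, $C_{ab}$ is finite. An inextendible causal curve therefore cannot stop at finite $t$, since it would acquire an endpoint. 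It meets every slice $\{t=t_0\}$ exactly once.

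So every slice is a Cauchy surface, whether the bubble is eternal or dynamic and whether or not $\dnorm{\bm V_s(t)}$ ever exceeds $1$. Consistently, $J^+(p)\cap J^-(q)$ lies in $[t_p,t_q]\times\bar B$, with $\bar B$ a closed coordinate ball of radius $(t_q-t_p)(1+C_{t_pt_q})$ about $\bm x_p$. Concretely:
\begin{itemize}
\item The front-horizon generator you want to use is defined for all $t\in\R$ and crosses every $\Sigma_{t_0}$.
\item The accumulation of null rays at the white horizon happens inside compact sets. It is the infinite-blueshift mechanism behind the instability of \cite{Liberati_2009_semiclassical}, not a loss of global hyperbolicity.
\item The surfaces $\dnorm{\bm V}=1$ of Step~1 are where $\bm\p_t$ becomes null. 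The horizons there are horizons for the bubble interior: the crew cannot signal or control the front wall, as in \cite{1995_Krasnikov_Hyperfast,Everett_1997_subway}. They are not Cauchy horizons of the $\{t=\const\}$ foliation.
\end{itemize}

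Step~3 cannot repair this, for two reasons.
\begin{itemize}
\item It uses the converse of \Cref{thm: Choquet}, which gives only sufficient conditions, as you note yourself at the start.
\item It reads condition $(i)$ as $\hnorm{\bm N}<N$. But ``time-oriented by increasing $t$'' only requires $\diffb t$ to be timelike. That holds for every ADM metric with $N>0$, since $g^{00}=-N^{-2}$. If $(i)$ forced $\hnorm{\bm N}<N$, condition $(iv)$ would be redundant, as \Cref{rem:GHconditions} itself observes.
\end{itemize}
A bounded superluminal shift satisfies all four conditions, in agreement with the speed bound above.

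Your outline is the same as the paper's: subluminal implies global hyperbolicity via \Cref{thm: Choquet}, and superluminal implies horizons and hence no Cauchy surface. The paper does not prove the decisive step either. It takes the appearance of Cauchy horizons in the dynamic case from \cite{Liberati_2009_semiclassical,1998_Low_limits} and relies on the same reading of condition $(i)$. Following it would not close your gap.

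Whenever $V_s(t)$ is bounded on compact time intervals, the failure of global hyperbolicity you aim for in Step~2 does not occur. A provable statement along these lines has to concern something else. One option is that, once $\dnorm{\bm V_s}>1$, the bubble interior cannot causally influence the front wall. Another is that the comoving frame $\bm\p_t$ stops being timelike, which is what the paper's condition $(i)$ actually captures.
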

\begin{rem}
Although \citet{Liberati_2009_semiclassical} defined $S_w(t)$ to be $C^0$, another (perhaps more desired) possibility is to opt for a step function as a switching factor that obviously provides more problems. However, as we shall show in the proof below, even for a $C^0$ switching factor, superluminal \Rwarp{} models cannot be constructed if one requires them to be globally hyperbolic at the same time.
\end{rem}

\begin{proof}[Proof of \Cref{thm:globhyp}]
If the given \Rwarp{} model is subluminal, then, by \Cref{rem:GHconditionsRwarp}, it satisfies all conditions of \Cref{thm: Choquet}, hence it is globally hyperbolic. Conversely, if it is globally hyperbolic in the given foliation, then it cannot contain horizons which is only the case if it is subluminal.
If it is superluminal, as mentioned in \Cref{rem:GHconditionsRwarp}, it allows for two horizons. Now, if it is an eternal \Rwarp{} model, then no Cauchy horizon arises (cf.~\cite{Liberati_2009_semiclassical}), but only the exterior region is globally hyperbolic. In any case, an eternal \Rwarp{} model, as the name suggests, are not constructed, hence we are left with the possibility of construction of a superluminal dynamic \Rwarp{} model. In this case, Cauchy horizons appear (cf.~\cite{Liberati_2009_semiclassical}; see also \cite{1998_Low_limits}), so that not even the exterior (the DOC) is globally hyperbolic. This is independent of the regularity of the switching factor $S_w(t)$. We hence conclude that a globally hyperbolic \Rwarp{} model is necessarily subluminal.
\end{proof}
\begin{rem}\label{rem:GHdeter}
An implication of \Cref{thm:globhyp} is that one may omit the global hyperbolicity when assuming a reasonable hypothesis for the construction of warp drive as intended.
This, from a philosophical point of view, makes sense; indeed, if one could manipulate the spacetime however one wants, then the theory could not be fully deterministic (see, e.g., \cite{2021_Smeenk_DeterminismGR} for a discussion on determinism in GR).
\end{rem}
\begin{rem}
Our result is, in a sense, complementary to, but distinct from that by \citet{1998_Low_limits}, who in an interesting work  utilizes the initial-value problem to address the challenge of modeling \emph{agency} within a deterministic framework by defining a ``decision'' as a localized modification of initial data on a subset of a future Cauchy surface.
He then proves that the spacetime region outside the \emph{domain of dependence} of this decision remains isometric to the original, i.e., even if an agent has the ``freedom'' to choose (or change) the initial data at a point $p \in \mfd$, that choice (or change) is strictly bounded by the domain of dependence of that slice, hence, the ``decision'' cannot influence any part of the Universe faster than the speed of light. Consequently, Low concludes in \cite[Theorem 4.2]{1998_Low_limits} that it is impossible to reach a destination any sooner than light would allow if the matter involved is ``physically reasonable''---a matter field defined as satisfying a \emph{symmetric hyperbolic system} and the dominant energy condition (DEC). Therefore, even if an observer can ``choose'' initial conditions, the deterministic evolution of nonexotic matter ensures that a warp drive cannot bypass the universal speed limit (associated to the ray cone of the symmetric hyperbolic system, or the light cone; see \Cref{rem:Geroch} for further discussion). Although our result for \Rwarp{} may seem to be more general, since we did not assume any energy conditions, \citet{1998_Low_limits} proved his result for any notion of warp drive which does not have any restriction except from energy conditions.
\end{rem}
\begin{rem}
    \Cref{thm:globhyp} implies that either $\bm V$ (hence $\bm V_s(t)$) should be already (i.e., eternally) superluminal (tachyonic) before ``turning on'' the warp drive, or it can never be superluminal if one assumes the global hyperbolicity. This was first pointed out by \citet{1995_Krasnikov_Hyperfast} using causality theory, and also somehow highlighted, but only from a qualitative perspective, by \citet{1998_Coule_NoWarp}. In addition, following the work by \citet{1995_Krasnikov_Hyperfast}, \citet{Everett_1997_subway} show that a spaceship crew cannot create or control a warp bubble following a ``simple argument'': since the bubble's front edge is spacelike separated from the ship's center, hence any signal or action taken by the captain is restricted to their future light cone, because a forward-emitted photon eventually reaches a point where its speed equals the bubble's velocity, meaning it remains at rest relative to the bubble and never reaches the front edge. Consequently, the crew has no way to create or control it, meaning the bubble would have to be constructed in advance by an external observer whose own light cone already covers the entire intended path. See the following Error.
\end{rem}
\begin{error}\label{er:imposs_GH}
The facts mentioned in the previous Remark contradict the assumption of possibility of ``building'' a warp drive starting from an ``approximately Minkowski spacetime'' reflected, for example, in the works by \citet{2021_Lentz_hyperfast, 2021_Lentz_breakingbarrier}, and \citet{2024_Helmerich_Bobrick_WarpFactory}, if one insists on the global hyperbolicity of the given spacetime.
\end{error}
\begin{rem}
\Cref{er:imposs_GH} is an example of violation of \Cref{principle}; a concept is often used by supporters of physical warp drives without exploring why, and what exactly the consequences are.
\end{rem}
%

\subsection{Asymptotic flatness}
\label{sec:asymptflatADM}

In the literature, warp-drive spacetimes are assumed to fulfill \ref{R_asym}, i.e., to be asymptotically flat. However, this is occasionally assumed without precise investigation which led to some misunderstandings and wrong results.
Thus, we provide a simplified and specific definition of asymptotically flat spacetimes, sufficient for our purposes, and refer the reader to, e.g., \cite{York_1978_kinematics, Straumann_2013_GR, 2012_Gourgoulhon_formalism, Barzegar_2017_energy, 2013_Chrusciel_energy} for further details and subtleties. 

\begin{Def}[Asymptotically flat slices]\label{def: asympt. flatness}
Let $(\mfd, \metg)$ be a Lorentzian manifold and $\Sigma$ a complete oriented 3-dimensional hypersurface. Then, $\Sigma$ is said to be asymptotically flat iff the following holds
\begin{enumerate}
    \item there is a compact set $\CC \subset \Sigma$ such that $\Sigma \setminus \CC$ is diffeomorphic to $\R^3 \setminus B(\mathscr{R})$ where $B(\mathscr{R})$ is a coordinate ball of radius $\mathscr{R} > 0$,
    \item in local standard coordinates $(x^i)$ on $\Sigma \setminus \CC$ obtained from $\R^3 \setminus B(\mathscr{R})$, the components of the metric on $\Sigma \setminus \CC$ satisfy the following falloff conditions
    \begin{equation*}
        h_{ij}
        =
        \delta_{ij}
        +
        \CO(r^{-1})
        \,,
        \quad
        \p_k h_{ij}
        =
        \CO(r^{-2})
        \,,
    \end{equation*}
    \item in the same coordinates, the components of the second fundamental form (extrinsic curvature) of $\Sigma$ on $\Sigma \setminus \CC$ satisfy the following falloff conditions
    \begin{equation*}
        \CK_{ij}
        =
        \CO(r^{-2})
        \,,
        \quad
        \p_k \CK_{ij}
        =
        \CO(r^{-3})
        \,,
    \end{equation*}
\end{enumerate}
when $r \rightarrow + \infty$, where $r := \dnorm{\bm x}$.
\end{Def}
\begin{rem}\label{rem:asymflatRwarpSigma}
    Since the asymptotic flatness for \Rwarp{} models are always discussed with regard to the $\{t = \const.\}$-foliations in \eqref{eq:Rwarp_metric} which are also taken to be Cauchy surfaces, we can reformulate \Cref{def: asympt. flatness} by considering asymptotically flat \emph{spacetimes}, i.e., spacetimes with $\mfd = \Sigma_t \times I$ and $I = (t_0, \infty)$ where $(\Sigma_t, \meth)$ are complete 3-dimensional Riemannian manifolds for all $t \in I$, and replace $\Sigma$ by $\Sigma_t$. This definition, however, contains already global hyperbolicity, whereas \Cref{def: asympt. flatness} does not presume it. Note that, in a spacetime sense, we further have $\mathcal{L}_n K_{ij} = \CO(r^{-3})$ in the same coordinate system.
\end{rem}
Based on \Cref{rem:asymflatRwarpSigma}, hereafter whenever we refer to an \Rwarp{} model, we mean that the $\{t = \const.\}$-slices are asymptotically flat in the sense prescribed in \Cref{def: asympt. flatness}.
\begin{rem}\label{rem:falloff_alc}
The \alcub{} spacetime (and any other warp-drive spacetime with \eqref{eq:AlcShape}) is asymptotically flat; indeed, for the \alcub{} shape function \eqref{eq:AlcShape} and its derivative we have $f, f' \in \CO(e^{-2r})$, which in turn implies $\CK_{ij}, \p_k \CK_{ij}  \in \CO(e^{-2r})$ which certainly decay faster than required.
\end{rem}
The ignorance of the precise definition of asymptotic flatness  \Cref{def: asympt. flatness} has some serious consequences which we will mention in conjunction with the errors regarding the notion of the ADM energy (see Errors~\ref{err:Schustermain} and \ref{er:VisserShoshanyTotalMass}).
Asymptotic flatness puts some strong restrictions on a given \Rwarp{} model. For example, as we will show in \Cref{thm: adm mass of natario}, it dictates \Rwarp{} models to have vanishing ADM energy. Furthermore, we prove in the following that for the coordinate vorticity-free \Rwarp{} models and the shear-free \Rwarp{} models, introduced in \Cref{sec:freevortwarp} and \Cref{sec:shearfree}, respectively, the asymptotic flatness forces them to be Minkowski.
\begin{theorem}\label{thm:shearfree}
    Shear-free \Rwarp{} models are Minkowski.
\end{theorem}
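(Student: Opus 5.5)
The proof uses the observation from \Cref{sec:shearfree} that shear-free \Rwarp{} models have $\CA_{ij}=0$, so $\CK_{ij} = \tfrac13 \CK\,\delta_{ij}$. Combined with the momentum constraint \eqref{eq:moment_const} and \eqref{eq:divfreeJ}, this makes $\CK$ harmonic on each slice; asymptotic flatness then forces $\CK \equiv 0$, and hence $\bm \CK = \bm 0$.

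Step one is to establish harmonicity. With $D_i=\p_i$ and $\CK_{ij}=\tfrac13\CK\delta_{ij}$, the momentum constraint gives
\begin{equation*}
\p_k \CK^k{}_i - \p_i \CK = -\tfrac{2}{3}\p_i \CK = 8\pi G J_i .
\end{equation*}
Taking the divergence and using \eqref{eq:divfreeJ}, $\p_i J^i = 0$, yields $\Delta_{\bm\delta}\CK = 0$ on every slice $\Sigma_t \cong \R^3$.

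Step two applies asymptotic flatness. By \Cref{def: asympt. flatness}, $\CK_{ij} = \CO(r^{-2})$, so $\CK \to 0$ as $r\to\infty$. A harmonic function on all of $\R^3$ that tends to zero at infinity is bounded, hence constant by Liouville's theorem, and that constant is zero (alternatively, apply the maximum principle on large balls). Therefore $\CK\equiv 0$, so $\CK_{ij}=0$ on every slice for every $t$, and consequently $\bm J = \bm 0$.

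Step three concludes flatness of the spacetime. Because $N=1$, $h_{ij}=\delta_{ij}$ and $\CK_{ij}=0$, the Gauss, Codazzi and Ricci equations show that all components of the spacetime Riemann tensor vanish. Concretely, the spatial-spatial parts are $\CR_{ijkl}+\CK\CK$ terms, the mixed parts are $D\CK$ terms, and the time-time parts are $\CL_{\bm n}\CK_{ij}$ plus $\CK\CK$ plus $D_iD_jN$ terms, all of which are zero. Hence $\metg$ is flat on $\mfd\cong\R^4$. Consistency with the field equations follows from \eqref{eq:Rwarp_Ham_const}, which gives $E=0$, and \eqref{eq:evolIRwarp}, which gives $P=E=0$. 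The conclusion "Minkowski" rather than merely "locally flat" relies on the trivial topology $\R^4$ together with the completeness/global hyperbolicity assumed in \Cref{rem:asymflatRwarpSigma}.

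The main obstacle is the Liouville step. One must make sure that $\CK$ is smooth and harmonic on the whole of $\R^3$ (no interior singularities, which holds since $\bm V$ is smooth) and that the decay condition in \Cref{def: asympt. flatness} indeed applies to each slice, so the argument covers every $t$. A secondary subtlety is that "Minkowski" might also require identifying $\bm V$ itself. Here $\p_{(i}V_{j)}=0$ makes $\bm V$ a Killing field of flat $\R^3$ (a time-dependent rigid motion), and the resulting metric is flat and globally isometric to Minkowski via a change of coordinates.
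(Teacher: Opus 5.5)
Your proof is correct and follows the paper's argument: the momentum constraint gives $\bm J \propto \bm\p\CK$, the divergence-free property \eqref{eq:divfreeJ} then makes $\CK$ harmonic, and asymptotic flatness together with Liouville's theorem forces $\CK\equiv 0$, hence $\bm\CK=\bm 0$. The paper closes by showing that $E$, $P$ and $\Pi_{ij}$ vanish via \eqref{eq:evolEqRwarp} and \eqref{eq:evolShearRwarp}, whereas you close with the Gauss--Codazzi--Ricci step and the observation that $\bm V$ is a time-dependent Euclidean Killing field, which justifies global flatness more explicitly; note that it is this rigid-motion argument, not trivial topology plus global hyperbolicity (a flat Rindler wedge has both), that actually yields Minkowski.
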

\begin{proof}
    If $\CA_{ij} = 0$, we have $\CK_{ij} = \CK \delta_{ij}/3$, and from the momentum constraint equation \eqref{eq:moment_const} we find $J_i = - 12 \pi G \p_i \CK$, which in turn in combination with \eqref{eq:divfreeJ} yields  $\Delta_{\bm \delta} \CK = 0$. Now, since the \Rwarp{} models are asymptotically flat by definition, then $\CK$ decays as $\CO(r^{-2})$ on $\R^3$ (see \Cref{def: asympt. flatness} below), which means that it is bounded on $\R^3$ as $\CK$ is continuous. Now, by Liouville's Theorem (cf., e.g., \cite[Section~2.2.3]{2010_Evans_PDE}) we conclude that $\CK$ is (spatially) constant (hence identically zero because it decays to zero, cf.~also \Cref{thm:harmonicwarp} below). Therefore, $\bm \CK = \bm 0$, and all its principal invariants vanish, resulting in $E=0$, and by \eqref{eq:evolEqRwarp} in $P=0$. Finally, from \eqref{eq:evolShearRwarp} it follows $\Pi_{ij} = 0$, which finishes the proof.
\end{proof}
\begin{theorem}\label{thm:harmonicwarp}
Let $(\mfd, \metg)$ be an \Rwarp{} model. If the shift vector is the gradient of a harmonic function on the spatial slices, then $(\mfd, \metg)$ is Minkowski.
\end{theorem}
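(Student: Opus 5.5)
We argue as in the proof of \Cref{thm:shearfree}: reduce the statement to a vanishing result for a harmonic object with controlled decay, and then use Liouville's theorem together with asymptotic flatness (\Cref{def: asympt. flatness}).

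\textbf{Step 1: identify $\bm \CK$ with the Hessian of $\phi$.} By hypothesis $V_i = \p_i \phi$ with $\Delta_{\bm \delta}\phi = 0$ on each slice $\Sigma_t \cong \R^3$. From \eqref{eq: K_I_II natario} we get
\begin{equation*}
\CK_{ij} = -\p_i\p_j\phi .
\end{equation*}
The matrix $\p_i\p_j\phi$ is symmetric, so $\bm\Omega=\bm 0$. Its trace gives $\Ik = -\Delta_{\bm\delta}\phi = 0$. Hence the model is both coordinate vorticity-free and zero-expansion.

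\textbf{Step 2: each component $\CK_{ij}$ is harmonic and decays.} Partial derivatives commute with the flat Laplacian, so
\begin{equation*}
\Delta_{\bm\delta}\CK_{ij} = -\p_i\p_j\Delta_{\bm\delta}\phi = 0 .
\end{equation*}
Thus every $\CK_{ij}$ is a smooth harmonic function on $\R^3$. Asymptotic flatness of the $\{t=\const.\}$ slices (\Cref{rem:asymflatRwarpSigma}) gives $\CK_{ij}=\CO(r^{-2})$ as $r\to\infty$. Continuity on the compact ball then makes each $\CK_{ij}$ bounded on $\R^3$. By Liouville's theorem each $\CK_{ij}$ is constant on every slice. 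Since it tends to zero at infinity, $\CK_{ij}\equiv 0$ for all $t$.

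This is the one step needing care. The alternative is to apply Liouville to $\phi$ or to $V_i$ directly, but asymptotic flatness constrains only $\CK_{ij}$, not $\bm V$ itself (the shift tends to $-\bm V_s$, not zero). Working at the level of $\CK_{ij}$ avoids needing any growth control on $\phi$, which is why we route the argument through the Hessian.

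\textbf{Step 3: conclude vacuum and flatness.} With $\bm\CK=\bm 0$ all principal invariants vanish, so \eqref{eq:Rwarp_Ham_const} gives $E=0$ and \eqref{eq:Rwarp_mom_const} gives $\bm J=\bm 0$ (alternatively, $\bm\Omega=\bm 0$ already forces $\bm J=\bm 0$). Equation \eqref{eq:evolIRwarp} then gives $P=E=0$, and \eqref{eq:evolShearRwarp} gives $\Pi_{ij}=0$. So $\bm T=\bm 0$.

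The spacetime curvature also vanishes. The slices are flat, $N=1$ and $\bm\CK=\bm 0$, so the Gauss, Codazzi and Ricci equations make every component of the Riemann tensor zero. Equivalently, $\CK_{ij}=-\p_i\p_j\phi=0$ makes $\bm V$ spatially constant on each slice. A time-dependent translation of the Cartesian coordinates then brings \eqref{eq:Rwarp_metric} to the Minkowski form. Hence $(\mfd,\metg)$ is Minkowski.
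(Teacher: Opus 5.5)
Your proof is correct, but it takes a different route from the paper. The paper applies Liouville's theorem to the potential $\phi$ itself. It asserts that asymptotic flatness makes $\phi$ decay and hence be bounded on $\R^3$, concludes $\phi\equiv 0$, and therefore $V_i=0$. You instead apply Liouville to the Hessian components $\CK_{ij}=-\p_i\p_j\phi$. These are harmonic, and they are exactly the quantities whose $\CO(r^{-2})$ falloff \Cref{def: asympt. flatness} prescribes. You then conclude only that $\bm V$ is spatially constant, and you remove it by the time-dependent translation $y^i=x^i-\int V^i\,\diff t$ (or equivalently via Gauss--Codazzi--Ricci). You also verify $\bm T=\bm 0$ explicitly, which the paper leaves implicit.

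The paper's version is shorter, but its key step is not supplied by its own definitions. Asymptotic flatness constrains $h_{ij}$ and $\CK_{ij}$, not $\phi$. Moreover, by \Cref{def:WD} one has $\bm V\to-\bm V_s$ far from the bubble. So for a moving bubble $\phi\sim-\bm V_s\cdot\bm x$ grows linearly and is not bounded. Indeed $\phi=-\bm V_s(t)\cdot\bm x$ is harmonic with $V_i\neq 0$, so the paper's conclusion ``$V_i=0$'' is too strong. The correct conclusion is yours: $\p_iV_j=0$, followed by a change of coordinates. Your route therefore covers the case $\bm V_s\neq\bm 0$, which is the case of actual interest.

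A minor sign slip, with no effect on the argument: it is $\bm V$ that tends to $-\bm V_s$, while the shift $\bm N=-\bm V$ tends to $+\bm V_s$.
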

\begin{proof}
We start with \eqref{eq:gradientshift}, i.e., $V_i = \p_i \phi$, hence $\CK_{ij} = - \p_i \p_j \phi$. Since $\phi$ is harmonic on the spatial slices, the mean curvature vanishes, i.e., $\CK = \Delta_{\bm \delta} \phi = 0$. Since the slices are asymptotically flat $\phi$ decays fast enough, hence it is bounded on $\R^3$. By Liouville's Theorem we conclude that $\phi$ is zero (cf.~\Cref{thm:shearfree}), resulting in $V_i = 0$, hence finishing the proof. 
\end{proof}
\begin{error}\label{er:LentzElliptic}
\citet[Section~3]{2021_Lentz_breakingbarrier} argues that an \Rwarp{} model with an ``elliptic'' shift vector is considered in the literature, i.e., a warp-drive model with a shift vector of the form \eqref{eq:gradientshift} that satisfies $\Delta_{\bm \delta} \phi = 0$. But, \Cref{thm:harmonicwarp} shows that such a spacetime is identically Minkowski, and not a zero-expansion \natario{} warp drive.
\end{error}
We finish this section by demonstrating a misunderstanding which appeared several times in the warp-drive literature.
\begin{error}\label{er:truncation}
\citet{2021_Bobrick_physicalwarp} talk about the ``truncation'' of the gravitational field in the \alcub{} or \natario{} spacetimes,  which, as criticized correctly by \citet[Appendix~A]{2021_Visser_genericwarp}, is wrong since the shape function \eqref{eq:AlcShape} does not have a compact support, but it falls off very fast (cf.~\Cref{rem:falloff_alc}). \citet[Footnote~9]{2024_Helmerich_Bobrick_WarpFactory} point out this mistake, without acknowledging the critique by \citet{2021_Visser_genericwarp}, or referring to the mistake done by \citet{2021_Bobrick_physicalwarp}. The same misunderstanding entered the work by \citet{2021_FellHeisenberg_positive}. See also \Cref{er:truncationWEC} for a related mistake.
\end{error}
%

\subsection{Energy and mass in \Rwarp{} spacetimes}
\label{sec:notionofMassEnergy}

In this section, we shall clarify the misconception around the notion of (total) energy and ADM energy, and their role in \Rwarp{} models.

In most of the works on warp-drive spacetimes which are \Rwarp{} models, one attributes a total energy (or mass) to  these spacetimes, using a naive, Newtonian-like definition for the energy (in the given foliation)
\begin{equation}\label{eq:Etot}
    \Etot
    :=
    \int_{\Sigma_t} T_{\mnu} n^\mu n^\nu \diff^3 x
    =
    \int_{\Sigma_t} E \diff^3 x
\,,
\end{equation}
with $E$ being the Eulerian energy density defined in \eqref{eq:eulerianenergy}.
Here, we argue that this notion of energy (mass) is ambiguous, or at best is nonunique, for \Rwarp{} models and therefore \emph{cannot} fulfill the role it is meant to play in the literature, i.e., it does \emph{not} necessarily represent the ``total mass'' of the warp bubble. One of these ambiguities is related to the misunderstanding about the ADM energy. In the warp-drive literature there are works which speculate about the ADM energy for \Rwarp{} models, but, as it turns out easily by just applying the definitions properly, it is zero for such spacetimes, whereas $\Etot$  is not, in general. However, $\Etot$ in \eqref{eq:Etot} is occasionally taken to be the same as the ADM energy (cf.~\Cref{rem:ADMmassNewton}). Therefore, to present the results appropriately, we will start with the notion of the ADM quantities (hence the ADM energy).

We shall begin with a basic but important fact: the ADM energy is defined only for asymptotically flat spacetimes. Unfortunately, this seemingly trivial and basic fact has often been misunderstood in the literature on warp-drive spacetimes. 

For this purpose, we define the ADM energy, or more generally, the ADM $4$-momentum, thus, in particular, distinguishing between the ADM mass and the ADM energy (cf.~\cite{2008_ADM_republication}).\footnote{Sometimes in the literature the ADM energy and ADM mass are used interchangeably. We will stick to the precise definition repeated here.}
\begin{Def}\label{def:ADMquantities}
Let $(\mfd, \metg)$ be asymptotically flat in the sense it is defined in \Cref{def: asympt. flatness}. Then, the total energy $E_{\mathrm{ADM}}$ and momentum $\bm P_{\mathrm{ADM}}$ contained in $\Sigma_t$ is encoded in the ADM $4$-momentum 
\begin{equation}
    \bm p_{\mathrm{ADM}}
    =
    \left( E_{\mathrm{ADM}}, \bm P_{\mathrm{ADM}} \right)
\,,
\end{equation}
with
\begin{equation}
    E_{\mathrm{ADM}}
    :=
    \frac{1}{16 \pi G} \lim_{r \rightarrow \infty} \int_{{\rm S}_r} 
    \left(
        \p^i h_{ij}
        -
        \p_j h
    \right)
    \frac{x^j}{r} \diff {\rm S}
\,, \label{eq: ADM enegy}
\end{equation}
and
\begin{equation}
    \left( \bm P_{\mathrm{ADM}} \right)^i
    :=
    \frac{1}{8 \pi G} \lim_{r \rightarrow \infty} \int_{{\rm S}_r}
    \left(
        \tensor{\CK}{^i_j}
        -
        \CK \tensor{\delta}{^i_j}
    \right)
    \frac{x^j}{r} \diff {\rm S}
\,, \label{eq: ADM momentum}
\end{equation}
where $\p^i$ is defined as in \eqref{eq:partialup}, $h := \delta^{ij} h_{ij}$, and ${\rm S}_r$ denotes a sphere of radius $r$ on $\R^3 \setminus B(\mathscr{R})$ and $\diff {\rm S}$ is the volume element on ${\rm S}_r$ induced by the flat metric, all for the coordinates $(x^i)$ defined in \Cref{def: asympt. flatness}. Moreover, the ADM mass is
\begin{equation}
    m_{\text{ADM}}^2
    :=
    \Eadm^2
    -
    |\bm \Padm|^2_{\bm \delta}
\,,
\end{equation}
\end{Def}

We continue with a trivial but often overlooked result.
\begin{theorem}\label{thm: adm mass of natario}
Consider an \Rwarp{} model spacetime. Then,
\begin{enumerate}
    \item the ADM energy vanishes, i.e., $\Eadm = 0$,
    \item the ADM momentum does not vanish in general, i.e., $\bm \Padm \neq \bm 0$, hence in combination of the previous fact we have $\bm p_{\mathrm{ADM}} = \left( 0, \bm \Padm \right)$,
    \item if the \Rwarp{} model spacetime reduces to the \alcub{} one, then both the ADM energy and ADM momentum vanish, i.e., $\bm p_{\mathrm{ADM}}^{\mathrm{Alc.}} =  \bm 0$.
\end{enumerate}
\end{theorem}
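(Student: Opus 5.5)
Each of the three claims follows by inserting the explicit \Rwarp{} data into \Cref{def:ADMquantities}, using the Cartesian coordinates of \eqref{eq:Rwarp_metric}. By \Cref{rem:asymflatRwarpSigma}, these are already the asymptotic coordinates required by \Cref{def: asympt. flatness}.

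\emph{(i) ADM energy.} For an \Rwarp{} model we have $h_{ij}=\delta_{ij}$ exactly on every slice $\Sigma_t$. Hence $\p_k h_{ij}=0$ and $h=\delta^{ij}h_{ij}=3$ is constant, so the integrand $\p^i h_{ij}-\p_j h$ in \eqref{eq: ADM enegy} vanishes identically. Every surface integral over ${\rm S}_r$ is therefore zero, and so is the limit: $\Eadm=0$. The definition applies because the \Rwarp{} slices are asymptotically flat by \Cref{def:Rwarp}. We should also note that $\Eadm$ is independent of the asymptotic chart, so no other admissible coordinates can give a different value. I will stress, as the paper does, that this holds no matter what $E$ is in the interior; this is what separates $\Eadm$ from $\Etot$ in \eqref{eq:Etot}.

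\emph{(ii) ADM momentum.} With $N=1$ we have $\CK_{ij}=-\p_{(i}V_{j)}$ from \eqref{eq: K_I_II natario}. The quantity
\[
\left(\bm P_{\rm ADM}\right)^i=\frac{1}{8\pi G}\lim_{r\to\infty}\int_{{\rm S}_r}\left(\tensor{\CK}{^i_j}-\CK\tensor{\delta}{^i_j}\right)\frac{x^j}{r}\,\diff{\rm S}
\]
is a flux of $\CK_{ij}=\CO(r^{-2})$ through spheres of area $\sim r^2$, so it is finite but generically nonzero. To prove ``does not vanish in general'' it suffices to give one admissible example. Take a shift whose leading term is $V_i\sim c_{ij}x^j/r^3$ plus a gradient part, or more simply $V_i=\p_i\phi+\ldots$ with $\phi\sim a/r$ adjusted so that the flux picks up a nonzero $x^i$-component. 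Alternatively, use the momentum constraint \eqref{eq:Rwarp_mom_const} together with the divergence theorem: the limit equals $\int_{\Sigma_t}J^i\,\diff^3x$ up to the factor $(8\pi G)^{-1}$ and sign, since $\p_j(\tensor{\CK}{^i_j}-\CK\delta^i_j)=8\pi G J^i$. So any asymptotically flat \Rwarp{} configuration with $\int J^i\neq 0$ suffices, and such a $J$ can be chosen freely subject to \eqref{eq:divfreeJ} and the required decay. The obstacle I expect is producing an explicit admissible $\bm V$ that keeps $\CK_{ij}=\CO(r^{-2})$ and $\p_k\CK_{ij}=\CO(r^{-3})$ while giving a nonzero limit, rather than a faster-decaying one. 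I would first try a $1/r^2$-type profile, $V_i\propto\epsilon_{ijk}\p_j(x^k/r^3)$-like or dipole-like, and check that the flux is nonzero.

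\emph{(iii) Alcubierre case.} By \Cref{rem:falloff_alc}, $\CK_{ij}=\CO(e^{-2r})$. The integrand in \eqref{eq: ADM momentum} is then $\CO(e^{-2r})$ while the area of ${\rm S}_r$ grows only like $r^2$, so the limit is zero and $\bm\Padm=\bm 0$. Combined with (i), this gives $\bm p^{\rm Alc.}_{\rm ADM}=\bm 0$.
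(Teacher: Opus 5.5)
Your parts (i) and (iii) are exactly the paper's proof. In (i), $h_{ij}=\delta_{ij}$ kills the integrand of \eqref{eq: ADM enegy}. In (iii), the exponential falloff of \Cref{rem:falloff_alc} kills \eqref{eq: ADM momentum}. For (ii) the paper only remarks that the answer is governed by the falloff of $\bm\CK$, which is your flux heuristic. The gap is in the concrete constructions you propose to turn that heuristic into a proof: none of them works.
\begin{itemize}
\item Every profile you name has $\bm V=\CO(r^{-2})$ (the $c_{ij}x^j/r^3$ tail, $\phi\sim a/r$, the ``$1/r^2$-type'' profiles). Hence $\CK_{ij}=\CO(r^{-3})$, and the flux through ${\rm S}_r$ is $\CO(r^{-1})\to 0$.
\item Any gradient shift gives $\bm J=\bm 0$ everywhere. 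By your own divergence argument it therefore has zero flux through every sphere.
\item The candidate $\epsilon_{ijk}\p_j(x^k/r^3)$ vanishes identically off the origin, since $x^k/r^3=-\p^k(1/r)$.
\end{itemize}
The ``choose $\bm J$ freely'' route hides the same obstruction. For divergence-free $\bm J$ one has $J^i=\p_j(x^iJ^j)$, so $\int_{B_r}J^i\,\diff^3x=\int_{{\rm S}_r}x^iJ^j\frac{x_j}{r}\,\diff{\rm S}$. Hence every $\bm J=o(r^{-3})$, in particular every compactly supported one, has vanishing total integral. A nonzero $\bm\Padm$ is thus a borderline effect: it needs $\bm J\sim r^{-3}$ and $\CK_{ij}\sim r^{-2}$ exactly, i.e.\ a shift with a non-gradient $1/r$ tail. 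Moreover, $\bm J$ is not free; it is fixed by $\bm V$ through the momentum constraint.

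A working example: take $V_i=-2e_i/r$ for $r>r_0$, with $\bm e$ a constant vector, cut off smoothly inside. You may superpose it outside an \alcub{} bubble, whose contribution vanishes by (iii), and flux is linear in $\bm V$. Then $\CK_{ij}=-(e_ix_j+e_jx_i)/r^3$ and $\CK=-2e_kx^k/r^3$, which satisfy the falloff of \Cref{def: asympt. flatness}. With $n^i:=x^i/r$ one finds
\[
\left(\tensor{\CK}{^i_j}-\CK\tensor{\delta}{^i_j}\right)\frac{x^j}{r}=\frac{(e_kn^k)\,n^i-e^i}{r^2}\,.
\]
Its integral over ${\rm S}_r$ equals $-\tfrac{8\pi}{3}e^i$ for every $r>r_0$, so $\bm\Padm=-\bm e/(3G)\neq\bm 0$. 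Consistently, $8\pi G J^i=[e^i-3(e_kn^k)n^i]/r^3$ here, which is exactly the borderline decay.

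A side remark on your word ``finite'': the falloff of \Cref{def: asympt. flatness} only makes the flux bounded, not convergent. Replacing $1/r$ by $\sin(\ln r)/r$ keeps all the falloff conditions but makes the flux oscillate like $\cos(\ln r)-\sin(\ln r)$. Existence of the limit is therefore an extra assumption implicit in \Cref{def:ADMquantities}.
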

\begin{proof}
The result for the ADM energy follows immediately by setting $h_{ij} = \delta_{ij}$ in \eqref{eq: ADM enegy}. For the ADM momentum the result depends on the falloff conditions of the second fundamental form. It is, however, zero for the special case of the Alcubierre metric due to the rapid decay of the second fundamental form (see \Cref{rem:falloff_alc}).
\end{proof}

\Cref{thm: adm mass of natario} has an important consequence which is related to the famous positive energy (mass) theorem \cite{Witten_1981_positive, Schoen_Yau_1981_positive}, and also a consequence of asymptotic flatness~(\ref{R_asym}).
\begin{theorem}[Positive energy theorem]\label{thm: positive mass theorem}
    Consider an \Rwarp{} model spacetime. Then, the positive energy theorem, which relies on the DEC being satisfied, dictates that the spacetime must be Minkowski.
\end{theorem}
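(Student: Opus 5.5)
The plan is to combine \Cref{thm: adm mass of natario} with the rigidity part of the positive energy theorem of Schoen--Yau and Witten \cite{Witten_1981_positive, Schoen_Yau_1981_positive}. The rigidity statement we need is the following. If $(\Sigma_t,\meth,\bm\CK)$ is a complete asymptotically flat initial data set in the sense of \Cref{def: asympt. flatness}, and the constraint quantities $E$ and $\bm J$ from \eqref{eq:hamilt_const}--\eqref{eq:moment_const} satisfy the dominant energy condition on the slice, $E \geq \hnorm{\bm J}$, then $\Eadm \geq \hnorm{\bm \Padm}$. Moreover, $\Eadm = 0$ holds if and only if the development is Minkowski, i.e.\ the data embed isometrically as a spacelike slice of Minkowski spacetime.

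First, I would check that an \Rwarp{} slice satisfies the hypotheses. By \Cref{rem:asymflatRwarpSigma}, the slices $\{t=\const.\}$ are complete, equal to $\R^3$ with $h_{ij}=\delta_{ij}$, and asymptotically flat. The DEC on the spacetime implies the slice condition $E\geq \hnorm{\bm J}$, since $E=T_{\mu\nu}n^\mu n^\nu$ and $-T^\mu{}_\nu n^\nu$ is future causal. This gives the inequality $\Eadm \geq |\bm\Padm|_{\bm\delta}\geq 0$.

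Next, I would invoke \Cref{thm: adm mass of natario}, which gives $\Eadm = 0$. The inequality then forces $\bm\Padm=\bm 0$, so the ADM 4-momentum vanishes. The equality case of the positive energy theorem then says the initial data $(\R^3,\bm\delta,\bm\CK)$ arise from a hypersurface in Minkowski spacetime. Finally, I would pass from the slice to the whole spacetime. Every slice is globally determined, and the \Rwarp{} spacetime is the development of these data. This step uses global hyperbolicity, which is needed for the uniqueness of the maximal development (cf.\ \Cref{thm: Choquet}, \Cref{rem:asymflatRwarpSigma}). Alternatively, one applies the argument slice by slice, so that the Riemann tensor vanishes on each $\Sigma_t$ and hence everywhere. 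Either way, $(\mfd,\metg)$ is flat, i.e.\ Minkowski.

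I expect the main obstacle to be the rigidity step, and in particular the decay rates. The standard equality case, e.g.\ Beig--Chru\'sciel or Huang--Lee for general initial data, requires decay $\CK_{ij}=\CO(r^{-q-1})$ with $q>1/2$, together with integrable $E$ and $\bm J$. \Cref{def: asympt. flatness} provides $q=1$ and $\p_k\CK_{ij}=\CO(r^{-3})$, so $E=\IIk/(8\pi G)=\CO(r^{-4})$ and $\bm J=\CO(r^{-3})$. I would verify that these are integrable and then cite the rigidity theorem in that form. If a self-contained check is preferred, I would use the Witten spinor argument. With $\Eadm=|\bm\Padm|=0$, the Witten identity yields a parallel spinor with respect to the spacetime connection along $\Sigma_t$. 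This gives a full set of covariantly constant null vectors, hence the vanishing of the Gauss--Codazzi curvature data $\CR_{ij}=0$ (trivially true here), $D_{[i}\CK_{j]k}=0$ and $\CK\CK_{ij}-\CK_{ik}\tensor{\CK}{^k_j}=0$. These data then integrate to a Minkowski slice.
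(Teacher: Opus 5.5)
Your strategy is the one the paper intends. The paper gives no separate proof: it only combines $\Eadm=0$ from \Cref{thm: adm mass of natario} with the rigidity case of the positive energy theorem. Everything up to ``the data $(\R^3,\bm\delta,\bm\CK)$ on each $\Sigma_t$ embed in Minkowski'' is fine.

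The gap is the final passage from the slices to the spacetime. Once you reduce the DEC to its slice form $E\ge\hnorm{\bm J}$, you have no control over the stresses $S_{ij}$, and neither of your two routes closes without it.

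\emph{Slice-by-slice route.} Minkowski-embeddability of the data amounts to the Gauss and Codazzi equations with vanishing ambient curvature. It therefore kills $R_{ijkl}$ and $R_{\hat n ijk}$ on $\Sigma_t$, but not $R_{\hat n i\hat n j}$. That component involves $(\p_t+\LieV)\CK_{ij}$ and hence, by \eqref{eq:evol_K_Lie}, the stresses. Knowing embeddability on every slice does not help. Concretely, take $V^i=x\,\delta^i_1$ in \eqref{eq:Rwarp_metric}, which gives $\mathrm{dS}_2\times\R^2$. Every slice carries the Minkowski-embeddable data $(\R^3,\bm\delta,-\diffb x\otimes\diffb x)$ with $E=0$ and $\bm J=\bm 0$. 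Yet the spacetime is curved, with $S_{yy}=S_{zz}=-1/(8\pi G)$. This example violates \ref{R_asym} and the DEC, so it does not contradict the theorem, but it shows your inference is invalid.

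\emph{Development route.} The claim that ``the \Rwarp{} spacetime is the development of these data'' presupposes field equations with uniqueness of developments. A metric obtained by the Synge G-method has no matter model, so $(\bm\delta,\bm\CK,E,\bm J)$ on one slice does not determine it. Moreover, obtaining global hyperbolicity from \Cref{thm: Choquet} would restrict you to the subluminal case, cf.~\Cref{rem:GHconditionsRwarp}.

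The missing step can be supplied in either of two ways.
\begin{enumerate}
\item If the DEC is taken as a spacetime condition, then in an orthonormal frame adapted to $\bm n$ it gives $|T_{\hat a\hat b}|\le T_{\hat 0\hat 0}=E$. Rigidity forces $E=0$ on every slice, so $T_{\mu\nu}\equiv 0$. In vacuum, $R_{\mu\nu}=0$ gives $R_{\hat n i\hat n j}=h^{kl}R_{kilj}$, which vanishes by the Gauss equation. Hence the Riemann tensor vanishes on each $\Sigma_t$, with no appeal to global hyperbolicity.
\item If only $E\ge\hnorm{\bm J}$ is available, as in \eqref{eq:DECineq}, use asymptotic flatness instead. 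Since $\CR_{ijkl}=0$, the Gauss equation forces ${\rm rank}\,\bm\CK\le 1$. By the Hartman--Nirenberg argument, the Codazzi equation then makes $\bm\CK$ constant along complete affine 2-planes wherever it is nonzero. The decay $\bm\CK=\CO(r^{-2})$ therefore gives $\bm\CK\equiv\bm 0$ for all $t$. Then $\bm V(t,\cdot)$ is a Euclidean Killing field, and \eqref{eq:Rwarp_metric} is the Minkowski metric in rigidly moving coordinates.
\end{enumerate}

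A minor point: $\bm J=\CO(r^{-3})$ is not integrable on $\R^3$. You do not need that decay, since the DEC itself gives $\hnorm{\bm J}\le E=\CO(r^{-4})$.
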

\begin{rem}
If one drops the DEC assumption, some serious physical ambiguities arise, as shown in \Cref{rem:imaginaryADM} (see also \Cref{cor:asymflatDEC}).
\end{rem}
We do not wish to list all misuses of the notion of the ADM mass in the literature. In the following we list some of these occurrences in terms of Errors.

\begin{error}
The left-hand side of Equation~(11.26) in a work by \citet{Alcubierre_2017_warp_basics} is clearly misleading, if not wrong. Again, the authors did not pay attention to the criteria in \Cref{def: asympt. flatness} (cf.~\Cref{thm: adm mass of natario}). This appears to originate from the work of \citet{Visser_2004_fundamental} (see also \cite{Visser_2004_linearized}), where the authors introduce the concept of ``ADM mass due to stress-energy'' (see \cite[Equation~(67)]{Visser_2004_fundamental}). This notion is not well-defined, and serves as the basis for their speculation about the ``compensation of the ADM mass of the spaceship by the ADM mass due to the stress-energy'' and the violation of the energy conditions (\cite[Section~2.1]{Visser_2004_fundamental}). Moreover, \citet[Equation~(67)]{Visser_2004_fundamental} take the ADM energy to be equal to $\Etot$ given in \eqref{eq:Etot} (see \Cref{err:ADMtotMass}). See also \Cref{er:VisserShoshanyTotalMass} for a related mistake.
\end{error}
\begin{error}\label{err:Schustermain}
The work by \citet{Visser_2023_adm} reflects several misconceptions regarding the notion of the mass and energy in GR, especially the ADM energy;\footnote{Although the subtleties of this issue were expounded in \cite[Section~(2.3)]{Visser_2023_adm}.} indeed, \citet{Visser_2023_adm} mistakenly attribute a nonvanishing ADM energy (referred to as the ADM mass by the authors) to the \natario{} warp drive by considering the ``zero-vorticity warp bubble,'' since ``this is the form most appropriate to make the appearance of notions of mass in a warp drive explicit, as it can be easily compared to Schwarzschild in Painlev\'e--Gullstrand form,'' and so they construct what they call ``Schwarzschild-based warp drive.'' However, this strategy relies on several misunderstandings that lead to wrong consequences which we list below:
\begin{enumerate}
    \item By \Cref{thm: adm mass of natario}, the ADM energy of generic asymptotically flat \natario{} warp drives (i.e., \Rwarp{} models) vanishes, and this \emph{cannot} be avoided, and it has nothing to do with the vorticity as the authors claim. This misunderstanding occurred already by the same authors in \cite[Section~VII.C]{2021_Visser_genericwarp}, where they speculated that only Alcubierre and zero-expansion \natario{} warp drives have vanishing ADM energy. Then, the parameter $M$ in \cite[Equation~(2.11)]{Visser_2023_adm} is at best ambiguous. Unless, the authors mean that the $\{t = \const.\}$-slices there are not asymptotically flat, and only after a transformation to an asymptotically flat foliation the ADM mass of the spacetime can be computed. In any case, this shows a lack of precision (see $(iv)$ below).
    
    \item The foliation associated to the Painlev\'e--Gullstrand  coordinates is not asymptotically flat (since the third condition in \Cref{def: asympt. flatness} is not satisfied),
    making the concept of ADM energy irrelevant in this context. This is a classic pitfall against which one is warned (cf., e.g., \cite[Example 8.2]{2012_Gourgoulhon_formalism} and \cite[Section~3.5]{Baumgarte_Shapiro_2010_numerical}), and it is reflected 
    in the part where authors write ``[...] ﬁniteness of the ADM mass places mild constraints on the fall-off of the metric components.''

    \item Comparing the Schwarzschild metric in the Painlev\'e--Gullstrand coordinates with that of the \natario{} warp drive is wrong if one assumes that the \natario{} warp drive is asymptotically flat.

    \item Although the ADM mass of the Schwarzschild metric in the foliation associated to the Painlev\'e--Gullstrand coordinates has no meaning, other notions of mass or energy (such as the Komar or the Misner--Sharp mass, see, e.g., \cite{Barzegar_2017_energy, 1978_Beig_ADMEnergyG00, 1964_Misner_SphericalCollapse}) can be used to infer the ADM energy of the Schwarzschild spacetime in this foliation. However, in general, this \emph{cannot} be done for an asymptotically flat \natario{} warp drive with respect to foliations that are not asymptotically flat, because the metric is, in particular, not stationary nor spherically symmetric.
    
    \item \citet[Section~5]{Visser_2023_adm} consider the Misner--Sharp mass for the Schwarzschild-based warp drive and conclude $m(r) = m [1 + \CO(r^{-n})]$ for some $n>0$, which is again wrong, because, firstly, the Misner--Sharp mass is defined only for the spherically symmetric spacetimes (cf., e.g., \cite{1996_Hayward_GravitationalEnergySpherical}, see also \cite{2010_Abreu_KodamaTimeGeometrically} by one of the authors), whereas the Schwarzschild-based warp drive is not spherically symmetric,\footnote{Although the same authors point out a similar misunderstanding by others in \cite[Appendix~A]{2021_Visser_genericwarp}, they make the same mistake here.} secondly, even if one naively computes it, it would not be of the form claimed by the authors (it also would not be invariant as it depends on coordinates), and $m$ is not ``simply'' the ADM mass.
    
    \item Finally, the use of what is supposed to be the ADM energy to show the violation of the NEC is not valid, as it is not defined here (cf.~\Cref{rem:SchusterNEC}).
\end{enumerate}
Consequently, most of the speculations involving the ADM energy in the context of \Rwarp{} models should be re-evaluated. This includes not only the work by \citet{Visser_2023_adm}, but also the ones by \citet{2024_Clough_GW}, where the authors related the ``true ADM mass'' to ``causal disconnectedness'' of the Alcubierre warp drive,  and \citet{2024_Fuchs_constant} which suffers further from a vague formulation so that one cannot decide if the speculated metric is asymptotically flat or not (see also \Cref{err:ADMtotMass}).
\end{error}
\begin{error}\label{er:inadequateADMVisser}
\citet[Section~VII.C]{2021_Visser_genericwarp} give an inadequate falloff for ``zero-vorticity warp drives'' (cf.~\Cref{sec:freevortwarp}) while assuming that the given spatial slice is asymptotically flat, hence the ADM energy would be zero. Indeed, the falloff rate of the second fundamental form is $\CO(r^{-3/2})$ which is not enough and does not satisfy \Cref{def: asympt. flatness} (the same mistake is done by \citet[Section~4.1]{2023_Shoshany_WarpCTC}). Therefore, the ADM energy in this case in not defined at all, unless they mean the ADM mass that is well-defined in another foliation. Even in that case the given falloff rate is insufficient.
\end{error}
\begin{rem}\label{rem:imaginaryADM}
Although, by \Cref{thm: adm mass of natario}, the ADM energy vanishes for \Rwarp{} models, the ADM momentum can, in general, be nonzero which leads to $m_{\text{ADM}}^2 = - |\bm \Padm|^2_{\bm \delta}$, resulting in an imaginary ADM mass. This, in the quantum field theory, corresponds to off-shell particles (virtual particles), in particular the tachyonic ones (cf., e.g., \cite{1967_Feinberg_FTL}, \cite[Section~8]{2002_Fayngold_FTL}, \cite[Remarks~2.3 and 4.5]{Gourgoulhon_2013_SRT}, and references therein). However, instead of particles, in this case we are dealing with an isolated system, whose interpretation is not clear to us. It is, however, to some extent reminiscent of the work by \citet{1998_Coule_NoWarp}.
It is even worse if we consider the Alcubierre warp drive, for which we have $\bm p_{\mathrm{ADM}}^{\mathrm{Alc.}} =  \bm 0$, i.e., a spacetime which has no nontrivial global quantities at infinity, still carrying some content.
Therefore, it is not clear at all what the meaning of this configuration is, even if we allow for violation of some energy conditions.
\end{rem}
\begin{rem}
In a sense, \Rwarp{} spacetimes are dual to the Schwarzschild spacetime; indeed, for the Schwarzschild spacetime we have $\bm p_{\mathrm{ADM}} = \left( m, \bm 0 \right)$ where $m$ is the mass parameter, whereas for the \Rwarp{} spacetime we have $\bm p_{\mathrm{ADM}} = \left( 0, \bm \Padm \right)$ when $\bm \Padm \neq \bm 0$.
\end{rem}
We note that there is no direct relation between the violation of the WEC and the vanishing of the ADM mass, in general.
\begin{error}\label{er:ADMviolEC}
    \citet{2024_Fuchs_constant} speculate that the violation of the energy conditions for the Alcubierre metric (or \Rwarp{} models in general) is due to the vanishing of the ADM energy. However, this is not a consequence as there are direct counter examples (see, e.g., \cite{1990_Frauendiener_ShellBlackHole, 2025_Saito_RemovingNakedSingularities}).
\end{error}

In the following important Remark, we elaborate on the relation between the ADM energy, the Eulerian energy density, and quantities in the Newtonian limit.

\begin{rem}\label{rem:ADMmassNewton}
\newcommand{\Newt}{{\scriptscriptstyle{\textbf{N}}}}
The classical Newtonian limit can be derived from a weak-field metric in the Newtonian gauge, i.e., $\bm g = - (1 + 2 \Phi) \diffb t^2 + (1 - 2 \Psi) \bm \gamma$ (assuming no vector and tensor modes). If one assumes that the pressure and the anisotropic stress from the matter content do not source the gravitational field at the Newtonian level, then the two scalar fields $\Psi$ and $\Phi$ coincide and correspond to the Newtonian gravitational potential which is the solution to the Poisson equation $\Delta_{\bm \delta} \Psi = 4 \pi G \rho_\Newt$, where $\rho_\Newt = \lim_{c\rightarrow\infty} E$ is the Newtonian limit of the Eulerian energy density. A direct consequence is that the ADM energy is related to the Newtonian potential as follows (cf., e.g., \cite[Section~11.2]{1984_Wald_GR}, \cite[Section~8.3.3]{2012_Gourgoulhon_formalism}, and \cite[Section~1.1]{2013_Chrusciel_energy}):
\begin{equation}\label{eq:NewtADM}
\begin{aligned}
    E_{\mathrm{ADM}}
    &=
    \frac{1}{4 \pi G} \lim_{r \rightarrow \infty} \int_{{\rm S}_r} 
    \p_i \Psi \,
    \frac{x^i}{r} \diff {\rm S}
\\
    &=
    \frac{1}{4 \pi G}\int_{\R^3} \Delta_{\bm \delta} \Psi \, \diff^3 x\\
    &=
    \int_{\R^3} \rho_\Newt \, \diff^3 x
\,.
\end{aligned}
\end{equation}
This relation between the ADM energy, the Newtonian potential, and the  Newtonian limit of the Eulerian density is however much more subtle for warp-drive spacetimes for several reasons. First, because the \Rwarp{} metric \eqref{eq:Rwarp_metric} is not written in the Newtonian gauge, one cannot simply read off $\Phi$ and $\Psi$ from that metric. Second, the fact that $\int_{\mathbb{R}^3} E {\rm d}^3 x \not=0$ but $E_{\mathrm{ADM}} =0$ in general for \Rwarp{} shows that the Newtonian limit of the former is not equal to the latter. This highly suggests that, either ``$\Phi = \Psi$'' does not hold anymore in the limit,\footnote{Having $\Phi \neq \Psi$ in the Newtonian limit would imply that the pressure and the anisotropic stress of the matter content have a nonzero contribution to the Newtonian gravitational potential. For instance, this can be the case for radiative fluids.} or that the Newtonian potential is not sourced by the Newtonian limit of $E$. These two points make the identification and interpretation of the Newtonian gravitational potential, and more generally the Newtonian limit, for warp drives difficult, and a proper investigation of the Newtonian limit of such spacetimes is therefore required. To this end, the use of the covariant ``geometric'' limit involving Galilean structures seems the most appropriate formalism (see, e.g., \cite{1976_Kunzle, 2020_Hansen_et_al}). In an interesting work, using a similar formalism known as the  \emph{Ehlers frame theory} \cite{1981_Ehlers_Newtonsch, *2019_Ehlers_RepNewtLimit} (see also \cite{2019_Buchert_EditorialNoteNewtonian}), \citet{Natario_2004_newtonian_warp} provides a Newtonian potential to what he calls \emph{Newtonian spacetimes} whose  metric includes that of coordinate vorticity-free \Rwarp{} models. Nevertheless, the proposed potential there needs careful interpretation, and cannot easily be related to $\Psi$ from above.

\end{rem}
Now, we can assess better why the meaning of $\Etot$ is not clear, and why it is not a good measure of energy for such spacetimes:
\begin{enumerate}
\item  Computing $\Etot$ from \eqref{eq:Etot} and using \eqref{eq:Euler_energy_density_Rwarp}, we find (cf.~\cite[Equation~(7.17)]{2021_Visser_genericwarp})
\begin{equation}
    \Etot
    =
    - \frac{1}{8 \pi G}  
    \int_{\R^3} \Omega^2 \, \diff^3 x
    \leq
    0
\,,
\end{equation}
which is different from zero for $\bm \Omega \neq \bm 0$, whereas the ADM energy is always zero because of the asymptotic flatness of the $\{t = \const.\}$-slices in \eqref{eq:Rwarp_metric}. We stress again that if one means that the ADM energy is computed with respect to another foliation, while assuming that the ``standard foliation'' is not asymptotically flat (but the spacetime is, which is necessary to talk about the ADM energy), this should be clearly said, thereby satisfying \Cref{principle}.

\item Contrary to the ADM energy which is time-independent for asymptotically flat spacetimes (cf., e.g., \cite[Section~8.3.5]{2012_Gourgoulhon_formalism}), hence making it a reasonable notion of global energy, $\Etot$ is not, in general; one has to use the conservation equation \eqref{eq:energycons} for $E$ which is equivalent to \eqref{eq:evolIIRwarp}.

\item $\Etot$ depends on the foliation, whereas the ADM energy does not.

\item Finally, as we saw in \Cref{rem:ADMmassNewton}, it is not clear if or how $\Etot$ is related to a Newtonian potential.
\end{enumerate}

The core idea of this section is ignored in countless works in the literature.
\begin{error}\label{err:ADMtotMass}
In many works, $\Etot$ in \eqref{eq:Etot} is taken to be the well-defined definition of the energy for such spacetimes, and in some cases equivalent to the ADM mass, e.g., by \citet[Equation~(67)]{Visser_2004_fundamental}, \citet[Section~~VII.C]{2021_Visser_genericwarp} in particular Equation~(7.14) therein, \citet[Section~2.3]{Visser_2023_adm}, and the first version by \citet{2024_Clough_GW} (cf.~\Cref{rem:ADMmassNewton}). To avoid the exhaustive list, we only underscore this misunderstanding in the recent works (and the references therein) by: \citet{Alcubierre_2017_warp_basics},
\citet{2021_Lentz_breakingbarrier} \citet{2021_Bobrick_physicalwarp}, \citet{2021_FellHeisenberg_positive}, \citet{2021_Visser_genericwarp},  \citet{Visser_2023_adm}, \citet{2024_Clough_GW}. 
\end{error}

One possible source of this confusion in the literature would be that the local definition of the rest mass $M_\CD$ introduced in \eqref{eq:restmass} is taken to be $\Etot$ in \eqref{eq:Etot}. But, these are two different concepts and may not be related to each other directly; only in special cases the former merely constitutes a part of the latter (see, e.g., \cite[Exercise~3.23]{Baumgarte_Shapiro_2010_numerical}).

Nonetheless, if $\Etot$ in \eqref{eq:Etot} is not a meaningful one for the total mass (energy) of the warp bubble in R-Warp models, then one can ask what would constitute an appropriate (quasi-local) definition in this context when $\Eadm = 0$?
\citet{2024_Clough_GW} give some insight into the meaning of the naive definition \eqref{eq:Etot}, which clearly aligns with what is mentioned above as they report that the ``final mass of the spacetime volume [i.e., the naive definition of the total mass] is more positive (having started at zero).'' However, a rigorous investigation of a proper definition of the energy of \Rwarp{} models is beyond the scope of the present paper.

To summarize, we conclude with two important statements: $(i)$ the naive definition of total mass (energy) \eqref{eq:Etot} has no obvious meaning for \Rwarp{} models, $(ii)$ $\Etot \neq \Eadm = 0$ for \Rwarp{} models, in general.

\subsection{Symmetries and Killing vector fields}

Sometimes there are some misconceptions about the symmetry of the warp-drive spacetimes, e.g., that they might be spherically symmetric (which was pointed out by \citet{2021_Visser_genericwarp}). In this section, we gather other misunderstandings appearing in the warp-drive literature.

\begin{error}\label{er:sph_symm}
\citet{2021_Bobrick_physicalwarp} claim that both the Alcubierre and \natario{} metrics are spherically symmetric. This claim was correctly criticized by \citet[Appendix~A]{2021_Visser_genericwarp} to which  we refer the reader for further details. Then, \citet{2021_Bobrick_physicalwarp} state that superluminal warp drives, including the \alcub{} or \natario{} ones, cannot be spherically symmetric. This is again a misunderstanding since those metrics are not spherically symmetric anyway. This wrong statement is repeated by \citet[Section~4.4]{2024_Helmerich_Bobrick_WarpFactory}, despite being rightly refuted in  \cite[Appendix~A]{2021_Visser_genericwarp}.
\end{error}
\begin{error}\label{err:timelikeKilling}
\citet{2021_Bobrick_physicalwarp} use ``a global Killing vector field'' to classify their warp proposals. Moreover, their primary focus is put on the existence of a global timelike Killing vector field. However, this consideration is pron to some problems:
\begin{enumerate}
    \item It is by no means clear what this Killing vector is, how it manifests itself in such spacetimes in general, and why they are needed at all.
    \item As criticized correctly by \citet[Appendix~A]{2021_Visser_genericwarp} (see also the relevant critique in \cite[Appendix~C]{2021_Visser_genericwarp}), if a proposed warp model admits a global timelike Killing vector, then the corresponding warp bubble is at rest with respect to a static observer at infinity; in other words there is no motion, let alone a superluminal one, i.e., the ultimate goal of a warp drive (this misconception is repeated by \citet{Abellan_2023_spherical}). The same mistake is repeated by \citet{2021_FellHeisenberg_positive}, where the authors assume that the shift vector potential is time-independent (see \cite[Appendix~C]{2021_Visser_genericwarp} for more critiques).
    \item They claim that the Alcubierre and even the \natario{} metrics do admit a global timelike Killing vector, which is clearly wrong.
    \item As an example for a possible spacelike Killing vector field, one can take the shift vector to be a Killing vector for the Euclidean metric, which results in a  flat spacetime (cf.~\cite[Corollary~1.6]{Natario_2002_warp}).
    \item They claim this Killing vector allows for a $3+1$ decomposition without saying how exactly. All these (and more) demonstrate a misunderstanding and misuse of the notion of Killing vector fields (cf.~\Cref{principle}).
\end{enumerate}
\end{error}
\begin{error}\label{er:coordtrafo}
As pointed out correctly by \citet[Appendix~B]{2021_Visser_genericwarp}, \citet[Appendices~A.1 \& A.2]{2021_Bobrick_physicalwarp} were confused by a wrong perception of coordinate transformation, which was repeated uncritically by \citet[Section~2]{2021_FellHeisenberg_positive}.
Despite the right critique by \citet{2021_Visser_genericwarp}, \citet[Sections 4.1 \& 4.2]{2024_Helmerich_Bobrick_WarpFactory} made another wrong statement that the \alcub{} and Van Den Broeck metrics can be transformed to time-independent metrics, i.e., admitting global timelike Killing vector field, by transforming to what they call ``constant velocity comoving frame'' via a ``Galilean transformation,'' hence dropping some time-derivatives in the calculation (cf.~\Cref{sec:numerics} below). \citet{2024_Fuchs_constant}  make the same mistake around which they build their central concept. 
\end{error}
%

\subsection{Kinematics}

An in-depth treatment of kinematics of \Rwarp{} models lies beyond the scope of this work.
In this section, we will summarize some mistakes regarding the kinematics in an Error.
\begin{error}\label{er:ErrorsFuchsConstant}
    \citet[Section~1.2]{2024_Fuchs_constant} consider the $3+1$-form of geodesic equations. Apart from being formally unclear as to what, e.g., the operator $\diff / \diff t$ means exactly (cf.~\Cref{principle} and \Cref{err:syntax}), they make several mistakes upon which they build their proposal. First, they consider the Alcubierre metric (cf.~\Cref{sec: Alcubierre}), and claim that the spatial derivative of the shift vector vanishes ``inside the passenger volume region'' (because the metric should be flat, cf.~\Cref{err:syntax}), hence concluding from the geodesic equation that $\diff u_i / \diff t = 0$, where $u_i$ is the covariant component of the unit normal vector field (flow-orthogonal motion). However, both claims are wrong; the derivative of the shift vector does \emph{not} vanish inside the bubble, and although the latter equation holds, the reason is not the one the authors gave, but the fact that $u_i = 0$ always holds in this case (see \Cref{remvelocities}). 
    Second, the authors take $u^i$ and $u_i$ to be a vector and its covector, respectively, on the spatial slices, which is wrong, and might imply $u^i = V^i = \diff x^i / \diff t \equiv 0$ due to the previous mistake. This furthermore shows a confusion regarding the $3$-velocity (coordinate velocity); the coordinate velocity of the physical Eulerian observers for the Alcubierre warp drive is by definition $\bm V = - \bm N$ (or $\diff x^i / \diff t = - \beta^i$ in their notation) and in this case given by \eqref{eq:Vtoui} and not $u_i/u^0$ as written by \citet{2024_Fuchs_constant}, since the latter is identically zero, and has nothing to do with the spatial metric as they claim in and around \cite[Equation~(8)]{2024_Fuchs_constant}.\footnote{This consideration would make sense in the context of a \Twarp{} \cite{2024_Barzegar_Buchert_Letter}, which clearly is not what the authors aimed at.} Hence, the $3+1$ form of the geodesic equation does not say anything more than the very property of the flow-orthogonal motion. Hence, the whole conceptualization of a ``constant velocity warp drive'' by \citet{2024_Fuchs_constant} has no meaning as long as no distinction is made between $u^i$ and $u_i$. 
    In general, it seems that the authors have in mind a Newtonian background where the shift vector describes some motion on it (see also \Cref{err:syntax}), e.g., by writing: ``[...] a warp drive that can enable the transport of different observers can do so using a shift vector inside the passenger volume,'' which lacks clear meaning, especially since the shift vector is a choice of coordinate system.  
    Even if it was clear what \citet{2024_Fuchs_constant} mean by this, their model suffers from other serious problems, see, e.g. \Cref{err:TOV} and \Cref{app:TOV}.
\end{error}
%

\subsection{The Synge G-method and unphysical solutions}
\label{sec: Synge}

In general, there are several ways one can approach GR. In particular, for solving the Einstein field equations there are two different ways: from ``right to left,'' or ``backwards'' (see \cite{2024_Ellis_Garfinkle_Gmethod,2024_Barzegar_Buchert_Letter}).
However, \citet[Chapter~IV, \S 6]{1971_Synge_GR} identifies three distinct ``spirits'' of investigation regarding the Einstein field equations: the \emph{realistic}, who seeks to align the equations with their ``already extensive knowledge of the physical universe'' (from ``right to left''); the \emph{agonistic}, who focuses on mastering the inherent mathematical difficulties of the field equations; and the \emph{creative}, whose ``pleasure'' lies in the ``construction of universes, \emph{fantastic} or \emph{realistic}, satisfying the field equations'' (from ``left to right''). This creative approach relies on what is known as the G-method (cf.~\cite{2024_Ellis_Garfinkle_Gmethod,2024_Barzegar_Buchert_Letter}), a process where the metric tensor is given (hence, the left side of the Einstein equation) to mathematically determine the energy-momentum tensor (the right side). While this ``backwards'' methodology allows for the generation of exact solutions through differentiation rather than solving complex partial differential equations, its unrestricted use often leads to ``fantastic'' results that may not correspond to the physical spacetime, potentially violating not only the energy conditions but also some stability conditions (see \cite{2024_Ellis_Garfinkle_Gmethod}).

The creative Synge G-method is the basis of almost all works in the warp-drive literature, with a few exceptions (e.g., \cite{1998_Low_limits, 2024_Clough_GW}) that represent a correct  initial value formulation (IVF) (despite  Errors \ref{err:ADMtotMass} and \ref{er:cloughFallacy}). The ambiguity in taking a clear approach is often misunderstood, overlooked, or ignored, which led to some Errors that we mention in the following.
\begin{rem}
    The approach of the present work is to some extent agonistic; indeed, prior to attach meaning to a solution, even prior to solving Einstein's equations (and often merely trying to respect \Cref{principle}), we managed to identify the misconceptions and misunderstandings embracing the current proposals for warp drive. Although, all three ``spirits'' recognized by \citet[Chapter~IV, \S 6]{1971_Synge_GR} are essential to understand the Einstein field equations better, we clearly advocate more the realistic and agonistic approaches, rather than the creative one.
\end{rem}
\begin{rem}
    To some extent, there is a fundamental difference between two  FTL proposals, i.e., the wormhole and the warp drive, in that the former can exist eternally, whereas the latter is meant to be built. Therefore, an IVF for the latter is essential while the former does not necessarily need it.
\end{rem}
\begin{error}\label{err:contEq}
\citet{2021_Bobrick_physicalwarp} claim that the continuity equations are violated for the Alcubierre warp drive. This statement is manifestly wrong, and is based on a naive Newtonian viewpoint, as pointed out correctly by \citet[Appendix~A]{2021_Visser_genericwarp}, since by the Synge-G method, any given (regular enough) Lorentzian metric will, by the field equations, produce  an energy-momentum tensor that satisfies the continuity equation. This error likely influenced \citet[Section~5]{2021_Lentz_hyperfast}, as it was absent in the original paper \citet{2021_Lentz_breakingbarrier}. Moreover, \citet{2021_Bobrick_physicalwarp} attributes this fact to asymptotically flat spacetimes which is again wrong. This mistake was corrected by \citet[Section~2.2]{2024_Helmerich_Bobrick_WarpFactory} without referring to \cite{2021_Visser_genericwarp} which already had pointed at this mistake. Furthermore, \citet{2024_Helmerich_Bobrick_WarpFactory} reframe this fact as a requirement which is not correct since any diffeomorphism invariant theory enforces the energy-momentum tensor to be divergence free (cf., e.g., \cite[Appendix~E]{1984_Wald_GR}). 
\end{error}
\begin{error}\label{er:cont_eq_beyon_GR}
The misunderstanding and misconception explained in \Cref{err:contEq}, led to another publication by \citet{2022_Carneiro_energyconservation}, which is entirely based on the wrong statement with the hope to solve this pseudo-problem (cf.~\Cref{principle}) by going beyond GR.
\end{error}
\begin{error}\label{er:hyperbolicshitLentz}
    \citet{2021_Lentz_breakingbarrier} proposes a ``hyperbolic shift vector'' which is sourced by a scalar field that satisfies some ``spatial'' hyperbolic equation with a source, i.e., there is no time involved, and the ``propagation'' occurs not in time, but in one spatial direction (see also \Cref{err:hypshiftzeroJ}). Then, the author argues that such a construction can be done utilizing plasma matter. It is by no means clear why plasma should be a candidate as the author does not give support as to what phenomena lead to an equation of motion similar to the hyperbolic one given in \cite{2021_Lentz_breakingbarrier}. To our knowledge, a somewhat similar equation of motion shows up in the context of the supersonic flow (cf., e.g., \cite{1949_Tempest_plasma}) which clearly was not the motivation for \citet{2021_Lentz_breakingbarrier}. In any case, as pointed out in \cite[Appendix~C]{2021_Visser_genericwarp}, the proposal (``hyperbolic-shift-vector-plasma construction'') by \citet{2021_Lentz_breakingbarrier} is not a solution to Einstein's equations, hence violating even the Synge G-method (see also \Cref{sec:bubble_mech}).\footnote{There are even more ambiguities in \cite{2021_Lentz_breakingbarrier}. Looking at \cite[Equations~(26)--(29)]{2021_Lentz_breakingbarrier}, it is not clear if the $4$-velocity $u^\mu$ therein is aligned with the unit normal vector (see \ref{R_tilt}), suggesting that the model might be what we refer to as a \Twarp{} model (see \cite{2024_Barzegar_Buchert_Letter}), which is certainly not what the author intended to.}
\end{error}
\begin{error}\label{err:hypshiftzeroJ}
\citet[Section~3]{2021_Lentz_breakingbarrier} claims that $\bm J = \bm 0$ because of the particular choice of the shift vector (the ``hyperbolic shift vector''). This is, however, related to the form of the shift vector in \eqref{eq:gradientshift}, not the property of the potential function considered there, and it follows directly by the Synge G-method.
\end{error}

Obviously, independent of the approach, statements regarding the given spacetime should be about solutions to Einstein's equation. In \Cref{er:hyperbolicshitLentz} we pointed at one instance where this was not the case. In the following we identify another appearance of a similar error.
\begin{error}\label{err:TOV}
    The construction proposed by \citet{2024_Fuchs_constant} is not a solution to Einstein's equations; indeed, as we showed in \Cref{app:TOV}, the authors do not solve the Tolman--Oppenheimer--Volkoff (TOV) equations resulting from the Einstein equation in the given context. Hence, the algorithm in \citet[Figure~3]{2024_Fuchs_constant} for ``metric creation'' is flawed. This is yet another example of violation of \Cref{principle}.
\end{error}
%

\subsection{Numerics and simulations}
\label{sec:numerics}

There are some numerical works in the literature regarding warp-drive spacetimes. As before, we put an emphasis on the recent works which claim to propose a physical warp drive.

Based on the discussion in \Cref{sec: Synge}, the only clear and sound work in the warp-drive literature, to our knowledge, is the work by \citet{2024_Clough_GW}, in which the authors correctly show that if one starts with the Alcubierre metric as initial data (and with a stiff fluid), one ends up with instability. This can be seen even with a \emph{quasi-Newtonian} approach, since on the spatial slices everything reduce to vector analysis where the collapse is readily seen. And all this happens already for the subluminal case to which they restricted themselves. Nonetheless, this work exhibits some shortcomings, including Errors \ref{err:ADMtotMass}, \ref{er:cloughFallacy}.
\begin{error}\label{er:numericalmagic}
\citet{2024_Helmerich_Bobrick_WarpFactory} (among others) claim (which likely is motivated by \citet[Section~5]{2021_Lentz_breakingbarrier}, \cite[Section~5]{2021_Lentz_hyperfast}, \citet[Section~4]{2021_FellHeisenberg_positive}, and perhaps \citet{2024_Abellan_SphericalWarpbasedBubble}): ``Warp drive research has been mostly derived by analytic studies of various metrics. Such limitations have led to a narrow perspective when it comes to warp drive physicality, especially due to the non-linear nature of the gravitational field equations.'' This implies that the fundamental mistakes, misconceptions, misunderstandings (at some of which \citet{2021_Visser_genericwarp} and the present work pointed), and even more the lack of a clear proposal for a warp drive (cf.~\Cref{sec:bubble_mech} below), can be avoided if one exploits numerical analysis. This view can only lead to another pseudo-problem (cf.~\Cref{principle}), since if a model has formal (and fundamental) problems, one cannot circumvent those by numerics or simulations. They further claim: ``The key challenge in warp research is two-fold: The first is evaluating the complicated Einstein field equations to find stress-energy tensors and the second is in evaluating those stress-energy tensors for physicality,'' which, beside being not the primary issue of the ``warp research,'' shows the unfamiliarity of the authors with the numerical relativity (see \cite{2024_Clough_GW} for a relevant example, cf., also, \Cref{sec: Synge}). Moreover, \citet[Section~6]{2024_Helmerich_Bobrick_WarpFactory}, claim: ``Traditionally, the analysis of physicality in these metrics has been limited to considering the energy density observed by Eulerian observers. However, this approach alone is insufficient for a thorough evaluation of the energy conditions. To overcome these limitations and enable the exploration of more complex solutions, Warp Factory employs a numerical framework,'' which is again not true since it was mostly the ``tradition'' of the authors, as for example this point (that was, among others, claimed by \citet{2021_Bobrick_physicalwarp}) was correctly criticized  by \citet{2021_Visser_genericwarp}   without using numerics.
\end{error}
\begin{error}\label{er:IVF}
    Some works, including the ones by \citet{2024_Abellan_SphericalWarpbasedBubble}, \citet{2024_Helmerich_Bobrick_WarpFactory}, and to a great extent \citet{2024_Fuchs_constant},  do not perform a proper $3+1$-numerics as it should be done, i.e., they suffer from an unclear IVF (see the only correct work by \citet{2024_Clough_GW} in this field of research), which caused various errors and ambiguities (cf.~also \Cref{er:ErrorsFuchsConstant}). 
\end{error}
\begin{rem}\label{rem:numErrors}
In general, it seems that  \citet{2024_Fuchs_constant,2024_Helmerich_Bobrick_WarpFactory,2021_Lentz_breakingbarrier,2021_FellHeisenberg_positive,2024_Abellan_SphericalWarpbasedBubble, 2025_Bolivar_piecewise} rushed to plot some figures without an in-depth analysis, thereby making fundamental mistakes.
\end{rem}

We finish this part by mentioning that the system of equations in \Cref{sec:Rwarp} in terms of  principal invariants may turn out to be useful for further numerical experiments.

\subsection{Energy conditions and their consequences}

As mentioned in \Cref{sec: intro}, the main focus of the majority of the works on warp-drive spacetimes has been put on the validity or violation of energy conditions, without tackling all the other pathologies we mentioned in this work. This is why we bring this section quite in the end.

We try to gather the results on the energy conditions for \Rwarp{} models.
Although, it is somewhat easy to show the violation of the NEC, hence all the other energy conditions (see \Cref{fig:ECs}), for the Alcubierre metric (see, e.g., \cite{2021_Visser_genericwarp}), the corresponding task is not as easy for a more general model. 
For \Rwarp{} models, a proof was finally given by \citet{2021_Visser_genericwarp} applying \Cref{def:WD} and \ref{R_asym} properly.

Moreover, using \eqref{eq:Euler_energy_density_Rwarp} and \eqref{eq:Etot}, \citet[Section~VII.D.1]{2021_Visser_genericwarp} show that \Rwarp{} models with $\bm \Omega \neq \bm 0$ do violate the WEC (see, however, \Cref{er:VisserShoshanyTotalMass} for a mistake which does not change this result). But, while \citet[Section~4.1]{2023_Shoshany_WarpCTC} showed that asymptotically flat warp drives with a generic lapse function, introduced in \Cref{sec: Natario with N}, with $\bm \Omega \neq \bm 0 $ indeed violate the WEC, they demonstrated that one cannot conclude the violation of the WEC when $\bm \Omega = \bm 0$, in general. This, of course, applies for the unit lapse case, i.e., \Rwarp{} models. 

This is an interesting case; indeed, since we know that the NEC is violated by \Cref{thm:NEC} below, the WEC is also violated. But, this example shows that sometimes it is easier to confer the NEC than the WEC, although the latter is weaker than the former.

On the other hand, in many works it is suggested that the superluminality is a reason for the violation of (some) energy conditions. This is however not true, in general, and the violation of energy conditions occurs way before considering the speed of the warp bubble (see, e.g., \cite[Section~4]{Visser_2004_fundamental}).

We give a definition for the pointwise\footnote{Or ``pointilliste,'' contrasted with the averaged energy conditions, or ``impressionist'' one; both introduced by \citet{2017_Curiel_PrimerEC}, borrowing terminology from the modern art movement. The former implies the latter, but not vice versa.} energy conditions in the following and refer the reader to, e.g., \cite[Section~4]{1973_EllisHawking_LSS}, and especially \cite{2017_Curiel_PrimerEC} for an in-depth treatment of this topic (see also \cite{2020_Kontou_EC} for a recent account).

The pointwise (classical) energy conditions are, according to the hierarchy (cf.~\Cref{fig:ECs}):
\begin{enumerate}
    \item The null energy condition
    \begin{equation}\tag{NEC}\label{NEC}
        T_\mnu k^\mu k^\nu \geq 0    
        \,;
        \,\,
        \text{for all null vectors} \,\,  \bm k = k^\mu \bm \p_\mu  
        \,,
    \end{equation}
    with its geometric form $\textbf{Ric}(\bm k, \bm k) \geq 0$, by the Einstein equation.

    \item The weak energy condition
    \begin{equation}\tag{WEC}\label{WEC}
        T_\mnu t^\mu t^\nu \geq 0    
        \,;
        \,\,
        \text{for all timelike vectors} \,\,  \bm t = t^\mu \bm \p_\mu  
        \,,
    \end{equation}
    with its geometric form $\bm{G}(\bm k, \bm k) \geq 0$, where $\bm G$ is the Einstein tensor \eqref{eq:EinsteinT}.

    \item The dominant energy condition
    \begin{equation}\tag{DEC}\label{DEC}
    \begin{aligned}
        T_\mnu t^\mu t^\nu \geq 0    
        \,;
        \,\,
        &\text{for all timelike vectors} \,\,  \bm t
        \,,
        \\
        &\text{such that} \,\, \tensor{T}{^\mu_\nu} t^\nu \,\, \text{is causal},
    \end{aligned}
    \end{equation}
    with its geometric equivalent statement by replacing $\bm T$ above by $\bm G$.\footnote{See, e.g., \cite{2017_Curiel_PrimerEC} for an alternative formulation.}

    \item The strong energy condition
    \begin{equation}\tag{SEC}\label{SEC}
    \begin{aligned}
        (T_\mnu - \tfrac{1}{2} T_{\mnu}) t^\mu t^\nu \geq 0    
        \,;
        \,\,
        &\text{for all timelike} 
        \\
        &\text{ vectors} \,\,  \bm t
        \,,
    \end{aligned}
    \end{equation}
    with its geometric form $\textbf{Ric}(\bm t, \bm t) \geq 0$, by the Einstein equation.
\end{enumerate}
\begin{rem}\label{rem:ECeffective}
On top of the physical and geometric definitions of the energy conditions, there is an \emph{effective} definition which is based on the quantitative relations between energy density $\hat{\epsilon}$ and principal pressures $\hat{p}_{\hat \mu}$ in an orthonormal frame where the energy-momentum tensor takes special forms, most importantly, if it is of Hawking--Ellis type I, meaning the energy-momentum tensor is diagonal with respect to the orthonormal frame (see, e.g., \cite{1973_EllisHawking_LSS,2017_Curiel_PrimerEC,2021_Visser_genericwarp} for more details). In this special case, we have
\begin{enumerate}
    \item the NEC is equivalent to 
    \begin{equation*}
        \hat{\epsilon} + \hat{p}_{\hat \mu} \geq 0 \,; \,\, \forall \, \hat \mu \in \{1,2,3\}
        \,,
    \end{equation*}
    \item the WEC is equivalent to $\hat{\epsilon} \geq 0$ and
    \begin{equation*}
        \hat{\epsilon} + \hat{p}_{\hat \mu} \geq 0 \,; \,\, \forall \, \hat \mu \in \{1,2,3\}
        \,,
    \end{equation*}
    \item the DEC is equivalent to $\hat{\epsilon} \geq 0$ and
    \begin{equation*}
        \hat{\epsilon} \geq | \hat{p}_{\hat \mu} | \,; \,\, \forall \, \hat \mu \in \{1,2,3\}
        \,,
    \end{equation*}
    \item and the SEC is equivalent to $\hat{\epsilon} + \sum_{\hat \mu} \hat{p}_{\hat \mu}  \geq 0$ and
    \begin{equation*}
        \hat{\epsilon} + \hat{p}_{\hat \mu} \geq 0 \,; \,\, \forall \, \hat \mu \in \{1,2,3\}
        \,.
    \end{equation*}
\end{enumerate}
\end{rem}

Before continuing to the next section, we will summarize some general misunderstandings and mistakes in the following.
\begin{error}\label{err:misundEC}
    In many works,\footnote{See also \Cref{foot:VisserCritiqueEC}.} some fundamental misunderstandings about the notion of energy conditions, in general, and the WEC, in particular, can be found (cf.~\Cref{err:EulerianWEC}). For example, in a recent work by \citet{2024_Abellan_SphericalWarpbasedBubble} one reads: ``Energy conditions are expressions that can be used to relate the quantities that appear in the energy-momentum tensor $T_{\mu\nu}$ . As is well known, these expressions are used more as a rough guideline than as a constraint on these quantities,'' which evidently points at one of these misunderstandings (see also \Cref{er:AbellanIneq}), and another example of violation of \Cref{principle}.
\end{error}
\begin{error}\label{er:AbellanIneq}
    \citet{2024_Abellan_SphericalWarpbasedBubble} make a mistake when deriving the equivalent inequalities to the energy conditions. Those inequalities do not hold as they are written. We do not want to analyze them, since the work suffers from a bigger error, i.e., \Cref{err:AbellanECbig}. However, we highlight the inconsistency with the subsequent work by the same authors for the case of vanishing ``heat flow''  \cite{2025_Bolivar_piecewise}.
\end{error}
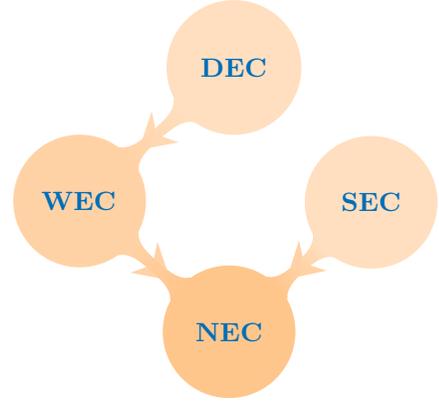
\begin{figure}[!t]
\centering
\begin{tikzpicture}[mindmap, outer sep=0pt, scale = .8]
\path[mindmap, concept color=orange!45,minimum size=1.5cm,  text=NavyBlue, every node/.append style={font=\normalsize}]
  node[concept, text width=5em] (NEC) {\textbf{\ref{NEC}}}
  child[grow=150,concept color=orange!35] {node[concept, minimum size=1.5cm, text width=5em, shift={(-10:1.5)}] (WEC) {\textbf{\ref{WEC}}}
  child[grow=50,concept color=orange!25, text width=5em, shift={(0:.7)}] {node[concept] (DEC) {\textbf{\ref{DEC}}}
  }
  }
  child[grow=30, concept color=orange!25, shift={(10:-2)}] {node[concept, minimum size=1.5cm, text width=5em] (SEC) {\textbf{\ref{SEC}}}};

\begin{pgfonlayer}{background}    
  \draw [concept connection,->,orange!35,shorten >=5pt,-{Stealth[angle=60:1pt 5]}] 
    (SEC) to (NEC);

  \draw [concept connection,->,orange!35,shorten >=5pt,-{Stealth[angle=60:1pt 5]}] 
    (WEC) to (NEC);

  \draw [concept connection,->,orange!30,shorten >=5pt,-{Stealth[angle=60:1pt 5]}] 
    (DEC) to (WEC);

\end{pgfonlayer}
\end{tikzpicture}
\caption{The hierarchy of the pointwise energy conditions with implications depicted by arrows}
\label{fig:ECs}
\end{figure}
\begin{error}\label{err:AbellanECbig}
    \citet{2024_Abellan_SphericalWarpbasedBubble} make another fundamental mistake: for their proposed model, they show that the NEC and the WEC are violated, but the DEC is not (see \Cref{fig:ECs})! This clearly points at a deep flaw not only in their analysis and numerics (\Cref{rem:numErrors}), but also in their conception (cf.~\Cref{err:misundEC}).\footnote{Even setting aside this fundamental issue (and other more important ones mentioned in the present work), the main point of \citet{2024_Abellan_SphericalWarpbasedBubble} remains unclear, as their model suffers from the violation of all energy conditions.}
\end{error}
Sometimes (pointwise) energy conditions are investigated in a wrong way where either they have no meaning, or they are not valid. We recognize at least two instances of such cases, which we formulate in Remarks~\ref{rem:distrEC} and \ref{rem:junctionEC} and bring the examples in Errors~\ref{err:distrECerror} and \ref{err:shellEC}, respectively.
\begin{rem}\label{rem:distrEC}
If the spacetime metric is below $C^2$, i.e., twice continuously differentiable, then the ``classical'' energy conditions lose their meaning as the Einstein tensor $\bm G$, hence the energy-momentum tensor $\bm T = (8 \pi G)^{-1} \bm G$ are not well-defined in this case. In such cases one needs a distributional version of the classical energy conditions (see \cite[Section~4.4]{2022_Steinbauer_lowregular} for an excellent example in the context of singularity theorems for metrics of low regularity). See \Cref{err:distrECerror}, where this fact is neglected.
\end{rem}
\begin{error}\label{err:distrECerror}
The metric of the model proposed by \citet{2021_Lentz_breakingbarrier} (and at least the first example by \citet{2021_FellHeisenberg_positive}) is clearly below $C^2$.  Therefore, even if the proposal by \citet{2021_Lentz_breakingbarrier} was physically sound (which is certainly not the case), by \Cref{rem:distrEC} the classical energy conditions have no meaning in this situation. 
\end{error}
\begin{rem}\label{rem:junctionEC}
    If a thin shell (or other similar configuration) is considered, then the energy conditions should be evaluated for the induced energy-momentum tensor on the junction surface (cf., e.g., \cite{1995_Goldwirth_CommentJunctionEnergy}, and for a recent account of this topic \cite{2023_Maeda_EnergyConditionsNontimelike}). See \Cref{err:shellEC}, where this fact is ignored. 
\end{rem}
\begin{error}\label{err:shellEC}
    \citet{2025_Bolivar_piecewise} consider a thin shell in a static setting,\footnote{It is not clear what the point of this setting is, as it is not the goal of a warp drive, see \Cref{err:timelikeKilling}.} and investigate the (pointwise) energy conditions of the energy-momentum tensor resulting from the Einstein tensor, thereby neglecting the fact that the induced energy-momentum tensor on the junction surface should be analyzed, see \Cref{rem:junctionEC}.
\end{error}

After this general consideration, we start with the WEC since almost all works focus  primarily on it due to its most intuitive character.

\subsubsection{Weak energy condition}
\label{sec:WEC}

Energy conditions are, in general, conditions required to ensure that ``energy should be positive'' \cite{2017_Curiel_PrimerEC}. Clearly, the simplest and the most intuitive energy condition that reflects this requirement is the \ref{WEC}. Therefore, the violation of the WEC is equivalent to the existence of negative energy and mass. Accordingly, the majority of works in the field of warp-drive spacetimes are focused on the validity of the \ref{WEC}, particularly those that aim to resolve its violation.

Before turning to Errors, we gather some general results in terms of theorems. We start by recalling that \Rwarp{} models generically violate either the \ref{WEC} or the \ref{SEC}.
\begin{theorem}[\natario{}, 2002]\label{thm:NatarioWSEC}
    \Rwarp{} models violate either the WEC or the SEC.
\end{theorem}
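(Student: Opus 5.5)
We argue by contradiction: assume an \Rwarp{} model satisfies both the WEC and the SEC everywhere, and show that it must then be Minkowski, contradicting the existence of a nontrivial warp bubble. The two conditions enter only through two scalar inequalities for Eulerian observers, $\bm t = \bm n = \bm u$ (by \ref{R_tilt}).

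\emph{Step 1 (inequalities from the energy conditions).} The WEC applied to $\bm n$ gives $E \geq 0$. The SEC in geometric form applied to $\bm n$ gives $\textbf{Ric}(\bm n,\bm n) = 4\pi G (E + 3P) \geq 0$ (with $\cc = 0$). These are the only consequences we need.

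\emph{Step 2 (SEC and the Raychaudhuri equation).} Since $N=1$, $\Delta_{\meth} N = 0$, and \eqref{eq: evol I} reduces to
\[
(\partial_t - \LieN)\,\Ik \;=\; \CK_{ij}\CK^{ij} + 4\pi G (E+3P) \;\ge\; 0 .
\]
With $\bm N = -\bm V$, and since $\partial_t + \LieV$ is the material derivative along the Eulerian worldlines $\bm u = (1,\bm V)$ (\Cref{rem:Lienoperator}), the mean curvature $\Ik = -\p_i V^i$ is nondecreasing along every Eulerian observer's worldline. We would like to integrate this over a slice. The key identity is that, for $h_{ij} = \delta_{ij}$,
\[
\partial_t \!\int_{\R^3}\! \Ik \,\diff^3x \;=\; \int_{\R^3}\! (\partial_t + \LieV)\,\Ik \,\diff^3x \;+\; \int_{\R^3}\! \p_i V^i\, \Ik \,\diff^3x ,
\]
up to boundary terms. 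The left-hand side vanishes because $\int \Ik = -\int \p_i V^i$ is a pure boundary flux, which tends to zero under \ref{R_asym}. Rewriting $\int \p_i V^i\,\Ik = -\int \I^2[\bm\CK]$, we obtain
\[
\int_{\R^3} \big[\, \CK_{ij}\CK^{ij} - \I^2[\bm\CK] + 4\pi G(E+3P) \,\big]\,\diff^3x \;=\; 0 ,
\]
that is, $\int \big[\, 4\pi G(E+3P) - 2\,\IIk \,\big]\,\diff^3x = 0$.

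\emph{Step 3 (WEC and the energy constraint).} By \eqref{eq:Rwarp_Ham_const}, $2\,\IIk = 16\pi G E$, so Step 2 becomes
\[
\int_{\R^3} \big[\, 4\pi G(3P - 3E) \,\big]\,\diff^3x \;=\; 0 .
\]
Independently, \eqref{eq:Euler_energy_density_Rwarp} gives
\[
\int_{\R^3} E \,\diff^3x \;=\; -\frac{1}{8\pi G}\int_{\R^3} \Omega^2 \,\diff^3x \;\le\; 0 ,
\]
since the divergence term vanishes by \ref{R_asym}. The WEC forces $E \ge 0$ pointwise, hence $E \equiv 0$ and $\bm\Omega = \bm 0$. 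Then $\bm V = \bm\partial\phi$ by \eqref{eq:gradientshift}, and the Step 2 identity gives $\int P = 0$. Combined with the SEC, this yields $3P \ge -E = 0$, so $P \equiv 0$. Substituting $E = P = 0$ into \eqref{eq:evolIRwarp} gives $(\partial_t + \LieV)\,\Ik = \I^2[\bm\CK] \ge 0$.

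\emph{Step 4 (closing, the expected obstacle).} It remains to force $\bm\CK = \bm 0$. Using $\CK_{ij}\CK^{ij} + 4\pi G(E+3P) \ge 0$ pointwise together with the vanishing integral of Step 2, the integrand $\CK_{ij}\CK^{ij} - \I^2[\bm\CK] + 4\pi G(E+3P)$ is not obviously signed. The first thing to try is to note that $E = 0 = \IIk$ makes $\CK_{ij}\CK^{ij} = \I^2[\bm\CK]$, while $\CK_{ij}\CK^{ij} \ge \tfrac13 \I^2[\bm\CK]$ in general. Alternatively, we would argue along Eulerian worldlines that $\Ik$ is monotone and solves a Riccati-type inequality $\dot{\I} \ge \I^2$, which blows up in finite time unless $\Ik \equiv 0$. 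This is the step we expect to be the main obstacle, because it requires the model to exist for all times. If the worldline argument goes through, $\Ik = 0$ and $\IIk = 0$ give $\CK_{ij}\CK^{ij} = 0$, so $\bm\CK = \bm 0$ and the spacetime is flat. That contradicts a nontrivial bubble, so at least one of the WEC or the SEC must fail.
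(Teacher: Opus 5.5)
Steps~1--3 are correct up to the boundary-term issue noted below. The proof is not finished, though: Step~4 is the decisive step and you leave it conditional. Once $E=P=0$, the integral identity of Step~2 carries no information, because its integrand $12\pi G(P-E)$ vanishes identically. Everything therefore rests on the Riccati equation $(\partial_t+\LieV)\Ik=\I^2[\bm\CK]$ along the Eulerian worldlines, and the ingredient you never supply is the completeness of those worldlines.

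Since $N=1$, the Eulerian observers are geodesics parametrized by proper time $t$, namely the integral curves of $\dot{\bm x}=\bm V(t,\bm x)$. Because $\bm V$ is bounded on every finite time slab (\Cref{def:WD}, or Nat\'ario's condition \eqref{cond: Natario}), these curves exist for the whole time range of the model. Since $\Ik$ is smooth on $\mfd$, it stays finite along them.
\begin{itemize}
\item If $t$ ranges over all of $\R$, a value $\Ik(p)>0$ blows up in finite future time and a value $\Ik(p)<0$ blows up in finite past time, so $\Ik\equiv0$.
\item If you only have $t\in(t_0,\infty)$, as in \Cref{rem:asymflatRwarpSigma}, focusing only gives $\Ik\le0$. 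You must then add your own identity $\int_{\R^3}\Ik\,\diff^3x=0$ to conclude $\Ik\equiv0$.
\end{itemize}
In either case $\CK_{ij}\CK^{ij}=\I^2[\bm\CK]-2\IIk=0$, so $\bm\CK=\bm0$. Then $\bm V$ is a Euclidean Killing field on each slice that vanishes on the bubble interior, hence $\bm V\equiv\bm0$, contradicting $\bm V=-\bm V_s\neq\bm0$ far away.

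Completeness is not a formality. Schwarzschild in Painlev\'e--Gullstrand form has $N=1$, flat slices and $\bm T=\bm0$, so both energy conditions hold, yet it is curved. This is consistent with the focusing argument only because its Eulerian observers reach $r=0$ in finite proper time.

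Once completeness is available, your detour through the WEC becomes unnecessary. The SEC alone gives $(\partial_t+\LieV)\Ik\ge\CK_{ij}\CK^{ij}\ge\tfrac13\I^2[\bm\CK]$, and the same focusing argument yields $\Ik\equiv0$. Then $0=(\partial_t+\LieV)\Ik\ge\CK_{ij}\CK^{ij}$ forces $\bm\CK=\bm0$. This is a stronger conclusion than ``WEC or SEC''; the paper gives no argument of its own here and only cites Nat\'ario's Theorem~1.7.

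Separately, the vanishing of the boundary fluxes in Steps~2 and~3 does not follow from \ref{R_asym} as formulated in \Cref{def: asympt. flatness}:
\begin{itemize}
\item $\CK_{ij}=\CO(r^{-2})$ controls only $\p_{(i}V_{j)}$, not $\Omega_{ij}$.
\item It does not make $\int\Ik\,\diff^3x$ convergent.
\item Even $\p_iV_j=\CO(r^{-2})$ would leave an $\CO(1)$ flux of $V^i\p_jV^j-V^j\p_jV^i$ through ${\rm S}_r$.
\end{itemize}
What you need is the warp-bubble property that $\bm V$ coincides with the spatially constant $-\bm V_s$ outside a compact set, or approaches it fast enough, as the Alcubierre profile does.
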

\begin{proof}
    See \citet[Theorem~1.7]{Natario_2002_warp} for the proof.
\end{proof}
\begin{theorem}\label{thm: natario WEC}
The \natario{} zero-expansion warp drive  violates the WEC.   
\end{theorem}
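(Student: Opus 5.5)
The plan is to combine the energy constraint of the \Rwarp{} models with the vanishing mean curvature. By \eqref{eq:Rwarp_Ham_const} we have $8\pi G E = \IIk$, and for the zero-expansion case $\Ik = \CK = 0$ (cf.~\Cref{sec:zeroexpnat}). Hence $\bm\CK = \bm\CA$ and, from the definition of the second invariant in \Cref{sec: notations},
\begin{equation*}
    \IIk = \tfrac12\left(\I^2[\bm\CK] - \CK_{ij}\CK^{ij}\right) = -\tfrac12 \hnorm{\bm\CA}^2 \leq 0 .
\end{equation*}
Since $\bm n = \bm u$ is a timelike (unit) vector by \ref{R_tilt}, the Eulerian energy density satisfies $E = T_{\mu\nu}n^\mu n^\nu \leq 0$ everywhere. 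The WEC requires $T_{\mu\nu}t^\mu t^\nu \geq 0$ for every timelike $\bm t$. It is therefore violated as soon as $E<0$ at a single point, i.e., as soon as $\bm\CA \neq \bm 0$ somewhere.

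The remaining step is to show that $\bm\CK = \bm\CA$ cannot vanish identically on a genuine (nontrivial) \natario{} warp drive. I would argue as follows. Suppose $\bm\CK = \bm 0$ on a slice. Then $\p_{(i}V_{j)} = 0$, so $\bm V$ is a Killing field of the Euclidean metric, i.e., $V^i = a^i + \epsilon^{ijk}b_j x_k$. Boundedness of the shift, condition~\eqref{cond: Natario}, forces $b = 0$, so $\bm V$ is spatially constant. This contradicts \Cref{def:WD}, which requires $\bm V = \bm 0$ near the bubble center and $\bm V = \bm V_s \neq \bm 0$ far away. This is consistent with \cite[Corollary~1.6]{Natario_2002_warp}, which states that a Killing shift yields flat spacetime. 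So, whenever $\bm V_s \neq \bm 0$, there exists a point with $\bm\CA\neq \bm 0$, hence $E<0$ there, and the WEC fails.

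I would also remark that this is consistent with the global picture. The integral of $\IIk$ is a boundary term minus $\int 2\Omega^2$, so by \eqref{eq:Euler_energy_density_Rwarp} the total $\Etot\le 0$. The pointwise argument above is stronger and avoids any falloff subtleties.

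The main obstacle is the nontriviality step: making precise that $\bm\CA\not\equiv\bm 0$. The transition of $\bm V$ from $\bm 0$ to $\bm V_s$ must be nonconstant, and infinitesimal Euclidean isometries that are bounded are constants. I will apply this at each fixed $t$ with $\bm V_s(t)\neq\bm 0$. If the warp drive is switched off ($\bm V_s = \bm 0$), the statement is vacuous, since the spacetime is then Minkowski, so I will assume a moving bubble.
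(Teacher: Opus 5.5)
Your proof is correct. Its core is the paper's: with $\Ik=0$, the energy constraint \eqref{eq:Rwarp_Ham_const} gives $8\pi G E=\IIk=-\tfrac12\hnorm{\bm\CK}^2\le 0$, so $E=T_{\mu\nu}n^\mu n^\nu<0$ wherever $\bm\CK\neq\bm 0$.

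Where you differ is in excluding the exceptional case $\bm\CK\equiv\bm 0$. The paper uses \eqref{eq:evolEqRwarp} and \eqref{eq:evol_K_Lie} to show that $\bm\CK\equiv\bm 0$ forces $E=P=0$, $\Pi_{ij}=0$, $J_i=0$ and a flat spacetime, i.e.\ Minkowski. It then tacitly treats Minkowski as not being a warp drive. You instead argue slice by slice: $\p_{(i}V_{j)}=0$ makes $\bm V$ a Euclidean Killing field, boundedness \eqref{cond: Natario} makes it constant, and the bubble profile of \Cref{def:WD} then forces $\bm V_s(t)=\bm 0$.

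Your route makes explicit a step the paper leaves implicit. Minkowski space written in the form \eqref{eq:Rwarp_metric} can have $\bm V\neq\bm 0$, e.g.\ a spatially constant $V^i(t)$. So ``$\bm\CK\equiv\bm 0$ implies Minkowski'' rules out a genuine warp drive only once one knows that a vanishing $\bm\CK$ is incompatible with the bubble profile. That is exactly your Killing-field lemma (cf.\ \cite[Corollary~1.6]{Natario_2002_warp}). The paper's route, conversely, characterizes the exceptional case intrinsically (vacuum and flat) without using boundedness or the bubble profile.

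Two minor slips. First, far from the center \Cref{def:WD} gives $\bm V=-\bm V_s$, not $\bm V_s$; this is immaterial. Second, a switched-off bubble with $\bm V_s=\bm 0$ need not be Minkowski, since a divergence-free $\bm V$ can be nonzero inside the wall. Your own argument still covers this case: a constant $\bm V$ vanishing at the center is zero, so $\bm\CK\not\equiv\bm 0$, and hence $E<0$ somewhere, whenever $\bm V\not\equiv\bm 0$ on a slice.
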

\begin{proof}
    If $\CK = \Ik = 0$, then $\IIk = - \hnorm{\bm \CK}^2 / 2 \leq 0$, which shows the violation of the WEC by the Hamilton constraint \eqref{eq: E for R1+R3}, unless ${\bm \CK} \equiv \bm 0$ identically, which yields exactly Minkowski spacetime that can be easily concluded from  \eqref{eq:evolEqRwarp} together with \eqref{eq:evol_K_Lie}.
\end{proof}
In fact, \Cref{thm: natario WEC} is a special corollary of the following theorem.
\begin{theorem}\label{thm: zero expansion DEC}
If a non-vacuum spacetime admits a vanishing mean-curvature foliation whose Ricci scalar satisfies $\CR < 2 \cc + \hnorm{\bm \CK}^2$, then  the WEC is violated.
\end{theorem}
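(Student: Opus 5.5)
The plan is to read off the sign of the Eulerian energy density directly from the Hamiltonian constraint \eqref{eq:hamilt_const}, and then use that $\bm n$ is a timelike vector in the definition of the WEC. By assumption the foliation has vanishing mean curvature, $\CK = \Ik = 0$. Inserting this into \eqref{eq:hamilt_const} gives
\begin{equation*}
    16 \pi G E
    =
    \CR - \CK_{ij} \CK^{ij} - 2 \cc
    =
    \CR - \hnorm{\bm \CK}^2 - 2 \cc
\,.
\end{equation*}
Equivalently, in the language of \Cref{sec:PIK}, use $\IIk = -\hnorm{\bm \CK}^2/2$ (valid when $\Ik = 0$) in \eqref{eq: Hamilton constraint II}. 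The hypothesis $\CR < 2\cc + \hnorm{\bm \CK}^2$ then yields $E < 0$ at the points where it holds.

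The second step is to recall $E = T_\mnu n^\mu n^\nu$ from \eqref{eq:eulerianenergy}. Since $\bm n$ is a unit, future-directed timelike vector field, $E<0$ at a point exhibits a timelike vector $\bm t = \bm n$ with $T_\mnu t^\mu t^\nu < 0$. This contradicts the \ref{WEC}, so the WEC is violated. Note that no assumption on the lapse, the shift, or the spatial flatness is needed here. The argument uses only the constraint equation on a maximal slice. This is why \Cref{thm: natario WEC} follows as a corollary: there $\CR = 0$ and $\cc = 0$, so the condition reduces to $\hnorm{\bm \CK}^2 > 0$.

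The only delicate point is the role of the \emph{non-vacuum} hypothesis and of the strict inequality. I would state the theorem as requiring the inequality at at least one point, since a pointwise violation suffices for the WEC. The non-vacuum assumption excludes the degenerate situation in which $\bm T = \bm 0$. In that case the constraint forces $\CR = 2\cc + \hnorm{\bm \CK}^2$, so the strict inequality could not hold anyway. Its role is therefore mainly to make the statement non-vacuous, and I would remark on this rather than expect a genuine obstacle. I expect no substantive difficulty beyond making sure the sign conventions for $\bm \CK$ and $\cc$ match \eqref{eq:hamilt_const}.
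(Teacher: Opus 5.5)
Your proposal is correct and follows the paper's own proof. Setting $\CK = 0$ in the Hamiltonian constraint \eqref{eq: Hamilton constraint II} (equivalently \eqref{eq:hamilt_const}) gives $E<0$ under the stated inequality, and the paper likewise recovers \Cref{thm: natario WEC} by taking $\CR = 0 = \cc$. Your added remarks, that $E = T_{\mu\nu} n^\mu n^\nu$ with $\bm n$ timelike turns $E<0$ into a WEC violation and that the strict inequality already implies non-vacuum, are accurate and simply make explicit what the paper leaves implicit.
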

\begin{proof}
If $\CK = 0$, then  the Hamilton constraint \eqref{eq: Hamilton constraint II} implies that $E < 0$, when $\CR < 2 \cc + \hnorm{\bm \CK}^2$. If $\CR = 0 = \cc$, one recovers \Cref{thm: natario WEC} for $N = 1$.
\end{proof}
\begin{error}
    Sometimes the fact that most of \Rwarp{} models, in particular Alcubierre warp drive, violate the WEC is associated to the superluminal motion. However, as it is clear from \eqref{eq: energy den in Alcubierre}, this fact is independent of the value of $V_s$, and it is rather a property of such spacetimes. For reasons of brevity, we do not list the works that made this mistake.
\end{error}
\begin{error}\label{er:truncationWEC}
\citet[Section~3.1]{2021_Bobrick_physicalwarp} claim that the fact that the WEC is violated in the \alcub{} and \natario{} spacetimes is due to the truncation of the warp field and spherical symmetry of these spacetimes. As mentioned in Errors~\ref{er:truncation} and \ref{er:sph_symm}, both the latter statements and the conclusion are flawed. This idea was repeated uncritically by \citet{2021_FellHeisenberg_positive}, as it is already pointed out correctly by \citet[Appendix~A]{2021_Visser_genericwarp}.
\end{error}
As explained and criticized duly by \citet{2021_Visser_genericwarp}, positivity of the Eulerian energy \eqref{eq:eulerianenergy} does not guarantee that the WEC is satisfied, since by definition \ref{WEC} is satisfied only when the inequality holds for all timelike vectors.
\begin{error}\label{err:EulerianWEC}
In many works,\footnote{\citet[Section~V]{2021_Visser_genericwarp} rightly raised this point concerning several of the works listed here. We mention them again, but we add to them the instances which were not published at the time.\label{foot:VisserCritiqueEC}} 
including those by \citet{2021_Bobrick_physicalwarp,2021_FellHeisenberg_positive,2021_Lentz_breakingbarrier, 2024_Garattini_BHWD}, \citet{Abellan_2023_spherical, Abellan_2023_anisotropic} (see \cite{2021_Visser_genericwarp} for more critiques on works on which \citet{Abellan_2023_spherical, Abellan_2023_anisotropic} are based) take the WEC to be equivalent to the positivity of the Eulerian energy \eqref{eq:eulerianenergy}, without proving that the WEC is in fact satisfied.
\end{error} 
\begin{error}\label{er:VisserShoshanyTotalMass}
Sometimes $\Etot$ from \eqref{eq:Etot} is used to speculate why the WEC is violated. For example,
\citet[Section~VII.C]{2021_Visser_genericwarp} argue that the WEC is violated since $\Etot = 0$ should hold. But, as we mentioned in \Cref{sec:notionofMassEnergy} (see also \Cref{rem:ADMmassNewton}), $\Etot$ is not related to the ADM mass, in general. Moreover, as mentioned in \Cref{er:inadequateADMVisser}, the given falloff for the metric components is not enough to conclude the asymptotic flatness.
This error is therefore a combination of two  mistakes regarding \Cref{def: asympt. flatness} and the notion of energy.    
\end{error}
\begin{error}\label{er:LentzargueVisser}
    \citet[Section~4]{2021_Lentz_hyperfast} attempts to argue that the argument by \citet{2021_Visser_genericwarp} illustrated in \Cref{er:VisserShoshanyTotalMass} (which is indeed incorrect) cannot be applied to the model proposed by \citet{2021_Lentz_breakingbarrier}, because the Eulerian energy is not smooth, hence one cannot apply Stokes' theorem as usual. This argument suffers from many mistakes: $(i)$ the Eulerian energy therein is not continuous, it is at most piecewise continuous or differentiable, $(ii)$ even if the arguing with $\Etot$ given in \eqref{eq:Etot} was in order, the negative term is absent in this model since the coordinate vorticity vanishes, $(iii)$ and finally, even if the Eulerian energy in that sense is positive, this, by \Cref{rem:distrEC}, has nothing to do with the validity of the classical WEC, since it is not well-defined (even assuming its validity, the author does not show it). Lastly, even if one neglects all these problems, and assumes that the derivation of equations are correct, as mentioned in \Cref{err:EulerianWEC}, \citet{2021_Lentz_breakingbarrier} offers no proof for the claim that the WEC is not violated.
\end{error}

\subsubsection{Null energy condition}
\label{sec:NEC}

Although the violation of the NEC is rather trivial for simple \Rwarp{} models, such as the Alcubierre and the zero-expansion \natario{} warp drives (see \cite{Alcubierre_2017_warp_basics, 2021_Visser_genericwarp}), a direct proof for the violation of the NEC for \Rwarp{} models was given by \citet{2021_Visser_genericwarp} only recently.
\begin{theorem}[\citet{2021_Visser_genericwarp}]
\label{thm:NEC}
    \Rwarp{} models violate the NEC.
\end{theorem}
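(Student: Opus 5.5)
The plan is to reduce the pointwise NEC to a single scalar inequality involving Eulerian quantities. I would then integrate that inequality over a slice $\Sigma_t\cong\R^3$ and use the \Rwarp{} structure (flat slices, unit lapse, asymptotic flatness in the sense of \Cref{def: asympt. flatness}) to show that the integral is non-positive, and strictly negative unless the spacetime is Minkowski.

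\emph{Step 1: reduction to $E+P$.} Fix a point and an $\meth$-orthonormal triad $\bm e_{\hat a}$, and consider the six null vectors $\bm k_{\pm\hat a}=\bm n\pm\bm e_{\hat a}$. From \eqref{eq:se_fluid_n},
\[
T(\bm k_{\pm\hat a},\bm k_{\pm\hat a})=E\pm 2J_{\hat a}+S_{\hat a\hat a}.
\]
Summing over all six vectors, the $\bm J$ terms cancel and $\sum_{\hat a}S_{\hat a\hat a}=3P$. Hence the NEC implies $E+P\ge 0$ pointwise on every slice.

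\emph{Step 2: expressing $E+P$ geometrically.} I would combine the Hamiltonian constraint \eqref{eq:Rwarp_Ham_const} with the Raychaudhuri-type equation \eqref{eq:evolIRwarp}, using $\IIk=\tfrac12(\I^2[\bm\CK]-\hnorm{\bm\CK}^2)$ and $\Ik=-\p_iV^i$. This gives
\[
12\pi G(E+P)=(\p_t+\LieV)\Ik+\tfrac12\I^2[\bm\CK]-\tfrac32\hnorm{\bm\CK}^2 .
\]
Next I would rewrite the advective term as $V^i\p_i\Ik=\p_i(V^i\Ik)+\I^2[\bm\CK]$. For the remaining quadratic terms, integrating \eqref{eq:Euler_energy_density_Rwarp} shows that $\int(\I^2[\bm\CK]-\hnorm{\bm\CK}^2)\,\diff^3x=2\int\IIk\,\diff^3x=-2\int\Omega^2\,\diff^3x$ up to boundary fluxes. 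Assembling these pieces, the integrated inequality should take the form
\[
0\le 12\pi G\int_{\R^3}(E+P)\,\diff^3x=\ddt{}\int_{\R^3}\Ik\,\diff^3x-\int_{\R^3}\left(\hnorm{\bm\CK}^2+c\,\Omega^2\right)\diff^3x+(\text{surface terms}),
\]
with a numerical constant $c\ge 0$. The exact bookkeeping of the constant I would fix when writing out the computation.

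\emph{Step 3: boundary and time-derivative terms.} The main obstacle is the boundary data. With $\CK_{ij}=\CO(r^{-2})$, integrating gives $V^i=\CO(r^{-1})$ up to a constant, which \Cref{def:WD} fixes as $-\bm V_s$ at infinity. The flux integrals such as $\oint V^i\Ik\,\diff S_i$ then decay only marginally, and so does the time derivative $\ddt{}\int\Ik=-\ddt{}\oint V^i\diff S_i$.

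I would first try to kill these terms using the falloff $\CL_{\bm n}\CK_{ij}=\CO(r^{-3})$ from \Cref{rem:asymflatRwarpSigma}. Combined with $\bm D\bm V_s=0$, this should control $\p_t V^i$ at infinity. If that is insufficient, the fallback is to integrate the inequality also over a time interval $[t_1,t_2]$. The time-derivative term then becomes a difference of two bounded fluxes, which is dominated by the strictly negative term $-\int_{t_1}^{t_2}\!\int\hnorm{\bm\CK}^2$ for long enough intervals, unless $\bm\CK\equiv\bm0$.

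This yields a contradiction with Step 1 unless $\bm\CK\equiv\bm0$. In that case \eqref{eq:evolEqRwarp} and \eqref{eq:evolShearRwarp} force $E=P=\Pi_{ij}=J_i=0$, so the model is Minkowski, exactly as in the proof of \Cref{thm:shearfree}. Hence every nontrivial \Rwarp{} model violates the NEC somewhere.
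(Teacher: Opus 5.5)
The paper gives no proof of its own here. It quotes the result from \citet{2021_Visser_genericwarp} and adds only \Cref{rem:NECbubble}: the bubble structure of \Cref{def:WD} together with \ref{R_asym} is essential, and the coordinate vorticity-free situation ($E=0=\Omega_{ij}$) is the delicate one.

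Your Step~1 and the local identity at the start of Step~2 are correct, but the assembly is not. With $V^i\p_i\Ik=\p_i(V^i\Ik)+\I^2[\bm\CK]$, the quadratic terms combine into exactly $\tfrac32\big(\I^2[\bm\CK]-\hnorm{\bm\CK}^2\big)=3\IIk$. By \eqref{eq:Euler_energy_density_Rwarp} this gives
\begin{equation*}
12\pi G(E+P)=\p_t\Ik+\p_i\Big[V^i\Ik+\tfrac32\big(V^i\p_jV^j-V^j\p_jV^i\big)\Big]-3\Omega^2 .
\end{equation*}
So $c=3$ and no $-\int\hnorm{\bm\CK}^2$ term survives. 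The averaged NEC is therefore controlled only by the coordinate vorticity. If $\bm\Omega\not\equiv\bm0$ and the boundary terms vanish, you get $\int(E+P)\,\diff^3x=-\frac{1}{4\pi G}\int\Omega^2\,\diff^3x<0$; that is the easy half, the same mechanism as the $\Etot$ argument for the WEC. For the coordinate vorticity-free models of \Cref{sec:freevortwarp}, however, $E+P$ is a pure divergence (note $\p_t\Ik=-\p_i\p_tV^i$). Then $E+P\ge 0$ yields no contradiction, and your long-time fallback has nothing to dominate. This is a missing idea, not a matter of fixing a constant.

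That case needs more than the angular average of the NEC, and/or a genuine use of \Cref{def:WD}. One route, granting vanishing boundary terms:
\begin{enumerate}
\item If $\bm\Omega\equiv\bm0$, then $\bm J=\bm0$ by \eqref{eq:Rwarp_mom_const}, so the NEC says $E\,\delta_{ij}+S_{ij}$ is positive semidefinite.
\item Since $\int(E+P)=0$ and $E+P\ge0$, we get $E+P\equiv 0$, which forces $S_{ij}=-E\delta_{ij}$.
\item Hence $T_{\mu\nu}=-E\,g_{\mu\nu}$, and $\nabla^\mu T_{\mu\nu}=0$ makes $E$ constant, hence zero by decay.
\item The resulting vacuum model is Minkowski by \Cref{thm: positive mass theorem}.
\end{enumerate}

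Your Step~3 does not secure the vanishing of the boundary terms, and the falloff of \Cref{def: asympt. flatness} alone does not either.
\begin{itemize}
\item A harmonic tail $\bm V+\bm V_s=b(t)\,\bm x/r^3$ has $\CK_{ij}=\CO(r^{-3})$ and $\CL_{\bm n}\CK_{ij}=\CO(r^{-3})$. Yet $\tfrac{\diff}{\diff t}\int\Ik\,\diff^3x=-4\pi\dot b$, so your first attempt fails.
\item In the comoving chart, a tail $\bm V+\bm V_s\simeq\bm a/r$ with constant $\bm a$ is compatible with $\CK_{ij}=\CO(r^{-2})$ and $\p_k\CK_{ij}=\CO(r^{-3})$. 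It makes the spatial flux tend to $-\tfrac{4\pi}{3}\,\bm a\cdot\bm V_s$, which can have either sign. Under your time integration this grows linearly, just like the bulk term, so the fallback fails too.
\end{itemize}
Either stronger decay must be imposed explicitly, or the violation must be shown pointwise rather than through a single global integral.
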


\begin{rem}\label{rem:NECbubble}
    \Cref{def:WD}, i.e., the characteristic property of the warp bubble together with \ref{R_asym}, play a fundamental role in proving \Cref{thm:NEC}. Indeed, a generic metric of the form \eqref{eq:Rwarp_metric} does \emph{not}, in general, violate the WEC, let alone the NEC if these two assumptions, especially \Cref{def:WD} are not imposed, i.e., if the metric is not supposed to describe a warp-drive spacetime.
    An example of this kind was observed by \citet[Section~4.3]{2023_Shoshany_WarpCTC} where $E = 0 = \Omega_{ij}$, in which case the violation of the WEC cannot be concluded without further considerations.
\end{rem}
It should be noted that using global techniques and a definition for superluminal travel that requires a path, in a specific sense, to be faster than all neighboring ones, and relies on the assumption of the \emph{generic condition} (cf.~\cite[Section~4.4]{1973_EllisHawking_LSS} for a definition, see also \cite[Section~4.2]{2023_Shoshany_WarpCTC} for the related discussion), \citet{Olum_1998_superluminal} shows the violation of the NEC, although the result is stated as the violation of the WEC. Of course, by the hierarchy of the energy conditions (see, \Cref{fig:ECs}), if the NEC is violated, then all the other energy conditions, in particular, the WEC is violated. Nonetheless, \citet{Olum_1998_superluminal} actually showed the violation of the NEC, and not merely the violation of the WEC, as reported throughout the literature, with only a few exceptions.

However, as shown by \citet[Section~4.2]{2023_Shoshany_WarpCTC}, none of the conditions mentioned above is satisfied for a wide variety of \Rwarp{} models that have a shape function similar to \eqref{eq:AlcShape}, e.g., the Alcubierre and Van Den Broeck warp drives introduced in \Cref{sec: Alcubierre} and \Cref{sec: van den Broeck}, respectively. Thus, one \emph{cannot} apply the proof by \citet{Olum_1998_superluminal} to the \Rwarp{} models.

\begin{rem}\label{rem:SchusterNEC}
    As explained in \Cref{err:Schustermain}, the example given by \citet{Visser_2023_adm} does not provide an example for violation of ``generic'' \natario{} warp-drive spacetimes.
\end{rem}

\subsubsection{Dominant energy condition}

In the notation used in \Cref{sec: energy_momentum_tensor}, the \ref{DEC} is satisfied iff (see, e.g., \cite{Schoen_Yau_1981_positive}, \cite{Witten_1981_positive},  \cite[Section~8.3.4]{2012_Gourgoulhon_formalism}, and \cite[Section~3.7.3]{Straumann_2013_GR})
\begin{equation}\label{eq:DECineq}
    E \geq \hnorm{\bm J}
\quad
\text{and}
\quad
    E \geq 0
\,,
\end{equation}
where $\bm J = J^i \bm \p_i$ (see also \Cref{rem:ECeffective}).

We start with the following trivial corollary to \Cref{thm: positive mass theorem}, which says that \Rwarp{} models, as they are meant in the literature (i.e., being asymptotically flat), violate the DEC.
\begin{cor}\label{cor:asymflatDEC}
    \Rwarp{} models violate the DEC.
\end{cor}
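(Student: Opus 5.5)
We argue by contradiction, taking \Cref{thm: adm mass of natario} and \Cref{thm: positive mass theorem} as the main inputs. Let $(\mfd,\metg)$ be an \Rwarp{} model and suppose it satisfies the DEC, i.e., \eqref{eq:DECineq} holds on every slice $\Sigma_t$. By \Cref{def:Rwarp} and \Cref{rem:asymflatRwarpSigma}, each $\{t=\const.\}$-slice is complete, flat ($h_{ij}=\delta_{ij}$), and asymptotically flat in the sense of \Cref{def: asympt. flatness}. These are exactly the hypotheses of the Schoen--Yau/Witten positive energy theorem, which gives $\Eadm \geq \dnorm{\bm \Padm}$. Its rigidity part states that equality $\Eadm = 0$ forces the development of the initial data to be Minkowski. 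By \Cref{thm: adm mass of natario} we have $\Eadm = 0$, so the equality case applies. This is precisely \Cref{thm: positive mass theorem}, and we conclude that $(\mfd,\metg)$ is Minkowski.

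The next step is to show this contradicts the model being an \Rwarp{} at all. For this we need the rigidity to reach the data itself, not only the abstract spacetime. The rigidity statement says the data $(\Sigma_t,\meth,\bm\CK)$ embed in Minkowski space. Since $\meth$ is flat, the Gauss equation then gives $\II[\bm\CK]=0$, and the Codazzi equation gives $\bm J = \bm 0$. We could also argue more directly. Because $E \geq \hnorm{\bm J} \geq 0$ and $\Eadm=0$, the Witten spinor argument forces $E = 0$ and $\bm J=\bm 0$ everywhere on the slice. Then \eqref{eq:Rwarp_Ham_const} gives $\IIk = 0$, and one must show that $\bm \CK = \bm 0$, so that $\bm V$ is a Euclidean Killing field. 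By the warp-bubble conditions in \Cref{def:WD}, $\bm V$ vanishes in the interior and tends to $\bm V_s$ in the exterior, and a Euclidean Killing field vanishing on an open set vanishes identically. Hence $\bm V_s = \bm 0$ and there is no warp bubble. This contradicts the assumption of a genuine (nontrivial) warp bubble, so the DEC must fail somewhere.

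The main obstacle is this rigidity step: deriving $\bm\CK=\bm 0$, and not merely $\II[\bm\CK]=0$, from the vanishing ADM energy. The cleanest route is to invoke the equality case of the positive energy theorem as stated in the literature (e.g., Schoen--Yau), namely that $\Eadm=0$ under the DEC implies the data are a slice of Minkowski space. Combined with flatness of $\meth$, an embedded flat hypersurface of Minkowski space is totally geodesic if it is asymptotically flat with the decay of \Cref{def: asympt. flatness}. This closes the argument.

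A secondary subtlety is the ADM momentum. If $\bm\Padm\neq\bm 0$, the inequality $\Eadm \geq \dnorm{\bm \Padm}$ fails outright, which gives the violation immediately (cf.~\Cref{rem:imaginaryADM}). Only the case $\bm \Padm = \bm 0$ needs the rigidity argument above.
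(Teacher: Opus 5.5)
Your argument is correct, and its core is the paper's proof. The paper simply takes the contrapositive of \Cref{thm: positive mass theorem} (asymptotic flatness, the DEC and $\Eadm=0$ force Minkowski), and it reads ``not Minkowski'' as part of what a genuine warp drive means. What you add is a derivation of that last point from $\bm V_s\neq\bm 0$. This is worthwhile, because the trivial model $\bm V\equiv\bm 0$ is an \Rwarp{} model that satisfies the DEC.

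The one step you assert without proof is that a complete, flat, asymptotically flat spacelike slice of Minkowski space is totally geodesic. This is true, and a Hartman--Nirenberg-type argument using only the Gauss and Codazzi equations on the slice closes it:
\begin{enumerate}
\item With $\meth=\bm\delta$ and a flat ambient space, Gauss gives $\CK_{ik}\CK_{jl}=\CK_{il}\CK_{jk}$. Hence $\bm\CK=\lambda\,\bm e\otimes\bm e$ with $\hnorm{\bm e}=1$ on the open set where $\bm\CK\neq\bm 0$.
\item Codazzi then gives, for every $\bm T\perp\bm e$, both $\bm D_{\bm T}\bm e=\bm 0$ and $\bm T(\lambda)=\lambda\,\meth(\bm T,\bm D_{\bm e}\bm e)$.
\item Consequently, along any straight line in the plane orthogonal to $\bm e$ one finds $\lambda(t)=\lambda_0/(1-c_0t)$.
\item This expression never vanishes, so the line never leaves the set where $\bm\CK\neq\bm 0$. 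Smoothness on all of $\R^3$ then rules out the blow-up at $t=1/c_0$, so $c_0=0$.
\item Therefore $\lambda$ is constant on an entire plane running off to infinity. This contradicts the $\CO(r^{-2})$ falloff of \Cref{def: asympt. flatness} unless $\lambda=0$.
\end{enumerate}
With this supplied, $\bm V$ is a Euclidean Killing field at each time, and your conclusion $\bm V_s=\bm 0$ follows. Your route thus proves the sharper statement that every \Rwarp{} model with $\bm V_s\neq\bm 0$ violates the DEC.
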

\begin{proof}
    By the positive energy theorem \ref{thm: positive mass theorem}, we know that an \Rwarp{} model that satisfies the DEC (due to asymptotic flatness), is identically Minkowski. But, if one insists that it is not Minkowski, then this means that the DEC should be violated, by de Morgan's laws of negation of conjunction.
\end{proof}

The \ref{DEC}, in the form \eqref{eq:DECineq}, is defined by the requirement that the measured momentum density (or energy-momentum flux) must be causal and oriented in the direction of an observer's proper time, a principle traditionally interpreted as a prohibition against the superluminal propagation of energy (cf., e.g., \cite[Section~9.2]{1984_Wald_GR}, \cite[Section~2.1]{2017_Curiel_PrimerEC}, and \cite[Section~2.1]{2020_Kontou_EC}). 
This condition is foundational for deriving ``energy estimates,'' which is used to prove that a system has a well-posed IVF (or to prove the positive energy theorem; cf.~\Cref{thm: positive mass theorem}).  However, the DEC is not strictly necessary for the IVF results. We explain this in the following Remark.
\begin{rem}\label{rem:Geroch}
In an intriguing work, \citet{2011_Geroch_FTL} specifically proposes a \emph{democracy of causal cones}, arguing that different physical systems (modeled as symmetric hyperbolic equations) may possess their own unique causal cones that exist outside the standard light cone without invalidating the structural components of relativity. That is, the causal cones of some yet-unknown matter fields could be wider than that of the light cone. However, current experience dictates that the light cone wins the election of the democracy of causal cones mentioned above, largely because many common systems (such as electromagnetism and Dirac fields) happen to share that specific ``commonality'' (see \citet[Section~3]{2011_Geroch_FTL}).
Nonetheless, upon this work, \citet{2014_Earman_NoSuperluminal} argues that a well-posed IVF is a more appropriate method for expressing causal limits, particularly because physical systems can maintain a maximum propagation speed equal to that of light even when they violate the DEC (see \citet{2017_Curiel_PrimerEC} for more details).
\end{rem}

\begin{error}\label{er:ruleofthumb}
\citet{2024_Fuchs_constant} talk about a ``rule of thumb'' that 
``the Eulerian momentum flux and pressures should be less than the energy density to satisfy the energy conditions,'' and refer to their previous work \cite{2024_Helmerich_Bobrick_WarpFactory}, implying that they found this ``rule of thumb'' by numerical trail and error in a heuristic way (cf.~\Cref{err:syntax}). However, this is a well-known fact just explained above.
A similar statement was done by \citet{2021_Lentz_breakingbarrier}, where the author talks about the ``momentum conditions,'' and compares them with the WEC. However, ``momentum conditions'' are nothing but the momentum constraint equation \eqref{eq:moment_const}, and have different epistemic status than the WEC or any other energy conditions.
\end{error}

\subsubsection{Strong energy condition}

Unlike other energy conditions, the \ref{SEC} is routinely violated on cosmological scales by hypotheses in the standard model of cosmology  such as inflation and the dark energy currently driving the Universe's accelerated expansion.\footnote{The SEC is described as potentially ``gerrymandered'' by the relativity community because its physical significance is ``obscure at best'' and its primary impact is largely limited to proving singularity theorems, see \cite[Section~3.1]{2017_Curiel_PrimerEC}.} Therefore, the violation of the SEC might not be considered as bad as the violation of other energy conditions. 

In the case of \Rwarp{} models, \cref{thm:NatarioWSEC} offers a straightforward conclusion: if one insists that the WEC should hold, then the SEC is certainly violated for \Rwarp{} models.

\subsection{Miscellaneous}

In this section, we gather some Errors which we could not systematically classify above.
\begin{error}\label{er:Helmholtz}
\citet[Equation~(6)]{2021_FellHeisenberg_positive} apply the Helmholtz decomposition theorem to the shift vector to decompose it into a scalar and a solenoidal field on $\R^3$, without specifying how regular these fields should be; indeed, the Helmholtz decomposition theorem requires the scalar field to be $C^2$, as they have in mind a classical (and not a distributional) formulation (cf.~\Cref{rem:distrEC}). However, later in \cite[Equation~(10)]{2021_FellHeisenberg_positive} they introduce a scalar field that is not even $C^1$, as they explicitly mention this (see \cite[Appendix~C]{2021_Visser_genericwarp} for more critiques), though they introduce a $C^2$ function afterwards ``in the interest of constructing a physically interesting'' model  (on $\R^3 \setminus \{\bm 0\}$). But, this is independent of physicality, it is just self-inconsistent, thereby constituting another instance of a violation of \Cref{principle}.
\end{error}

\begin{error}\label{err:syntax}
Many works suffer from vague and wrong presentation including \citet{2021_Lentz_breakingbarrier}, \citet{2021_Bobrick_physicalwarp}, \citet{2024_Garattini_BHWD} (and other works by the same authors), \citet{2024_Helmerich_Bobrick_WarpFactory}, and \citet{2024_Fuchs_constant} phrase main ideas solely in a heuristic way in words without any formal description,
and make several mistakes which, at a first glance, might look like typographical ones, however, they reveal a deeper level of misunderstanding; for example, 
\begin{enumerate}
    \item taking a naive Newtonian point of view (as pointed out by \citet{2021_Visser_genericwarp}), in particular regarding the definitions of total energy and momentum,
    \item the covariant $4$-velocity (unit normal vector) is taken to be past-directed without any explanation, that is indeed inconsistent in their context,
    \item throughout the work by \citet{2024_Fuchs_constant} a blatant mistake is repeated, hinting at a basic misunderstanding, where a vector (or covector) is set to be equal to a scalar field (e.g., $u_i/u^0 = v_s$, in their notation, which would imply $v_s \equiv 0$ for the Alcubierre warp drive, see \Cref{er:ErrorsFuchsConstant}),
    \item the metric given by \citet[Equation~(16)]{2024_Fuchs_constant} is not spherically symmetric,  
    \item mixing the orthonormal and coordinate bases by \citet[Equation~(10)]{2024_Fuchs_constant},
    \item identifying the flatness with vanishing of spatial derivatives of the metric by \citet{2024_Fuchs_constant}.
\end{enumerate}
\end{error}
\begin{error}\label{er:White}
    In a recent work, \citet{2025_White_CylindricalNacelle} concerned themselves with the ``engineering-oriented design'' aspects of the warp bubble that lead to a different form of warp bubble than the Alcubierre one, thereby neglecting all real pathologies of such model discussed in the present work and elsewhere, as if this was the major problem. This points again (at best) at a Newtonian understanding of the problem, hence making several mistakes and creating some misunderstandings, e.g., $(i)$ giving a wrong expression for the extrinsic curvature of the Alcubierre warp drive written below \cite[Equation~(14)] {2025_White_CylindricalNacelle} (cf.~\eqref{eq:KijAlc}), $(ii)$ discussing the role of the shear tensor in a flat region  as ``a diagnostic of interior flatness,'' $(iii)$ presuming the existence or the ability of producing ``exotic energy,'' $(iv)$ vague formulation, e.g., ``the Lorentz factor between observers at rest in the ADM slicing and the physical flow of spacetime described by the shift field,'' and $(v)$ discussing the role of the ``York time'' and distinguishing it from the  mean curvature (trace of the extrinsic curvature), which might be important in the so-called constant mean curvature (CMC) slicing, that is irrelevant in this context anyway.
\end{error}
\begin{error}\label{er:cloughFallacy}
    Although the work by \citet{2024_Clough_GW} represents perhaps the most advanced numerical contribution (and the only sound numerical GR work) in the field of warp drive, it suffers from a fallacy: the ``mind projection fallacy'' \cite{1990_Jaynes_probabilitylogic}; indeed, the authors claim that the gravitational waves, resulting from the ``containment failure'' of an Alcubierre warp drive used by a ``post-warp civilization,'' could be potentially detected by future detectors targeting higher frequencies. But, if a ``pre-warp civilization'' (i.e., us, and in particular \citet{2024_Clough_GW}) can foresee the collapse of the Alcubierre warp drive, why a post-warp civilization, which presumably would possess a better theory of gravity, could not know this fact, and go beyond to construct it? Therefore, the idea that future gravitational-wave detectors should search for such signals, as a sign for ``extraterrestrial life,'' must not be taken seriously, even if one thinks that the Alcubierre warp drive would be physical despite all fundamental pathologies explained in this work and elsewhere.
\end{error}
%

\subsection{The warp bubble and its mechanism}
\label{sec:bubble_mech}

The warp bubble and its mechanism proposed in the recent claims about physical warp drives by \citet{2021_Lentz_breakingbarrier, 2021_Bobrick_physicalwarp, 2021_FellHeisenberg_positive} and others, are at best verbally formulated, and when they try to model it mathematically, the result sometimes even fails to be a solution to the Einstein equations (see, e.g., Errors~\ref{er:hyperbolicshitLentz} and \ref{err:TOV}).

In this section, we point out some of these misconceptions.
\begin{rem}
    In the majority of works, the whole warp-drive mechanism is associated to the expansion tensor (or the extrinsic curvature) and its rate. However, the zero-expansion \natario{} warp drive could serve as a hint that it is not clear what the mechanism is supposed to be; indeed, since if the expansion could be zero, then one can ask: ``what is driving the warp bubble?''
\end{rem}
\begin{error}\label{er:soliton}
\citet{2021_Lentz_breakingbarrier} and \citet{2021_FellHeisenberg_positive} identify warp drives with ``solitons.'' However, it is not clear what is exactly meant by a soliton; a ``classic'' solitary wave, or a gravitational soliton? In either case, a warp drive \emph{cannot} be, or at least, has \emph{not} been shown to be associated with solitons; indeed, from a classical perspective, they lack several fundamental properties, such as the general ability $(i)$ to maintain their shape if they are characterized by the expansion tensor and its rate, $(ii)$ hence, to be stable, $(iii)$ to have maximal set of conserved quantities, and from the gravitational point of view these solutions have not been generated by an \emph{inverse scattering method} applied to Einstein's equations, regardless of other (yet-to-be-shown) properties (cf., e.g., \cite{2001_Belinski_GravitationalSolitons} for more details on gravitational solitons), unless they are supposed to be static or stationary vacuum solutions, i.e., black hole solutions, in which case they cease to be a warp drive  (cf.~\Cref{err:timelikeKilling}). This provides yet another example of violation of \Cref{principle}.
\end{error}
\begin{error}\label{er:fuchsgeodesic}
    \citet{2024_Fuchs_constant} call a solution to the $3+1$-form of the geodesic equation warp drive, without exactly saying in what way it differs from the motion of a particle observed from a given reference frame, apart from a very heuristically and verbally phrased program (cf.~Errors \ref{er:ErrorsFuchsConstant}  and \ref{err:syntax}). 
\end{error}

Therefore, it is not clear what a warp bubble, hence a warp drive, is supposed to be (\Cref{def:WD} provides merely a general geometric description for a warp bubble) and how it is supposed to work, quite on the contrary to the statement done by \citet{2024_Clough_GW} in their abstract that ``despite originating in science fiction, warp drives have a concrete description in general relativity.''

\section{Discussion and conclusions}
\label{sec:conclusion}

The goal of the present work was to demystify some ambiguities based on fundamental misconceptions, misunderstandings, and mistakes in the field of warp-drive spacetimes, in terms of Errors~\ref{er:alcgh}--\ref{er:fuchsgeodesic}, to show that many pseudo-problems are caused by violation of \Cref{principle}, i.e., they have emerged from an inconsistent and vague formulation of the problem as well as improper application of GR.

These errors occurred especially in recent works where it is claimed that a physical warp drive is feasible. We summarize the most important ones in the following (one needs to add to them the errors reported by \citet{2021_Visser_genericwarp}):
\begin{enumerate}
    \item Two works by \citet{2021_Lentz_breakingbarrier,2021_Lentz_hyperfast}: these works initiated the idea of possibility of constructing physical warp drives that suffer from many errors which we listed in Errors \ref{er:imposs_GH}, \ref{er:LentzElliptic}, \ref{err:ADMtotMass},  \ref{err:contEq},  \ref{er:hyperbolicshitLentz}, \ref{err:hypshiftzeroJ}, \ref{er:numericalmagic},  \ref{err:distrECerror}, \ref{err:EulerianWEC},  \ref{er:LentzargueVisser}, \ref{er:ruleofthumb},  \ref{err:syntax}, \ref{er:soliton}.
    
    \item Series of works by \citet{2021_Bobrick_physicalwarp,2024_Helmerich_Bobrick_WarpFactory}, \citet{2024_Fuchs_constant}: these papers encompass the widest variety of errors which can be easily refuted by applying GR adequately. We listed them in Errors~\ref{er:imposs_GH}, \ref{er:truncation}, \ref{er:ADMviolEC}, \ref{err:ADMtotMass}, \ref{er:sph_symm}, \ref{err:timelikeKilling}, \ref{er:coordtrafo},  \ref{er:ErrorsFuchsConstant},  \ref{err:contEq}, \ref{err:TOV}, \ref{er:numericalmagic}, \ref{er:IVF}, \ref{er:truncationWEC}, \ref{err:EulerianWEC}, \ref{er:ruleofthumb},  \ref{err:syntax},  \ref{er:fuchsgeodesic}.
    
    \item Series of works by \citet{Abellan_2023_anisotropic,2024_Abellan_SphericalWarpbasedBubble,Abellan_2023_spherical}, and \citet{2025_Bolivar_piecewise}: a naive evaluation of the components of the Einstein tensor is performed in these works without addressing the main pathologies of such models while making various errors (see also critiques by \citet{2021_Visser_genericwarp} on the works on which this series of works is based): Errors~\ref{err:timelikeKilling}, \ref{er:numericalmagic}, \ref{er:IVF}, \ref{err:misundEC}, \ref{er:AbellanIneq}, \ref{err:AbellanECbig}, \ref{err:EulerianWEC}.
    
    \item The work by \citet{2021_FellHeisenberg_positive}: this work also contains several errors that we addressed in Errors~\ref{er:truncation}, \ref{err:ADMtotMass}, \ref{err:timelikeKilling}, \ref{er:coordtrafo}, \ref{er:numericalmagic}, \ref{err:distrECerror}, \ref{er:truncationWEC}, \ref{err:EulerianWEC}, \ref{er:Helmholtz}, \ref{er:soliton}.
\end{enumerate}
One has to add to this list works that appeared in the aerospace communities. We only addressed one written by members of one of these communities 
in \Cref{er:White}.

In all these works, the given descriptions are very vague, and mostly verbally, not even formally, prescribed (cf.~\Cref{err:syntax}), especially in works by \citet{2021_Lentz_breakingbarrier, 2021_Bobrick_physicalwarp}, \citet{2021_FellHeisenberg_positive}, \citet{2024_Helmerich_Bobrick_WarpFactory}, and \citet{2024_Fuchs_constant}, \citet{Abellan_2023_anisotropic,Abellan_2023_spherical,2025_Bolivar_piecewise}, which caused an unrealistic hype in both scientific and nonscientific communities despite all Errors mentioned in the present paper and reported elsewhere, mainly by \citet{2021_Visser_genericwarp}, where the authors proved, in particular, \Cref{thm:NEC} showing that \Rwarp{} models are fundamentally unphysical by violating the NEC. This leads us to recognize at least two issues on two different but related levels:
\begin{enumerate}
    \item These works have been embraced by some aerospace communities, in spite of the fact that we do not even know what a warp bubble and its mechanism are supposed to be (cf.~\Cref{sec:bubble_mech}), and the current proposals are in the most optimistic scenario consistent, but empty from any (known) physical content.

    \item As mentioned in \Cref{sec: intro}, we restricted ourselves to works that have been peer-reviewed and found their way into academic journals in the field of GR.

\end{enumerate}
Both these might spark potential research ideas in sociology of science, which certainly goes beyond the scope of the present work, and lies outside our expertise.

Notwithstanding, we do appreciate the curiosity to speculate and investigate warp-drive spacetimes as a  possibility for FTL travels, however with the following philosophy spelled out in, e.g., \cite{2017_book_Lobo,Visser_2004_fundamental}: warp-drive spacetimes are useful and interesting as Gedankenexperiments to understand and probe GR better because of their seemingly simple structure, but sometimes counterintuitive results.

Following this philosophy, to enrich such Gedankenexperiments and our understanding,  in \cite{2024_Barzegar_Buchert_Letter} we recently proposed to drop some of the restrictions listed in \Cref{sec: restrictions}, most importantly \ref{R_tilt}, that is allowing for a tilted warp bubble which in turn allows for nonzero covariant acceleration and vorticity. Accordingly, we proposed that \ref{R_gauge} could be relaxed as well, allowing for general kinematics. On top of this, \ref{R_curv} could also be dropped, i.e., allowing for spatial curvature in the given foliation. 

In this work, we saw that another (reasonable) restriction, i.e., \ref{R_asym}, and the assumption of global hyperbolicity can be lifted:
\begin{enumerate}
    \item Asymptotic flatness (\ref{R_asym}): this puts a strong constraint on warp-drive spacetimes as shown in \Cref{sec:asymptflatADM} (see also, e.g., \cite{1976_Tipler_CausalityAsymFlat}). While this assumption is physically reasonable (see, e.g., \cite[Setion~19.4]{1973_MTW_Gravitation}), an immediate consequence of its lift would be that the violation of the NEC might be avoided since this restriction plays a fundamental role in \Cref{thm:NEC}. An example of a reasonable scenario without \ref{R_asym} would be that of a warp drive in a cosmological context, in particular, with closed spatial hypersurfaces.
    \item Global hyperbolicity: as we showed in \Cref{thm:globhyp} this assumption forbids construction of any \Rwarp{} models. This, in a sense, supports the similar but different result by \citet{1998_Low_limits}. We also argued in \Cref{rem:GHdeter} that the omission of this assumption might be a necessary condition to allow for construction of a warp drive in terms of agency.
\end{enumerate}

We, however, warn that any attempt to construct such Gedankenexperiments should satisfy \Cref{principle}, and should not be taken for granted as fulfilling some physical requirements such as energy conditions, let alone being used for speculative technological applications.

\begin{acknowledgments}
The authors were supported in part by the European Research Council (ERC) under the European Union's Horizon 2020 research and innovation program (grant agreement ERC advanced grant 740021-ARTHUS, PI: T.B.). H.B. was also funded by the Austrian Science Fund (FWF) [Grant DOI: 10.55776/J4803]. For open access purposes, the authors have applied a CC BY public copyright license to any author-accepted manuscript version arising from this submission.
\end{acknowledgments}
%

\begin{appendix}
\section{Conformally flat metrics}
\label{sec: conform}

\numberwithin{equation}{section}

Let $\meth$ be a $3$-dimensional Riemannian metric on a smooth Riemannian manifold $\Sigma_t$ that foliates a Lorentzian manifold $\mathcal{M}$ and assume there exists a smooth function $\psi: \Sigma_t \rightarrow \R$ such that
\begin{eqnarray}
    {\meth} 
    =
    e^{2 \psi} \bm \gamma
\,,
\end{eqnarray}
where $\bm \gamma$ is a flat metric. 
Then, $\meth$ is said to be a conformal flat metric. Note that the function $\psi$ is defined on $\Sigma_t$ for each time $t$, hence a function on $\mathcal{M}$ in general. We gather some useful relations in this appendix (cf., e.g., \cite[Chapter 1, J.]{Besse_1987_Einstein}, \cite[Appendix VI]{2008_Choquet_GR}, \cite[Section 3.1]{Baumgarte_Shapiro_2010_numerical}, \cite[Chapter 7]{2012_Gourgoulhon_formalism}). The Levi-Civita connection coefficients associated to $\meth$ read
\begin{equation}
    \tensor{\Gamma[{\meth}]}{^i_{j k}}
    =
    \tensor{\Gamma[{\bm \gamma}]}{^i_{j k}}
    +
    2 \p_{(j} \psi \, \delta^i_{k)}
    -
    \gamma_{j k} \gamma^{i \ell} \p_\ell \psi 
\,,
\end{equation}
where $\tensor{\Gamma[{\bm \gamma}]}{^k_{i j}}$ are the connection coefficients of the flat metric $\bm \gamma$.
The Ricci tensor associated to $\meth$ takes the following form
\begin{equation}
\begin{aligned}
    \CR_{i j}
    &=
    -
    \left(
        \p_i \p_j \psi 
        -
        \tensor{\Gamma[{\bm \gamma}]}{^k_{i j}} \p_k \psi 
        -
        \p_i \psi \p_j \psi
    \right)
\\
&\quad
    -
    \left(
        \Delta_{{\bm \gamma}} \psi
        +
        \left| \bm \p \psi \right|^2_{\bm \gamma}
    \right)
    \gamma_{i j}
\,,
\end{aligned}
\end{equation}
where $\Delta_{\bm \gamma}$ is the Laplacian with respect to $\bm \gamma$ and  $\left| \bm \p \psi \right|^2_{\bm \gamma} = \dnorm{\bm \p \psi}^2 = \delta^{i j} \p_i \psi \p_j \psi$. Therefore, the Ricci scalar reads
\begin{equation}\label{eq: con_Ricci_sc}
    \CR
    =
    -2 e^{-2 \psi}
    \left(
        2 \Delta_{{\bm \gamma}} \psi
        +
        \left| \bm \p \psi \right|^2_{\bm \delta}
    \right)
\,.
\end{equation}
Moreover, the Laplacian $\Delta_{\meth}$ on functions $\Phi$ on $\Sigma_t$ has the following behavior
\begin{equation}\label{eq: conf Laplace}
    \Delta_{\meth} \Phi
    =
    e^{-2 \psi}
    \left(
        \Delta_{{\bm \gamma}} \Phi
        +
        \delta^{ij} \p_i \psi \p_j \Phi
    \right)
\,.
\end{equation}

\section{Solving the TOV equation}
\label{app:TOV}

In this appendix, we elaborate on \Cref{err:TOV}.
\citet[Section 3.1]{2024_Fuchs_constant} propose a process to construct a stable shell matter. We show that their process fails already in the first step as it is not a solution to the Einstein equations (the very same mistake was done by \citet{2021_Bobrick_physicalwarp} as pointed out correctly by \citet[Appendix~A]{2021_Visser_genericwarp}). 

We start with the general static, spherically symmetric metric (cf.~\cite[Section~5.8]{2004_Carroll_GR}, which is used by \citet{2024_Fuchs_constant}, see also \Cref{err:syntax})
\begin{equation}
    \bm g
    =
    - e^{2a(r)} \diffb t^2
    +
    e^{2b(r)} \diffb r^2
    +
    r^2 \bm g_{\mathbb{S}^2}
\,,
\end{equation}
where $\bm g_{\mathbb{S}^2}$ is the round metric on the unit 2-sphere. In the first step of their process, they consider a shell with inner and outer radii $R_1$ and $R_2$, respectively, with a constant density $\rho$ that gives rise to the total mass $M$, i.e.,
\begin{equation*}
    \rho
    =
    \begin{cases}
        0 \,; & 0 \leq r < R_1
        \,,\\
        3 M / [4 \pi (R_2^3 - R_1^3)]
        \,; & R_1 \leq r \leq R_2
        \,,\\
        0 \,; &   r > R_2
        \,.
    \end{cases}
\end{equation*}
Moreover, they assume that the pressure $P$ is positive and isotropic, and vanishes for $r < R_1$ and $r = R_2$. With this setup they continue their speculation and try to solve this problem numerically. However, these pieces of information are enough to conclude that their process is ill-defined; indeed, since this is not a solution to the Einstein equation. The reason is that they analyze the TOV equation without respecting the Einstein equation. To show this, we start with evaluating the time-time component of the Einstein equation which results in $e^{-2b} = 1 - 2 G m(r)/r$ with $m(r) = 4 \pi \int_{R_1}^r \rho(s) s^2 \diff s$ (cf.~\cite[Equation~(5.143)]{2004_Carroll_GR}). The first Darmois--Lanczos--Israel junction condition (cf., e.g., \cite[Section~3.7]{2004_Poisson_toolkit} and \cite{1996_Mansouri_shell}) implies that the induced metric on the shell surface should match that of Schwarzschild at $r = R_2$, hence
\begin{equation}
    e^{2a(R_2)}
    =
    1 - 2 \frac{G M}{R_2}
\,,
\end{equation}
since $M = m(R_2)$ is the Schwarzschild mass, and it should match the Minkowski metric at $r=R_1$, i.e., 
\begin{equation}
    a(R_1) = 0
\,.
\end{equation}
Now, integrating the TOV equation (\cite[Equation~(5.153)]{2004_Carroll_GR})
\begin{equation}
    \frac{\diff P}{\diff r}
    =
    -
    (\rho + P) \frac{\diff a}{\diff r}
\,,
\end{equation}
results in
\begin{equation}
    \ln \left[ \frac{\rho}{\rho + P(r)} \right]
    =
    a(r) - a(R_2)
\,,
\end{equation}
since $\rho$ is constant. Evaluating the above at $r = R_1$ we find
\begin{equation}
    \frac{P(R_1)}{\rho}
    =
    \sqrt{1 - 2 \frac{G M}{R_2}}
    -
    1
    <
    0
\,,
\end{equation}
which not only violates the SEC (and possibly other energy conditions), but also contradicts their assumption of positive pressure in the first step (see \cite[Figure~4]{2024_Fuchs_constant}, where there is a misprint for the initial quantities and the ``smoothed'' ones).
\end{appendix}

\bibliography{Warp_review_bib}

\end{document}